\documentclass[english]{extarticle}
\usepackage[T1]{fontenc}
\usepackage[latin9]{inputenc}
\usepackage{geometry}
\geometry{verbose,tmargin=2.5cm,bmargin=2.5cm,lmargin=2.5cm,rmargin=2.5cm}
\usepackage{float}
\usepackage{amsmath}
\usepackage{amsthm}
\usepackage{amssymb}
\usepackage{thm-restate}

\makeatletter

\floatstyle{ruled}
\newfloat{algorithm}{tbp}{loa}
\providecommand{\algorithmname}{Algorithm}
\floatname{algorithm}{\protect\algorithmname}

\theoremstyle{plain}
\newtheorem{thm}{\protect\theoremname}[section]
  \theoremstyle{definition}
  \newtheorem{defn}[thm]{\protect\definitionname}
  \theoremstyle{plain}
  \newtheorem{fact}[thm]{\protect\factname}
  \theoremstyle{plain}
  \newtheorem{lem}[thm]{\protect\lemmaname}
  \theoremstyle{plain}
  
    \theoremstyle{plain}
  \newtheorem{cor}[thm]{\protect\corollaryname}

\usepackage{algorithm}\usepackage{algorithmic}
\usepackage{color}
\usepackage{enumitem}

\definecolor{marekgreen}{RGB}{0,185,0}

\ifdefined\DEBUG
\definecolor{marekgreen}{RGB}{0,185,0}
\definecolor{orange}{RGB}{255,128,0}


  \def\rem#1{{\marginpar{\raggedright\scriptsize #1}}}
  \newcommand{\micr}[1]{\rem{\textcolor{blue}{$\bullet$ #1}}}
  \newcommand{\marr}[1]{\rem{\textcolor{marekgreen}{$\bullet$ #1}}}

\else

  \newcommand{\marr}[1]{}
  \newcommand{\micr}[1]{}
  
\fi

\makeatother

\usepackage{babel}
  \providecommand{\definitionname}{Definition}
  \providecommand{\factname}{Fact}
  \providecommand{\lemmaname}{Lemma}
  \providecommand{\propositionname}{Proposition}
\providecommand{\theoremname}{Theorem}
  \providecommand{\corollaryname}{Corollary}

\begin{document}
\global\long\def\adj{\mbox{\footnotesize Adj}}

\global\long\def\chr#1{\mathbf{1}_{#1}}

\global\long\def\br#1{\left( #1 \right)}

\global\long\def\brq#1{\left[ #1 \right]}

\global\long\def\brw#1{\left\{  #1\right\}  }

\global\long\def\cut#1{\partial#1 }

\global\long\def\ex#1{\mathbb{E}\left[#1\right]}

\global\long\def\E{\mathbb{E}}

\global\long\def\exls#1#2{\mathbb{E}_{#1}\left[#2\right]}

\global\long\def\set#1{\left\{  #1\right\}  }

\global\long\def\ind#1{\mathbf{1}\left[ #1 \right]}

\global\long\def\st#1{[#1] }

\global\long\def\opstyle#1{\mathbb{#1}}

\global\long\def\ex#1{\mathbb{E}\left[#1\right]}

\global\long\def\xp#1{\mathbb{E}\left[#1\right]}

\global\long\def\exls#1#2{\opstyle{\opstyle E}_{#1}\left[ #2 \right]}

\global\long\def\size#1{\left|#1\right|}

\global\long\def\set#1{\left\{  #1\right\}  }

\global\long\def\adj{\mbox{\footnotesize Adj}}

\global\long\def\st#1{[#1] }

\global\long\def\indi#1{\chi\brq{#1}}

\global\long\def\evalat#1#2{ #1 \Big|_{#2}}

\global\long\def\Df#1#2{\frac{\partial#1}{\partial#2}}

\global\long\def\ex#1{\mathbb{E}\left[#1\right]}

\global\long\def\prls#1#2{\opstyle P_{#1}\left[ #2 \right]}
 \global\long\def\pr#1{\opstyle P \left[ #1 \right]}

\global\long\def\exls#1#2{\opstyle{\opstyle E}_{#1}\left[ #2 \right]}

\global\long\def\excondls#1#2#3{\mathbb{E}_{#1}\brq{\left.#2\right|#3}}

\global\long\def\prcondls#1#2#3{\mathbb{P}_{#1}\brq{\left.#2\right|#3}}

\global\long\def\size#1{\left|#1\right|}

\global\long\def\setst#1#2{\left\{  #1\ \left|\ #2\right.\right\}  }

\global\long\def\setstX#1#2{\left\{  \left.#1\right|#2\right\}  }

\global\long\def\setstcol#1#2{\left\{  #1:#2\right\}  }

\global\long\def\set#1{\left\{  #1\right\}  }

\global\long\def\adj#1{\delta\br{#1}}

\global\long\def\st#1{[#1] }

\global\long\def\indi#1{\chi\brq{#1}}

\global\long\def\evalat#1#2{ #1 \Big|_{#2}}

\global\long\def\Df#1#2{\frac{\partial#1}{\partial#2}}

\global\long\def\hx#1{\hat{x}_{#1}}

\global\long\def\eps{\varepsilon}

\global\long\def\M{{\cal M}}

\global\long\def\pr#1{\mathbb{P}\brq{#1}}

\global\long\def\opstyle#1{\mathbb{#1}}

\global\long\def\ex#1{\mathbb{E}\left[#1\right]}

\global\long\def\xp#1{\mathbb{E}\left[#1\right]}

\global\long\def\prcond#1#2{\opstyle P \left[\left. #1\ \right\vert \ #2 \right]}

\global\long\def\excond#1#2{\opstyle E \left[#1 \left|#2\right. \right]}

\global\long\def\exls#1#2{\opstyle{\opstyle E}_{#1}\left[ #2 \right]}

\global\long\def\prls#1#2{\opstyle P_{#1}\left[ #2 \right]}

\global\long\def\br#1{\left( #1 \right)}

\global\long\def\brq#1{\left[ #1 \right]}

\global\long\def\brw#1{\left\{  #1\right\}  }

\global\long\def\size#1{\left|#1\right|}

\global\long\def\set#1{\left\{  #1\right\}  }

\global\long\def\adj#1{\delta\br{#1}}

\global\long\def\st#1{[#1] }

\global\long\def\rin#1{r^{in}\br{#1}}

\global\long\def\rout#1{r^{out}\br{#1}}

\global\long\def\rini#1#2{r_{#1}^{in}\br{#2}}

\global\long\def\routi#1#2{r_{#1}^{out}\br{#2}}

\global\long\def\Iin{{\cal I}^{in}}

\global\long\def\Iout{{\cal I}^{out}}

\global\long\def\kin{k_{in}}

\global\long\def\kout{k_{out}}

\global\long\def\Min#1{{\cal M}^{in\brq{#1}}}

\global\long\def\Mout#1{{\cal M}^{out\brq{#1}}}

\global\long\def\M{{\cal M}}

\global\long\def\deltaM{\delta^{\M}}

\global\long\def\betin#1#2{\beta_{#1}^{in\brq{#2}}}

\global\long\def\betout#1#2{\beta_{#1}^{out\brq{#2}}}

\global\long\def\Bgen#1#2#3{B_{#3\brq{#2}}\br{#1}}

\global\long\def\Btgen#1#2#3#4{B_{#4\brq{#2}}^{#3}\br{#1}}

\global\long\def\Bin#1#2{\Bgen{#1}{#2}{in}}

\global\long\def\Bout#1#2{\Bgen{#1}{#2}{out}}

\global\long\def\Btin#1#2#3{\Btgen{#1}{#2}{#3}{in}}

\global\long\def\Btout#1#2#3{\Btgen{#1}{#2}{#3}{out}}

\global\long\def\B#1#2{B_{#1}^{#2}}

\global\long\def\R{R\br x}

\global\long\def\indi#1{\chi\brq{#1}}

\global\long\def\chr#1{\mathbf{1}_{#1}}

\global\long\def\pxE{\br{p_{e}x_{e}}_{e\in E}}

\global\long\def\pxEt#1{\br{p_{e}x_{e}}_{e\in E^{#1}}}

\global\long\def\xE{\br{x_{e}}_{e\in E}}

\global\long\def\x#1{x^{#1}}

\global\long\def\xt#1#2{x_{#1}^{#2}}

\global\long\def\z#1{z^{#1}}

\global\long\def\S#1{S^{#1}}

\global\long\def\xEt#1{\br{x_{e}}_{e\in E^{#1}}}

\global\long\def\e{\bar{e}}

\global\long\def\evalat#1#2{ #1 \Big|_{#2}}

\global\long\def\Df#1#2{\frac{\partial#1}{\partial#2}}

\global\long\def\event#1{\mathcal{E}_{#1}^{out}}

\global\long\def\eventsucc#1{\mathcal{E}_{#1}^{in}}

\global\long\def\X#1{\hat{X}_{#1}}

\global\long\def\E{\hat{E}}

\global\long\def\block#1{\Gamma\br{#1}}

\global\long\def\outblock#1{\Gamma^{out}\br{#1}}

\global\long\def\inblock#1{\Gamma^{in}\br{#1}}

\global\long\def\jblock#1{\Gamma^{j}\br{#1}}

\global\long\def\Frac#1#2{#1\left/\br{#2}\right.}

\global\long\def\P#1{{\cal P}\br{#1}}

\global\long\def\I{{\cal I}}

\global\long\def\basephi#1#2#3{\phi\brq{#1,#2}\br{#3}}

\global\long\def\setphi#1#2#3{\phi\brq{#1,#2}\br{#3}}

\global\long\def\support#1{\mbox{support}\br{#1}}

\global\long\def\act#1{act\br{#1}}

\global\long\def\myatop#1#2{\genfrac{}{}{0pt}{}{#1}{#2}} 

\global\long\def\myatop#1#2{\genfrac{}{}{0pt}{}{#1}{#2}} 

\global\long\def\evx#1#2#3{\mathbf{X}_{#2,#3}^{#1}}

\global\long\def\lam{\lambda}

\newcommand{\initOneLiners}{%
    \setlength{\itemsep}{0pt}
    \setlength{\parsep }{0pt}
    \setlength{\topsep }{0pt}
}
\newenvironment{OneLiners}[1][\ensuremath{\bullet}]
    {\begin{list}
        {#1}
        {\initOneLiners}}
    {\end{list}}

\title{Random Order Contention Resolution Schemes \footnote{This is an extended version of a paper whose preliminary version appeared in \emph{Proceedings of 2018 IEEE 59th Annual Symposium on Foundations of Computer Science}}}

\author{Marek Adamczyk\footnote{\texttt{m.adamczyk@mimuw.edu.pl} (The author's work is part of a project TOTAL that has received funding from the European Research Council (ERC) under the European Unions Horizon 2020 research and innovation programme (grant agreement No 677651)} \\ University of Warsaw
\and Micha\l{} W\l odarczyk\footnote{\texttt{m.wlodarczyk@mimuw.edu.pl} (The author has been supported by the National Science Centre of Poland Grant UMO-2016/21/N/ST6/00968.)} \\ University of Warsaw}

\date{}
\maketitle
\thispagestyle{empty}

\begin{abstract}
Contention resolution schemes
have proven to be an incredibly powerful concept which allows to tackle
a broad class of problems. The framework has been initially
designed to handle submodular optimization under various
types of constraints, that is, intersections of
exchange systems (including matroids), knapsacks, and unsplittable flows on trees.
Later on, it turned out that this framework perfectly extends to optimization
under uncertainty, like stochastic probing
and online selection problems, which further can be applied to mechanism
design.

We add to this line of work by showing
how to create contention resolution schemes for intersection of matroids and knapsacks when we work in the random order setting. More precisely,
we do know the whole universe of elements in advance, but they
appear in an order given by a random permutation. Upon arrival we need to irrevocably
decide whether to take an element or not.
We bring a novel technique for analyzing procedures in the random order setting that is based on the martingale theory.
This unified approach makes it easier to combine constraints, and we do not need to rely on the monotonicity of contention resolution schemes.

Our paper fills the gaps,
extends, and creates connections between many previous results and
techniques.
The main application of our framework is a $k+4+\eps$ approximation ratio for the Bayesian multi-parameter unit-demand mechanism design under the constraint of $k$ matroids intersection, which improves upon the previous bounds of $4k-2$ and $e(k+1)$.
Other results include improved approximation ratios for stochastic $k$-set packing and submodular stochastic probing over arbitrary non-negative submodular objective function, whereas previous results required the objective to be monotone.

\end{abstract}
\newpage

\setcounter{page}{1}

\tableofcontents

\newpage
\section{Introduction}
Uncertainty in input data is a common feature of most practical problems and
research in finding good solutions (both experimental and theoretical) for such
problems has a long history. In recent years one technique in particular has turned out to be very effective in tackling such problems, namely the Contention Resolution Schemes (CR schemes).
They have been introduced by Chekuri et al.~\cite{DBLP:conf/stoc/VondrakCZ11} in order to maximize submodular functions under various constraints.
Submodular functions have proven important in modeling various optimization problems
that share the property of diminishing returns.

This framework has been initially designed for problems in deterministic setup, where all information is known at the beginning. However, its randomized approach has turned out to be perfect to tackle problems where the uncertainty was the part of the model, like stochastic probing and mechanism design~\cite{DBLP:conf/ipco/GuptaN13}. 

This fact was elegantly leveraged by Feldman et al.~\cite{DBLP:conf/soda/FeldmanSZ16}, who adapted the framework of CR schemes to an online setting, and resolved a long-standing open question by Chawla et al.~\cite{DBLP:conf/stoc/ChawlaHMS10}, by devising a so called Oblivious Posted Price Mechanism for matroids.
This implied a constant factor approximations for the Bayesian multi-parameter unit-demand mechanism design problem. 

Inspired by this line of work we have asked ourselves a question:
\begin{quote}
\emph{What can contention resolution schemes do, if we shall consider them in the random order model?}
\end{quote}
While trying to answer this question we drew from, extended, bridged some gaps between, and improved some of the results on CR schemes~\cite{DBLP:conf/stoc/VondrakCZ11,DBLP:conf/soda/FeldmanSZ16}, sequential posted price and multi-parameter mechanism design~\cite{DBLP:conf/stoc/ChawlaHMS10,DBLP:conf/ipco/GuptaN13,DBLP:conf/stoc/KleinbergW12,DBLP:conf/soda/FeldmanSZ16}, and stochastic probing ~\cite{DBLP:conf/ipco/GuptaN13,DBLP:journals/mor/AdamczykSW16}.
We describe these results precisely below.

\subsection{Problems overview, known results, and our contributions}

\paragraph*{Contention resolution schemes}

Let us start with an illustrative problem.
Consider a matroid $\M = \br{E,\I}$ and a fractional solution $x$ from its polytope. Suppose we are given a weight vector $w: E\mapsto \mathbb{R}_+$, and we look for an algorithm that returns an independent set $S \in \I$ such that $\sum_{e\in S} w_e \geq  c\cdot \sum_{e\in E} w_e x_e$ for some constant $c < 1$.
The idea is to settle for a randomized algorithm 
and demand that every element is taken into $S$ with probability at least $c\cdot x_e$. Such a property would immediately entail the desired guarantee.

How to design an algorithm returning $S$ such that $\pr{e\in S} \geq c\cdot x_e$? 
Chekuri et al.~\cite{DBLP:conf/stoc/VondrakCZ11} presented a framework
of contention resolution schemes (CR schemes) which address this problem, among other applications. The idea is to first draw a random set $\R$ such that $\pr{e \in \R} = x_e$ for each $e \in E$ independently, and afterwards -- since $\R$ is most likely not an independent set in $\I$ -- to drop some elements from $\R$ to meet the feasibility constraint, that is, to resolve the contention between the elements.

\emph{\underline{Our contribution:}} 
Simply speaking, we show that the above problem can be solved also if we work in a random order model, i.e., when elements of $E$ appear to us according to a uniformly random  permutation, and upon arrival we need to make an irrevocable decision of whether to take an element or not.

In its full generality Chekuri et al.~were dealing not only with matroids but arbitrary intersections of matroids, knapsacks, exchange systems, and unsplittable flow on trees. They were also maximizing not only linear functions, but non-negative submodular functions as well.
We do so as well, restricted to intersections of matroid and knapsack constraints.
For a single matroid and a linear objective, Chekuri et al.~obtained an approximation (the  constant $c$) of $1-\frac{1}{e}$, while we get $\frac{1}{2}$. However, for intersection of $k$ matroids, starting with $k\geq 2$, we obtain a better bound of $\frac{1}{k+1}$, improving upon theirs $\frac{1}{e\cdot k + o(k)}$, even though we work in a more restrictive model.
\begin{restatable}{thm}{crscheme}
\label{thm:crscheme}
There exists a random-order CR scheme for intersection of $k$ matroids
with $c=\frac{1}{k+1}$.
\end{restatable}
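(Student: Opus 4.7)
The plan is to analyze the single-pass greedy procedure that processes elements in the random arrival order and, upon seeing an element $e$ with $e\in\R$, accepts $e$ into the current output set $S$ exactly when $S\cup\{e\}$ is still independent in every one of the $k$ matroids. I want to prove that, for every element $e$,
\[
\pr{e\in S \mid e\in\R} \;\geq\; \tfrac{1}{k+1}.
\]
Setting $k=1$ should recover the known random-order single-matroid ratio of $\tfrac12$, which is a consistency check on the target constant.

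The proof splits into two steps. First, a union-bound decomposition: let $S_e^{(i)}$ denote the set of elements accepted strictly before $e$, viewed inside the ground set of the $i$-th matroid $\M_i$. Conditional on $e\in\R$, the rejection event is contained in $\bigcup_{i=1}^k\{e\in\operatorname{span}_{\M_i}(S_e^{(i)})\}$, so it suffices to establish the per-matroid bound
\[
\pr{e\in\operatorname{span}_{\M_i}(S_e^{(i)}) \mid e\in\R} \;\leq\; \tfrac{1}{k+1}
\]
for each $i=1,\dots,k$. The fact that the threshold drops from the stand-alone value $\tfrac12$ down to $\tfrac{1}{k+1}$ must be paid for by the presence of the other $k-1$ constraints, which reject additional elements and therefore make $S_e^{(i)}$ stochastically smaller than it would be in a stand-alone $\M_i$-scheme.

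Second, the per-matroid bound is where I would deploy the martingale technique foreshadowed in the abstract. For a fixed $i$ and $e$, condition on the realization of $\R$ and treat the uniformly random permutation as the only remaining source of randomness. The matroid-exchange property attaches to each potential blocker of $e$ inside $S_e^{(i)}$ a distinct \emph{partner} in an extension of $\{e\}$ to an $\M_i$-base; summing such charges over the permutation yields a process indexed by arrival time whose increments are controlled by the membership of $x$ in the matroid polytope $P_{\M_i}$. I would set this up as a Doob-type martingale whose optional-stopping value at the moment $e$ arrives equals the sought-after conditional blocking probability, and then average over the uniformly random position of $e$ in the permutation.

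The hard part will be orchestrating the cross-matroid pruning inside a per-matroid analysis. Classical CR-scheme arguments would invoke monotonicity (adding further rejection rules cannot decrease acceptance probabilities), but the abstract explicitly stresses that the authors dispense with any monotonicity assumption; the per-matroid martingale therefore has to absorb the effect of the other $k-1$ constraints directly, rather than by a comparison lemma. Once the martingale is arranged so that the polytope inequality $x\in P_{\M_i}$ plugs in as the drift bound, the numerical step turning this into $\tfrac{1}{k+1}$ should be a routine optional-stopping computation, and a final union bound over $i=1,\dots,k$ delivers the theorem.
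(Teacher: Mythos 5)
Your plan would not go through, and the paper in fact proves the theorem by a rather different route.

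First, the algorithm itself is not the one the paper analyzes. You propose plain greedy: accept $e$ whenever $S\cup\{e\}$ stays independent in all $k$ matroids. The paper instead runs a deliberately \emph{more} conservative rule: it decomposes $x$ into a convex combination of bases, assigns each element a random ``controller'' set $B_{j(e)}$, updates all support sets by exchange-mappings whenever an element is accepted, and rejects $e$ as soon as $e$ gets exchanged out of its own controller set --- even if $S\cup\{e\}$ would still be feasible. This discarding of slack is what gives a clean local ``blocking'' event with computable per-step probability; the paper even remarks explicitly that some rejected elements could still have been taken. Analyzing plain greedy directly is a genuinely harder problem, and the paper does not attempt it.

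Second, your decomposition is applied at the wrong level and the per-matroid target is false. You want $\pr{e\in\operatorname{span}_{\M_i}(S_e^{(i)})\mid e\in\R}\le \tfrac{1}{k+1}$ for each $i$ and then a final union bound. But a stand-alone matroid already rejects with probability about $\tfrac12$, and the extra $k-1$ constraints do \emph{not} in general push this down to $\tfrac{1}{k+1}$: the justification you offer --- that extra constraints make $S^{(i)}_e$ ``stochastically smaller'' --- is exactly the monotonicity/FKG-type comparison that the paper stresses it dispenses with, and for plain greedy it is not even obviously true (rejecting $f$ elsewhere could free up room for a different blocker of $e$ in matroid $i$). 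The paper's union bound lives one level lower: at each step $t$ it bounds the probability of a blocking event in matroid $i$ by $\tfrac{1}{n-t}$ (Lemma~\ref{lem:matroid-blocking-pr}), takes a union bound over matroids to get $\tfrac{k}{n-t}$ per step (Lemma~\ref{lem:deltas-multi}), and only then feeds this into the martingale (Lemma~\ref{lem:cr}): the submartingale is $(1+\lambda)\cdot S_e^t + Y_e^t$ with $\lambda=k$, and Doob's optional stopping at the moment $e$'s fate is revealed gives $\pr{e\text{ accepted}\mid e\in\R}\ge \tfrac{1}{1+\lambda}=\tfrac{1}{k+1}$ directly, without any per-matroid acceptance estimate.

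In short, you have correctly identified the two tools (exchange mappings and a Doob martingale), but the martingale is over the joint blocked/accepted/unknown status of a single element-controller pair, not over per-matroid span events, and there is no final-level union bound over matroids; plugging in $x\in P_{\M_i}$ enters through the per-step blocking estimate via the convex-decomposition weights $\beta_j$ summing to at most $1$, not as a ``drift bound'' on a per-matroid process. Fixing your sketch essentially requires importing the controller mechanism and the characteristic-sequence submartingale wholesale.
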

A possible explanation for this -- for a moment we assume that the Reader is familiar with the previous work --
is that, unlike the previous CR schemes, we do not require the monotonicity
of the scheme. Monotonicity appeared to be an important feature
because it allowed to combine the schemes via the FKG inequality~\cite{Alon:2016:PM:3002498}.
We manage to combine the schemes for matroids, sparse column packings,
and knapsacks without the monotonicity requirement, and we believe
it is an interesting fact on its own.

For submodular objective we also improve the bounds starting with $k\geq 2$.
\begin{restatable}{thm}{submodular}
\label{thm:submodular-main}
Maximization of a non-negative submodular function with respect to $k$ matroid constraints admits a $(k+1+\eps)\cdot e$ approximation algorithm in the random-order model.
\end{restatable}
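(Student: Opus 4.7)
I would follow the standard two-phase paradigm for submodular maximization: produce a fractional solution in the $k$-matroid intersection polytope, then round it using the scheme of Theorem~\ref{thm:crscheme}. The only new ingredient needed is a lemma saying that the random-order CR scheme preserves not only the marginals but also the submodular objective value up to a factor of $\frac{1}{k+1}$, even when $f$ is non-monotone.

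\emph{Fractional step.} I would apply the measured continuous greedy of Feldman--Naor--Schwartz to the down-closed polytope $P=P_{\M_{1}}\cap\dots\cap P_{\M_{k}}$. Linear optimization over $P$ is polynomial, so this yields in time polynomial in $|E|$ and $1/\delta$ a fractional point $x\in P$ with
\[
F(x)\;\geq\;\br{\tfrac{1}{e}-\delta}\cdot f(OPT),
\]
where $F$ is the multilinear extension of $f$ and $\delta$ is a tunable parameter.

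\emph{Rounding step.} Next, I would draw $R=R(x)$ and run the random-order CR scheme of Theorem~\ref{thm:crscheme} to obtain $S\subseteq R$ that is feasible in all $k$ matroids, with $\pr{e\in S}\geq\frac{1}{k+1}\cdot x_{e}$ for every element $e$. The main obstacle is lifting this marginal guarantee to the value guarantee $\ex{f(S)}\geq\frac{1}{k+1}\cdot F(x)$: for non-monotone submodular $f$ the classical argument of Chekuri--Vondr\'ak--Zenklusen requires the CR scheme to be \emph{monotone} in the realization of $R$, and our scheme need not be. I would work around this using the martingale approach highlighted in the abstract: reveal elements along the random permutation, express $f(S)$ as a telescoping sum of submodular marginals $f(S_{t})-f(S_{t-1})$, and show that, conditional on the history, the expected gain from step $t$ dominates $\tfrac{1}{k+1}$ times the corresponding ``fractional gain'' computed against $x$. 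Summing across $t$ would give $\ex{f(S)}\geq\frac{1}{k+1}F(x)$ without going through monotonicity at all.

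\emph{Combining.} Chaining the two bounds yields
\[
\ex{f(S)}\;\geq\;\tfrac{1}{k+1}\cdot F(x)\;\geq\;\tfrac{1/e-\delta}{k+1}\cdot f(OPT),
\]
and picking $\delta=\eps/\bigl((k+1+\eps)\cdot e\bigr)$ produces the claimed $(k+1+\eps)\cdot e$ approximation ratio, with running time polynomial in $|E|$ and $1/\eps$. The hard step is the martingale-based rounding lemma replacing the monotonicity assumption; once that is in hand the rest is a routine composition.
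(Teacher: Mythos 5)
Your high-level strategy matches the paper exactly: run the measured continuous greedy algorithm on the matroid intersection polytope to get $x\in P$ with $F(x)\geq(\tfrac{1}{e}-\delta)f(OPT)$, then argue via a martingale that the random-order scheme recovers a $\tfrac{1}{k+1}$ fraction of $F(x)$, and finally tune $\delta$. You are also right that the genuinely hard step is the rounding lemma, and that a martingale argument replaces the monotonicity requirement of Chekuri--Vondr\'ak--Zenklusen.

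However, there is a concrete gap in your sketch: you run the CR scheme of Theorem~\ref{thm:crscheme} and take its output $S$ \emph{as is}, then try to prove $\ex{f(S)}\geq\tfrac{1}{k+1}F(x)$ by telescoping. For a \emph{non-monotone} $f$ this plan has to fail as stated, because the step-$t$ marginal $f(X^t\cup\{e\})-f(X^t)$ can be negative when $e$ is accepted, so the process whose ``gain'' you want to lower bound need not be a submartingale. The paper resolves this by \emph{postprocessing}: an element that survives contention resolution is only added to the output $X$ if it strictly increases $f$ on the current solution. This is what inserts $\max\br{f(X^t+e)-f(X^t),0}$ in place of a possibly negative marginal and makes the submartingale argument go through (Lemma~\ref{lem:submodular-abstract}). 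Without this discard step the scheme is not a monotone CR scheme, and the CVZ machinery you allude to does not apply; nor can your telescoping argument close. Relatedly, the paper's actual submartingale is of the specific form $\bigl((\lambda+1)\cdot f(X^t)-f(S^t\cup Z^t)\bigr)_{t=0}^n$, where $S^t\cup Z^t$ is the set of sampled elements whose fate has been revealed by step $t$, so that at termination $S^n\cup Z^n=R(x)$ and one obtains $(\lambda+1)\ex{f(X^n)}\geq\ex{f(R(x))}=F(x)$; your description of ``fractional gain computed against $x$'' is too vague to pin this down and, in particular, misses that the comparison set must grow to exactly $R(x)$. The two missing ideas — the accept-only-if-improving filter and the precise companion process $S^t\cup Z^t$ — are the content of the rounding lemma, so this proposal is a correct plan with the central step left undone and partly mis-specified.
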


These results are not absolutely best when compared to more general techniques,
since one can get ratio $(k-1)$ for linear objectives when $k\ge 2$ using iterative
rounding~\cite{Lau:2011:IMC:2018773}, and $(k+2)$ for non-negative submodular
functions via a combinatorial argument~\cite{DBLP:journals/siamdm/LeeMNS10}.
However, to the best of our knowledge, our results yield the best ratio in the random order model.

\paragraph{Mechanism Design}

Consider the following mechanism design problem. There are $n$ agents
and a single seller providing a set of services.
The agent $i$ is interested in buying the $i$-th service and values
its as $v_{i}$, which is drawn independently from a distribution
$D_{i}$.
Such a setting is called single-parameter.
The valuation $v_{i}$ is
private, but the distribution $D_{i}$ is known in advance. The seller can provide
only a subset of services, that belongs to a system ${\cal I}\in2^{\brq n}$,
which is specified by feasibility constraints. A mechanism accepts bids
of agents, decides on subset of agents to serve, and sets individual
prices for the service. A mechanism is called truthful if agents are motivated to bid
their true valuations. Myerson's theory of virtual valuations yields
truthful mechanisms that maximize the expected revenue of a
seller~\cite{Myerson:1981:OAD:2781650.2781656}, although they sometimes might be impractical~\cite{Ausubel06thelovely}. On the other
hand, practical mechanisms are often non-truthful~\cite{Ausubel06thelovely}. The Sequential
Posted Pricing Mechanism (SPM) introduced by Chawla~et~al.~\cite{DBLP:conf/stoc/ChawlaHMS10}
gives a nice trade-off -- it is truthful, simple to implement, and
gives near-optimal revenue. An SPM offers each agent a 'take-it-or-leave-it'
price for a service. After refusal the service shall not be provided,
so it is easy to see that an SPM is indeed a truthful mechanism.

The paragraph above concerns only the single-parameter setup. In the Bayesian multi-parameter
unit-demand mechanism design (BMUMD for short), we have $n$ buyers
and one seller. The seller offers a number of different services indexed
by set ${\cal J}$. The set ${\cal J}$ is partitioned into groups
${\cal J}_{i}$, with the services in ${\cal J}_{i}$ being targeted
by agent $i$. Each agent $i$ is interested in getting any one of
the services in ${\cal J}_{i}$, i.e., agents are unit-demand. Agent
$i$ has value $v_{j}$ for service $j\in{\cal J}_{i}$. Value $v_{j}$
is independent of all other values and is drawn from distribution
$D_{j}$. Once again the seller faces a feasibility constraint specified
by a set system~${\cal I} \subseteq2^{J}$.

Unlike single-parameter setup, this problem is not solvable efficiently
by the well-established Myerson's approach. The paper of Chawla et al.~\cite{DBLP:conf/stoc/ChawlaHMS10}
launched a line of work in obtaining approximate results
for the multi-parameter setup, by suggesting a possible avenue of
a solution via the so-called Oblivious Posted Price mechanisms. One
would have to first embed the multi-parameter problem into a single-parameter
one, and later to ensure that the algorithm would work if the items
are presented in an adversary order.
Kleinberg
and Weinberg~\cite{DBLP:conf/stoc/KleinbergW12} solved the BMUMD problem for matroid environments
with approximation of $4k-2$ for intersection of $k$ matroids (with
2-approximation for a single matroid), but they have not used the Oblivious
Posted Price mechanisms. Feldman et al.~\cite{DBLP:conf/soda/FeldmanSZ16} devised the first Oblivious
Posted Price mechanisms and obtained an $e k + o(k)$
approximation for the intersection of $k$ matroids.

\emph{\underline{Our contribution:}} 
We observe that the Oblivious
Posted Price is an overly demanding notion, and we need to handle
the oblivious order only when looking at the items of a given client,
but there is no need to restrict the order of
clients. In our algorithm we randomly shuffle clients, but
cannot make assumption on the client's choice.
This hybrid approach is what allows us to
obtain improved bounds.
For $k=2$ we match up to $\eps$ the 6-approximation of Kleinberg and Weinberg~\cite{DBLP:conf/stoc/KleinbergW12}, but starting from $k\geq 3$ our ratios are better; for $k=3$ we get $7+\eps$ improving over $9.48$ of Feldman et al.~\cite{DBLP:conf/soda/FeldmanSZ16}.
\begin{restatable}{thm}{bmumd}
\label{thm:bmumd-main}
Bayesian multi-parameter unit-demand mechanism design over $k$ matroid constraints admits a $\br{k+4+\eps}$ approximation for any $\eps > 0$.
\end{restatable}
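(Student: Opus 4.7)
The plan is to combine our random-order contention resolution scheme from Theorem~\ref{thm:crscheme} with the classical reduction of Chawla et al.\ from multi-parameter to single-parameter mechanism design. The starting point is the \emph{copies} relaxation, which upper-bounds the optimal BMUMD revenue by the value of an LP over the $k$ matroid constraints with the unit-demand condition dropped. Myerson's theory applied to each service $j\in\mathcal{J}$ then turns each coordinate into a threshold price $\tau_j$, so that posting $j$ at price $\tau_j$ extracts the corresponding single-parameter revenue whenever the buyer's valuation $v_j\sim D_j$ exceeds $\tau_j$.

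The central idea is what we call a \emph{hybrid} arrival order: the algorithm processes clients in a uniformly random order, while within the services of a given client the presentation order is arbitrary (possibly adversarial). This is strictly weaker than the fully oblivious posted-price regime of Feldman et al., where the random order must cope with all $|\mathcal{J}|$ services simultaneously and the scheme has to be combined via FKG. In our setting, once each client has selected at most one service through a per-client threshold mechanism, the resulting stream of selected services is distributed as a uniformly random order over a subset of $\mathcal{J}$, which is precisely the regime covered by Theorem~\ref{thm:crscheme}.

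The algorithm proceeds in two nested layers. At the \emph{inner} layer, each client $i$, when his turn comes, runs a prophet-inequality-style selection over $\mathcal{J}_i$ using the thresholds $\tau_j$ and an appropriate rescaling of the LP solution; this yields at most one candidate service and preserves the per-service marginal probability up to a small constant factor. At the \emph{outer} layer, the random-order CR scheme of Theorem~\ref{thm:crscheme} filters these candidates against the $k$ matroid constraints, losing the factor $\tfrac{1}{k+1}$. Together with the loss incurred by the copies upper bound and an $\eps$ slack needed to discretize the thresholds and truncate tail events, this composition is designed to deliver the claimed $\br{k+4+\eps}$ ratio; for $k=2$ this reproduces the $6$-approximation of Kleinberg and Weinberg, and for $k\geq 3$ it improves on the $e(k+1)$ bound.

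The main obstacle will be showing that the inner layer preserves the marginal distribution required by Theorem~\ref{thm:crscheme}: each service $j$ must reach the outer CR scheme with probability at least the scaled LP weight $c\cdot x_j$, independently of the behaviour of the other clients. This is precisely where the absence of a monotonicity assumption in our CR scheme pays off, since we cannot appeal to FKG to aggregate per-client guarantees. Instead, I would argue directly that the uniformly random order of clients, together with the independence of valuations across services, decouples the events associated with different clients enough that each service's marginal can be analysed in isolation; once this is established, Theorem~\ref{thm:crscheme} plugs in as a black box and the two sources of loss combine additively to yield the $k+4+\eps$ ratio.
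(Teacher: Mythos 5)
Your proposal has a genuine gap that prevents it from reaching the claimed ratio. If you treat Theorem~\ref{thm:crscheme} as a black box delivering a factor $\frac{1}{k+1}$, and prepend an inner per-client selection layer that preserves each service's LP marginal up to a constant $c$, the composed guarantee is $\frac{c}{k+1}$. A prophet-inequality-style threshold rule gives at best $c=\frac12$, and the union-bound argument you would actually need (since other items in the same client's menu compete) gives $c=\frac14$; in either case you land at $\frac{1}{2(k+1)}$ or $\frac{1}{4(k+1)}$, both strictly worse than $\frac{1}{k+4}$ for every $k\ge 1$. The claim that the two losses ``combine additively'' does not hold under black-box composition --- they multiply.

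The paper instead exploits a non-black-box interaction between the two layers. Because each item is selected by its client with probability at most roughly $\frac14$ of its LP weight, blocking events inside the matroid controller mechanism occur at a \emph{reduced} rate: the characteristic sequences become $\left(\frac14+\eps\right)$-bounded per matroid (Lemma~\ref{lem:matroid-auction-pr}) rather than $1$-bounded. Summing over $k$ matroids and applying Lemma~\ref{lem:cr} gives survival probability $\frac{1}{k\left(\frac14+\eps\right)+1}$, and multiplying by the $\frac14$ lower bound on being picked yields $\frac{1/4}{k/4+1+O(\eps)}=\frac{1}{k+4+O(\eps)}$. Crucially this requires a \emph{two-sided} control on the selection probability: the lower bound $\ge\frac14 x_{c,p}\Pr[v_c\ge p]$ for the final acceptance, and the matching upper bound $\le\left(\frac14+\eps\right)x_{c,p}\Pr[v_c\ge p]$ for the blocking rate. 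A simple threshold mechanism gives you only the lower bound; the paper obtains the upper bound via the local-search ``almost perfect menu'' routine (Lemma~\ref{lem:one-client-perfect}, Section~\ref{sec:single-client}), which your proposal does not address. Finally, your suggestion that the clients' chosen items form ``a uniformly random order over a subset of $\mathcal{J}$'' to which Theorem~\ref{thm:crscheme} applies directly is not justified: the CR scheme's hypothesis is a product distribution $R(x)$ independent of the random arrival order, whereas here the item chosen by a client depends on which items are still unblocked when that client is reached, a coupling that the direct controller/martingale analysis handles but a black-box invocation does not.
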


\paragraph{Non-negative submodular stochastic probing}

We are given a universe $E$, where each element $e\in E$ is \emph{active
}with probability $p_{e}$ independently. The only way to
find out if an element is active is to \emph{probe }it. We call a
probe \emph{successful} if an element turns out to be active. 
We execute an algorithm that probes the elements one-by-one.
If an element is active, the algorithm is forced to add it to the
current solution. In this way, the algorithm gradually constructs
a solution consisting of active elements.

We consider the case in which we are given constraints on both the
set of probed elements and the set of elements included in the solution. Formally,
we are given two downward-closed independence systems: an \emph{outer} system $\br{E,\Iout}$ restricting
the set of elements probed by the algorithm, and an \emph{inner}
system $\br{E,\Iin}$, restricting the set of elements taken by the
algorithm. The goal is to maximize the expected value
$\ex{f\br S}$, where $f$ is a given non-negative submodular function and $S$
is the set of all successfully probed elements.

This problem has been stated by Gupta and Nagarajan~\cite{DBLP:conf/ipco/GuptaN13}
who gave an abstraction for couple of problems like stochastic matching
and sequential-posted price mechanisms in a single-parameter setup.
They obtained an $O(\kin + \kout)$ approximation for linear objectives in an environment with $\kin$ inner matroids and $\kout$
outer matroids (together with results for more general constraints) using the CR-schemes of
Chekuri et al.~\cite{DBLP:conf/stoc/VondrakCZ11}. Later, Adamczyk et al.~\cite{DBLP:journals/mor/AdamczykSW16} showed how
to obtain a $\br{\kin+\kout}$-approximation for linear objectives
and $\frac{e}{e-1}\cdot\br{\kin+\kout+1}$ for monotone submodular objectives.

\emph{\underline{Our contribution:}} We obtain the first results with
respect to arbitrary non-negative submodular objective functions.
\begin{restatable}{thm}{probing}
\label{thm:probing-main}
Non-negative submodular stochastic probing with $k_{in}$ inner matroid constraints and $k_{out}$ outer matroid constraints admits a $(k_{in}+k_{out}+1+\eps) \cdot e$ approximation for any $\eps>0$. 
\end{restatable}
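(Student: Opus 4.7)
The plan is to reduce to non-negative submodular maximization over an intersection of matroids and then apply the random-order machinery developed earlier in the paper. First, I would relax the probing problem as follows: for each $e \in E$ introduce a variable $x_e \in [0,1]$ representing the probability that the algorithm decides to probe $e$. Feasibility of the probed set forces $x$ to lie in each outer matroid polytope $P\br{\Mout{i}}$ for $i \in \st{k_{out}}$, whereas the ``effective'' inclusion vector $\br{p_e x_e}_{e \in E}$ must lie in each inner matroid polytope $P\br{\Min{j}}$ for $j \in \st{k_{in}}$. Substituting $y_e := p_e x_e$, both families of constraints become linear in $y$, so the feasible region is an intersection of $k := k_{in}+k_{out}$ matroid polytopes. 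Moreover, by standard arguments the multilinear extension $F(y) := \ex{f(R(y))}$ upper bounds, at the optimal $y$, the expected value $OPT$ of any adaptive probing strategy.

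Second, I would run the measured continuous greedy on $F$ over this intersection. Since $f$ is only assumed non-negative submodular, this yields in polynomial time a point $y^*$ in the polytope with $F(y^*) \ge \br{1/e - \eps'} \cdot OPT$. I would then plug $y^*$ into the random-order CR scheme of Theorem \ref{thm:crscheme} for the intersection of these $k$ matroids, which delivers $c = 1/(k+1)$. The probing semantics align perfectly: elements arrive in a uniformly random order; upon arrival the scheme first decides whether to probe the element (respecting outer constraints) and, if the probe is successful, whether to retain it (respecting inner constraints). By the CR guarantee, each element lands in the output $S$ with probability at least $c \cdot p_e x^*_e = c \cdot y^*_e$.

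The main obstacle is lifting this element-wise guarantee to $\ex{f(S)}$ for a \emph{non-monotone} submodular $f$. One cannot appeal directly to the FKG-based argument of Chekuri, Vondr\'ak, and Zenklusen, which crucially used the monotonicity of their CR scheme to compare $S$ with an independent sample from $c \cdot y^*$. Instead I would invoke the martingale-based analysis advertised in the introduction: revealing the random permutation position by position yields a martingale whose increments control the joint distribution of $S$ tightly enough to conclude $\ex{f(S)} \ge c \cdot F(y^*)$ without monotonicity. Combining the two losses gives $\ex{f(S)} \ge \br{1 - e\eps'}/\br{e(k+1)} \cdot OPT$, and choosing $\eps'$ small enough in terms of $\eps$ yields the stated ratio of $\br{k_{in}+k_{out}+1+\eps} \cdot e$.
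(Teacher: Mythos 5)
There is a genuine gap in your first step: the claim that $\max_y F(y)$ upper bounds $OPT$ for adaptive probing strategies is \emph{false} for non-monotone $f$. The multilinear extension $F(y)$ corresponds to the \emph{independent} distribution with marginals $y$, but an adaptive probing strategy induces a \emph{correlated} distribution on the realized set $S$, and for non-monotone $f$ this correlation can strictly increase $\ex{f(S)}$. Concretely, take $E=\{a,b\}$, $p_a=p_b=\tfrac12$, no constraints, and the cut-type function $f(\emptyset)=f(\{a,b\})=0$, $f(\{a\})=f(\{b\})=1$. The strategy ``probe $a$; if inactive, probe $b$'' achieves $OPT=\tfrac34$, whereas $F(y_a,y_b)=y_a+y_b-2y_ay_b$ on the feasible box $y_a,y_b\le\tfrac12$ is maximized at $F(\tfrac12,\tfrac12)=\tfrac12<\tfrac34$. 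So your benchmark is invalid, and the conclusion $F(y^*)\ge(1/e-\eps')\cdot OPT$ cannot be obtained by running measured continuous greedy on $F$, since not even an exact maximizer of $F$ would certify $OPT$.

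The paper sidesteps this with two ingredients you are missing. First, it uses the extension $f^{+}(y)=\max\{\sum_A\alpha_A f(A):\sum_A\alpha_A\le1,\ \sum_{A\ni e}\alpha_A\le y_e,\ \alpha\ge0\}$, which ranges over \emph{all} distributions with the given marginals, not just the independent one; Lemma~\ref{lem:math-programming-bound} shows $\max_{x\in\P{\Iin,\Iout}} f^{+}(x\cdot p)\ge OPT$ precisely because the optimal strategy defines one such distribution. Second, since $f^+$ is NP-hard to evaluate and one must still \emph{run} a polynomial-time algorithm over $F$, the paper invokes the strengthened measured-continuous-greedy guarantee of Lemma~\ref{lem:measured-greedy} (due to Justin Ward), which says the algorithm's output $x$ satisfies $F(x)\ge(be^{-b}-\eps)\cdot\max_{z\in\mathcal{P}}f^{+}(z)$ — i.e., $F$ at the \emph{computed} point is compared against $\max f^{+}$, not $\max F$. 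With that lemma in hand, your remaining steps — the $(k_{in}+k_{out})$-bounded controller mechanism and the submartingale argument giving $\ex{f(X)}\ge\frac{1}{\lam+1}F(x)$ (Lemma~\ref{lem:submodular-abstract}) — are exactly what the paper does. You should replace the incorrect claim about $F$ with the $f^{+}$ relaxation and cite the stronger greedy bound; otherwise the proof structure is sound.
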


\paragraph{Stochastic $k$-set packing}

We are given $n$ elements/columns, where each element $e\in\brq n$
has a random profit $v_{e}\in\mathbb{R}_{+}$, and a random $d$-dimensional
size $L_{e}\in\{0,1\}^{d}$. The sizes are independent for different
elements, but $v_{e}$ can be correlated with $L_{e}$, and the coordinates
of $L_{e}$ also might be correlated between each other.
The values of $v_e$ and $L_e$ are revealed after $e$ is probed, but their distributions are known in advance.

Additionally,
for each element $e$ we are given a set $Q_{e}\subseteq[d]$ of size at most
$k$, such that the size vector $L_{e}$ takes positive
values only in these coordinates, i.e., $L_{e}\subseteq Q_{e}$ with
probability 1. We are also given a~capacity vector $b\in\mathbb{Z}_{+}^{d}$
into which elements must be packed,
that is, the solution can consist of at most $b_i$ elements with unit sizes in the $i$-th row.
We say that the outcomes of $L_{e}$
are monotone if for any possible realizations $x,y\in\set{0,1}^{d}$
of $L_{e}$, we have $x\leq y$ or $y\leq x$ coordinate-wise.

A~strategy
probes columns one by one, obeying the packing constraints, and the
goal is to maximize the expected outcome of taken columns. The stochastic $k$-set packing problem was stated by Bansal
et al.~\cite{Bansal:woes}. They have presented a $2k$-approximation algorithm for
it, and a $\br{k+1}$-approximation algorithm with an assumption
that the outcomes of size vectors $L_{e}$ are monotone. Recently
Brubach et al.~\cite{DBLP:conf/soda/BrubachSSX18} improved the approximation ratio to $k+o(k)$ in
the general case.

\emph{\underline{Our contribution:}} We improve upon
the recent bound of Brubach et al.~\cite{DBLP:conf/soda/BrubachSSX18}.
Our algorithm also works in the case where we replace counting constraints on rows with arbitrary matroids.
\begin{restatable}{thm}{packing}
\label{thm:ksetpacking}There exists a $\br{k+1}$ approximation algorithm
for stochastic $k$-set packing over matroid row constraints. 
\end{restatable}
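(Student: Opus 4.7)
The plan is to reduce stochastic $k$-set packing over matroid row constraints to the random-order contention resolution framework of Theorem~\ref{thm:crscheme}. First, I would set up a linear programming relaxation of the usual type, whose value upper bounds $\ex{\mathrm{OPT}}$. Variables $y_e\in\brq{0,1}$ represent the probability that a benchmark adaptive policy probes element $e$; the objective reads $\sum_e y_e \cdot \ex{v_e}$, and for each row $i\in\brq d$ I would replace the counting capacity of the basic formulation by the matroid-polytope constraint that the vector $\br{y_e\cdot\pr{L_e^i=1}}_{e:\,i\in Q_e}$ lies inside the polytope of the row matroid $\M_i$. A standard adaptivity-gap argument in the style of Bansal et al.~shows that this is a valid relaxation.

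Given such a fractional $y$, the algorithm visits the elements in a uniformly random order. When element $e$ arrives I would first accept it tentatively with probability $y_e$; if the tentative acceptance succeeds and adding $e$ preserves feasibility in every row matroid $\M_i$ with $i\in Q_e$, then $e$ is actually probed and the outcome is committed, otherwise $e$ is discarded forever. Even though there are $d$ row matroids in total, $\size{Q_e}\le k$ says that each element is touched by at most $k$ of them; this is exactly the $k$-column-sparse matroid intersection setting to which the random-order CR scheme of Theorem~\ref{thm:crscheme} applies.

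The heart of the analysis is to show that the unconditional probability that a fixed element $e$ is successfully probed is at least $\frac{y_e}{k+1}$; multiplying by $\ex{v_e}$ and summing against the objective then delivers the claimed $(k+1)$-approximation. For each of the at most $k$ row matroids containing $e$ I would bound by $\frac{1}{k+1}$ the probability that the partial solution built before $e$ arrives already blocks $e$ in $\M_i$, and a union bound over these $k$ rows then leaves the successful probe probability at least $\br{1-\frac{k}{k+1}}\cdot y_e=\frac{y_e}{k+1}$. The per-row bound is precisely what the random-order matroid CR scheme delivers, and it is at this point that the martingale-based composition from the proof of Theorem~\ref{thm:crscheme} replaces the FKG step of previous works and lets us dispense with the monotonicity hypothesis on $L_e$.

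The main obstacle is exactly this composition step. Non-monotone CR approaches have so far been stuck at $2k$, while the tight $k+1$ had required Bansal et al.'s monotonicity assumption on the realizations of $L_e$. Carrying the martingale argument uniformly across the $k$ relevant row matroids while simultaneously handling the independent randomness of $L_e$ on top of the random arrival order is the technically demanding part; once it goes through, upgrading the row capacities to arbitrary row matroids is immediate, since the random-order CR scheme of Theorem~\ref{thm:crscheme} already accepts arbitrary matroid constraints.
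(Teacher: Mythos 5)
Your high-level plan is right --- the LP relaxation with per-row matroid-polytope constraints on $(y_e \cdot \pr{L_e^i=1})_e$ is exactly the \textsc{SetPacking-LP} the paper uses, and the algorithm (scan in random order, accept with probability $x_e$ if no row controller is blocked, then update controllers only for the rows in which $L_e$ actually materializes) is in essence the paper's Algorithm~\ref{alg:Stochastic-contention-resolution-k-set-packing}. You also correctly identify the column sparsity ($|Q_e|\le k$) as the reason only $k$ of the $d$ row matroids feed into the bound.

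However, the central analytic step as you wrote it does not work. You claim that for each of the at most $k$ row matroids containing $e$, ``the partial solution built before $e$ arrives already blocks $e$ in $\M_i$'' with probability at most $\frac{1}{k+1}$, and then union-bound across rows. But the quantity the single-matroid random-order CR scheme actually controls is a \emph{survival} probability of $\frac{1}{2}$; the cumulative per-row blocking probability in any single matroid is on the order of $\frac12$, not $\frac{1}{k+1}$, and it does not shrink as you add more rows. A union bound over cumulative blocking events would therefore yield $1-\frac{k}{2}$, which is vacuous already at $k=2$. The paper instead takes the union bound at the level of \emph{one-step increments}: each row is $1$-bounded (blocking probability at most $\frac{1}{n-t}$ conditional on $Y_e^t=1$), Lemma~\ref{lem:deltas-multi} sums the $k$ increments to conclude the joint characteristic sequences are $k$-bounded, and only then is Doob's optional stopping theorem applied once, via Lemma~\ref{lem:cr}, to the joint submartingale $(1+k)S_e^t+Y_e^t$. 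That single application is what produces $\frac{1}{k+1}$; there is no per-row $\frac{1}{k+1}$ anywhere, and trying to derive it the other way is circular. Related to this, you also elide the paper's distinction between ``feasibility is preserved'' and ``the controller is not blocked'': the controller is deliberately blocked even in some situations where $e$ would still be matroid-feasible, precisely so that the per-step blocking increment has the clean $\frac{1}{n-t}$ shape needed by the martingale; replacing that test with a bare feasibility check would break the increment bound.
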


\subsection{Our techniques}

The main notion we use is a \emph{controller mechanism}, which provides a handy abstraction,
that allows us to combine various constraints without relying on the monotonicity of the schemes.
For matroids it is implemented using
a decomposition of a fractional solution into a convex combination of characteristic vectors of independent
sets, and for knapsacks a controller is represented as a point from
the unit interval.
Additionally, knapsack constraints require a preprocessing procedure, that partitions the elements into big and small, which is inspired by~\cite{DBLP:conf/ipco/BansalKNS10,DBLP:conf/soda/FeldmanSZ16}.

The controller mechanism of a constraint $\mathcal{I}$
randomly assigns each element $e\in E$ a~controller $C_{e}$, which
keeps track of its suitability to become a part of the solution when
we iterate through the elements in a random order. More formally, 
\begin{OneLiners}
\item[a)] if $S$ is the current solution and $C_{e}$ has not been \emph{blocked} yet,
then $S\cup\{e\}$ must belong to $\mathcal{I}$, 
\item[b)] for each element $e$ the probability that 1) some element $f$ has been chosen at step $t$, and 2) $f$ has been
assigned a controller $C_{f}$, that blocks $C_{e}$, is at most $\frac{\lambda}{n-t}$ (probability taken over all such $f$'s and $C_f$'s),
for a constant $\lambda$ depending on $\mathcal{I}$. 
\end{OneLiners}

With these properties on hand, we can associate a submartingale with each element
$e$ and a fixed controller~$C_{e}$. We define a stopping event of
revealing the fate of $e$, i.e., we stop when we either take $e$ into the solution
or we block its controller. Before the stopping event for $e$ occurs, we know that we still can either take it or block it. The bound on the probability
of accepting the element comes then from the Doob's stopping theorem. This suffices
to construct a random-order contention resolution scheme. Another
martingale argument extends this reasoning to the submodular function
maximization.

In the context of the stochastic probing problems, we are aware of only one usage of the martingale argument with the Doob's theorem, in the analysis of an iterative randomized rounding algorithm~\cite{DBLP:journals/mor/AdamczykSW16}.
To the best of our knowledge, we present the first application of the martingale argument to analyze a random permutation, and we believe this technique can be handy and worth adding to a toolbox.

In order to handle Bayesian multi-parameter unit-demand mechanism
design, we rely on the reduction to a single-parameter setup by Chawla et al.~\cite{DBLP:conf/stoc/ChawlaHMS10} via copies, and on the linear relaxation by Gupta and Nagarajan~\cite{DBLP:conf/ipco/GuptaN13}.
The last ingredient necessary to obtain the postulated approximation ratio
for $k$ matroids is a routine that processes a fractional solution
for a single client menu, which later on allows to give very tight upper and lower bounds on the probabilities
of an item's acceptance and rejection. We present such a routine based
on local search that reduces the discrepancy between these quantities
in each step.

Arguments for stochastic probing and stochastic $k$-set packing
exploit the same notion of the controller mechanism.
However, in order to obtain an upper bound for a submodular objective case we need a stronger guarantee for the measured continuous greedy algorithm for optimizing the multilinear extension of a submodular function~\cite{DBLP:conf/focs/FeldmanNS11}. This bound is due to Justin Ward~\cite{Justin15private}.

\subsection{Organization of the paper}

We start the technical part of the paper by showing a random-order CR scheme for a matroid in Section~\ref{sec:matroid-cr}.
Section~\ref{sec:controller} contains the analysis of the CR scheme
and introduces the language of our framework, that is, the controller mechanism and characteristic sequences.
This allows us to  present the extension to multiple matroids in a simple way, and later to explain how to deal with submodular functions.

In Section~\ref{sec:bmumd}
we present the more complicated algorithm for the Bayesian multi-parameter unit-demand mechanism design.
The details of the single-client routine are postponed to Section~\ref{sec:single-client}.
This order of presentation allows us to explain both the framework and the main result relatively soon.
The following Sections~\ref{sec:submodular},~\ref{sec:packing},
and~\ref{sec:probing} cover the submodular optimization, stochastic $k$-set packing, and stochastic probing.

We deliberately avoid giving one procedure that captures all the results at once for the cleanest possible presentation of the paper.
With each result comes an abstract formulation of the algorithm and the application in the matroid environment. 
Our framework also extends to knapsack constraints,
and we show how to combine them with matroids in Section~\ref{sec:knapsack}.

\section{Preliminaries}

\subsection{Submodular functions}

A set function $f:2^{E}\mapsto\mathbb{R}_{\ge0}$ is \emph{submodular},
if for any two subsets $S,T\subseteq E$ we have $f\br{S\cup T}+f\br{S\cap T}\leq f\br S+f\br T$.
The \emph{multilinear extension} of $f$ is a function $F:[0,1]^{E}\mapsto\mathbb{R}_{\ge0}$, whose value at a point $y\in\brq{0,1}^{E}$ is given by 
\[
F\br y=\sum_{A\subseteq E}f\br A\cdot\prod_{e\in A}y_{e}\prod_{e\not\in A}\br{1-y_{e}}.
\]
Note that $F\br{\chr A}=f\br A$ for any set $A\subseteq E$, so $F$
is an extension of $f$ from discrete domain $2^{E}$ into a real
domain $\brq{0,1}^{E}$. The value $F(y)$ can be interpreted as the
expected value of $f$ on a random subset $A\subseteq E$ that is
constructed by taking each element $e\in E$ with probability $y_{e}$.

\subsection{Matroids }

For a matroid $\M=\br{E,\I\subseteq2^{E}}$, we define its \emph{ matroid polytope} $${\cal P}\br{{\cal M}}=\setst{x\in\mathbb{R}_{\geq0}^{E}}{\forall_{A\in{\cal I}}\sum_{e\in A}x_{e}\leq r_{{\cal M}}\br A},$$
where $r_{{\cal M}}$ is the rank function of ${\cal M}$. We know
that ${\cal P}\br{{\cal M}}$ is equivalent to
the convex hull of $\setst{\chr A}{A\in{\cal I}}$, i.e.\ characteristic
vectors of all independent sets of ${\cal M}$.

We shall need the following two properties. The proof of the lemma below about the existence of a convex decomposition can be found in~\cite{Schrijver:book}.
\begin{lem}
\label{prelim:support}
We can represent any $x\in{\cal P}\br{{\cal M}}$ as $x=\sum_{i=1}^{m}\beta_{i}\cdot\chr{B_{i}}$,
where $B_{1},\ldots,B_{m}\in{\cal M}$ and $\beta_{1},\ldots,\beta_{m}$
are non-negative weights such that $\sum_{i=1}^{m}\beta_{i}=1$
and $m = |E|^{O(1)}$.
We denote ${\cal S} = [m]$ and call $\left({\cal S}, (B_i)_{i \in \cal{S}}, (\beta_i)_{i \in \cal{S}}\right)$
a~\emph{support }of $x$
in ${\cal P}\br{{\cal M}}$. 
\end{lem}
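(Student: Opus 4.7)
The plan is to invoke two classical facts. First, Edmonds' theorem identifies the matroid polytope ${\cal P}({\cal M})$ precisely with the convex hull of $\setst{\chr{A}}{A \in {\cal I}}$; this already yields the existence of \emph{some} convex decomposition $x = \sum_{i} \beta_{i} \chr{B_{i}}$ with each $B_{i} \in {\cal I}$ and the weights $\beta_{i}$ summing to one. Second, Carathéodory's theorem applied in the ambient space $\mathbb{R}^{|E|}$ guarantees that such a decomposition can be taken with at most $|E|+1$ nonzero terms, which is comfortably within the postulated $|E|^{O(1)}$ bound.

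The existential part is therefore immediate; the real content of the lemma is that a decomposition of polynomial size is actually \emph{produced} (so that subsequent sections may use it algorithmically). The natural approach is iterative peeling: given the current residual $x'$, find one independent set $B$ in ${\cal P}({\cal M})$ whose characteristic vector lies on the same face as $x'$, let $\beta = \max\setst{\lambda \geq 0}{x' - \lambda \chr{B} \in {\cal P}({\cal M})}$, and recurse on $x' - \beta \chr{B}$ after rescaling. Each iteration saturates at least one additional defining inequality of ${\cal P}({\cal M})$, so after at most $O(|E|)$ steps the residual vanishes and the weights sum to one.

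The main obstacle is executing each step efficiently, namely (i) finding a single $B \in {\cal I}$ belonging to the support, and (ii) computing $\beta$. For (i) one can use the greedy algorithm applied to any weight vector in the relative interior of the face containing $x'$; for (ii) the optimal $\beta$ is read off from the tightest rank inequality, and the tight inequality can be identified via a single submodular function minimization on $r_{{\cal M}}(A) - x'(A)$, which is polynomial-time. An alternative, fully black-box argument is to run the ellipsoid method on ${\cal P}({\cal M})$ using its rank separation oracle and extract a polynomial-size decomposition from the dual certificates; this is the route followed in the reference~\cite{Schrijver:book}. Either way, the algorithmic cost is polynomial in $|E|$ and the rank-oracle complexity, and the resulting $m$ satisfies $m = |E|^{O(1)}$ as claimed.
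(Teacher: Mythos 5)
Your proposal is correct and follows essentially the same route as the paper, which gives no proof of its own and simply cites Schrijver's book for this standard fact. Your argument---Edmonds' characterization of the matroid polytope as the convex hull of characteristic vectors of independent sets, Carath\'eodory's theorem for the $|E|+1$ (hence $|E|^{O(1)}$) bound on the number of terms, and either iterative peeling along faces or the ellipsoid/separation-oracle machinery for polynomial-time constructivity---is precisely the standard treatment found in that reference.
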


The following lemma is a slightly generalized basis exchange lemma, proof of which again can be found in~\cite{Schrijver:book}.

\begin{lem}
\label{prelim:mapping}
Let $A,B\in{\cal I}$ be two independent sets of matroid ${\cal M}=\br{E,{\cal I}}$.
We can find an exchange-mapping $\phi\brq{A,B}:A\mapsto B\cup\set{\bot}$
such that:

\begin{enumerate}
\item $\basephi ABe=e$ for every $e\in A\cap B$, 
\item for each $f\in B$ there exists at most one $e\in A$ for which $\basephi ABe=f$, 
\item for $e\in A\setminus B$, if $\basephi ABe=\bot$, then $B+e\in{\cal I}$,
otherwise $B-\basephi ABe+e\in{\cal I}$. 
\end{enumerate}
\end{lem}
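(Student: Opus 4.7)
The plan is to deduce the claim from Brualdi's basis exchange theorem: for any two bases $X,Y$ of a matroid there exists a bijection $\pi\colon X\to Y$ satisfying $\pi(e)=e$ on $X\cap Y$ and $Y-\pi(e)+e\in\mathcal I$ for every $e\in X$. The three properties demanded of $\phi$ in the lemma translate almost verbatim into such a $\pi$, once $A$ and $B$ are arranged to have the same cardinality inside an appropriate submatroid of $\mathcal M$.

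The first step is to dispose of $A\cap B$ by contracting it, so that we work in $\mathcal M/(A\cap B)$ where $A$ and $B$ become disjoint while remaining independent. For each $e\in A\cap B$ I set $\phi(e)=e$, which handles Property~1 and contributes pairwise distinct elements of $B$ towards Property~2; it then suffices to define $\phi$ on $A\setminus B$ with values in $(B\setminus A)\cup\{\bot\}$ inside the contracted matroid. From now on I may assume $A\cap B=\emptyset$.

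Next I split on cardinalities. If $|A|\le|B|$, I augment $A$ by elements of $B$ via iterated matroid exchange inside $\mathcal M\mid(A\cup B)$ to an independent set $A''\supseteq A$ of size $|B|$. Since $A''$ and $B$ are both bases of the rank-$|B|$ truncation of $\mathcal M\mid(A\cup B)$, Brualdi's theorem provides a bijection $\pi\colon A''\to B$ with $B-\pi(e)+e\in\mathcal I$ for every $e\in A''$; setting $\phi=\pi|_A$ works, and $\bot$ is never used. If $|A|>|B|$, I symmetrically augment $B$ to $B''\supseteq B$ of size $|A|$ inside $\mathcal M\mid(A\cup B)$ and obtain a bijection $\pi\colon A\to B''$ with $B''-\pi(e)+e\in\mathcal I$. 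I then define $\phi(e)=\pi(e)$ whenever $\pi(e)\in B$ and $\phi(e)=\bot$ otherwise. Property~3 holds because $B-\pi(e)+e\subseteq B''-\pi(e)+e\in\mathcal I$ in the first sub-case, and $B+e\subseteq B''-\pi(e)+e\in\mathcal I$ in the second (using $\pi(e)\notin B$ there); Property~2 is inherited from $\pi$ being a bijection, with several preimages of $\bot$ explicitly permitted by the lemma.

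The only delicate point is precisely this $|A|>|B|$ branch, where $\pi$ inevitably produces images in $B''\setminus B\subseteq A$; the preimages of those images are exactly the ones we redirect to $\bot$, and Property~2 is carefully phrased (``at most one $e\in A$ for each $f\in B$'') to accommodate this. Everything else is routine: the augmentation is a direct application of the matroid exchange axiom inside $\mathcal M\mid(A\cup B)$, and the rest of the matroid-theoretic content of the lemma is then outsourced to Brualdi's theorem.
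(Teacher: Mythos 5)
The paper does not prove this lemma itself; it simply cites Schrijver's book, so there is no in-paper proof to compare against. Judged on its own, your reduction to Brualdi's basis-exchange theorem is correct and complete. The three reductions you stack are all sound: contracting $A\cap B$ preserves property~3 because $B'-\phi(e)+e$ is independent in $\mathcal M/(A\cap B)$ precisely when $B-\phi(e)+e$ is independent in $\mathcal M$ (and $\phi(e)\notin A\cap B$ throughout); augmenting the smaller of the two sets to a set of size $\max(|A|,|B|)$ inside $\mathcal M|(A\cup B)$ is a direct use of the exchange axiom; and the rank-$\max(|A|,|B|)$ truncation turns both sets into bases, so Brualdi applies and its conclusion (independence of a full-rank set in the truncation) transfers back to $\mathcal M$. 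The slightly delicate branch $|A|>|B|$ is also handled correctly: whether $\pi(e)\in B$ or $\pi(e)\in B''\setminus B$, the containment $B-\phi(e)+e\subseteq B''-\pi(e)+e$ (resp.\ $B+e\subseteq B''-\pi(e)+e$) holds, and injectivity of $\pi$ gives property~2 because only preimages of $\bot$ may collide. One small remark, which does not affect correctness: you state Brualdi's theorem with the clause $\pi|_{X\cap Y}=\mathrm{id}$ built in; this is in fact forced by the exchange requirement $Y-\pi(e)+e\in\mathcal I$ (if $e\in X\cap Y$ and $\pi(e)\neq e$ then $Y-\pi(e)+e$ has the wrong cardinality), so it is a consequence rather than an extra hypothesis — worth saying explicitly, since you lean on it implicitly in the $|A|>|B|$ branch when $e\in A\cap B''$. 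Overall this is a clean, self-contained alternative to the citation the authors use.
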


\subsection{Martingales}

\begin{defn} Let $\left(\Omega,{\cal F,\mathbb{P}}\right)$
be a probability space, where $\Omega$ is a sample space, ${\cal F}$
is a $\sigma$-algebra on $\Omega$, and $\mathbb{P}$ is a probability
measure on $(\Omega,{\cal F)}$. Sequence $\left\{ {\cal F}_{t}:t=1,2,\dots\right\} $
is called a \emph{filtration} if it is an increasing family of sub-$\sigma$-algebras
of ${\cal F}$: ${\cal F}_{0}\subseteq{\cal F}_{1}\subseteq\ldots\subseteq{\cal F}$.
\end{defn} Intuitively speaking, when considering a stochastic
process, $\sigma$-algebra ${\cal F}_{t}$ represents all information
available to us right after making step $t$. In our case $\sigma$-algebra
${\cal F}_{t}$ contains all information about each randomly chosen
element to probe, about outcome of each probe, and about each controller
update, that happened before or at step $t$. \begin{defn}
A process $\left(Z_{t}\right)_{t=1}^n$ is called a \emph{martingale}
if for every $t\geq0$ all following conditions hold:
\begin{enumerate}
\item random variable $Z_{t}$ is ${\cal F}_{t}$-measurable,
\item $\ex{\left|Z_{t}\right|}<\infty$,
\item $\excond{Z_{t+1}}{\mathcal{F}_{t}}=Z_{t}$. 
\end{enumerate}
If we replace the latter condition with $\excond{Z_{t+1}}{\mathcal{F}_{t}} \ge Z_{t}$,
we obtain a \emph{submartingale}.
\end{defn}

\begin{defn} Random variable $\tau:\Omega\mapsto\left\{ 0,1,\ldots\right\} $
is called a \emph{stopping time} if $\left\{ \tau=t\right\} \in\mathcal{F}_{t}$
for every $t\geq0$. \end{defn} Intuitively, $\tau$ represents
a moment when a particular event happens. We have to be able to say whether
it happened at step $t$ given only the information from steps $0,1,2,\ldots,t$.
In our case we define $\tau$ as the moment when we get to know the fate of an element,
i.e., either when it was selected in a given step, or when its blocking event occurred.
It is clear that this is a stopping time according to the above definition.

\begin{thm} [Doob's Optional-Stopping Theorem] Let $\left(Z_{t}\right)_{t=1}^n$
be a submartingale. Let $\tau$ be a stopping time such that $\tau$
has finite expectation, i.e., $\mathbb{E}[\tau]<\infty$, and the
conditional expectations of the absolute value of the martingale increments
are bounded, i.e., there exists a constant $c$ such that $\mathbb{E}\bigl[|Z_{t+1}-Z_{t}|\,\big\vert\,\mathcal{F}_{t}\bigr]\le c$
for all $t\geq0$. If so, then $\ex{Z_{\tau}}\ge\ex{Z_{0}}$. \end{thm}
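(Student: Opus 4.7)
The plan is to reduce the statement to the standard fact that a submartingale has non-decreasing expectation, but applied to the \emph{stopped} process $Y_t := Z_{t \wedge \tau}$, and then to pass to the limit $t \to \infty$ using the hypotheses on $\tau$ and on the conditional increments to justify a dominated-convergence argument.

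First I would verify that $(Y_t)$ is itself a submartingale with respect to $(\mathcal{F}_t)$. The one-step increment can be written as
\[
Y_{t+1} - Y_t = (Z_{t+1} - Z_t)\cdot \mathbf{1}[\tau > t],
\]
and the indicator $\mathbf{1}[\tau > t] = 1 - \mathbf{1}[\tau \le t]$ is $\mathcal{F}_t$-measurable because $\tau$ is a stopping time. Pulling this indicator outside the conditional expectation gives
\[
\excond{Y_{t+1} - Y_t}{\mathcal{F}_t} = \mathbf{1}[\tau > t] \cdot \excond{Z_{t+1} - Z_t}{\mathcal{F}_t} \ge 0,
\]
so $(Y_t)$ is a submartingale. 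Iterating yields $\ex{Y_t} \ge \ex{Y_0} = \ex{Z_0}$ for every finite $t$.

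Next I would let $t \to \infty$. Since $\tau < \infty$ almost surely (as $\ex{\tau} < \infty$), we have $Y_t \to Z_\tau$ pointwise a.s. To pull the limit inside the expectation the key bound is the telescoping estimate
\[
|Y_t| \le |Z_0| + \sum_{s=0}^{\tau-1} |Z_{s+1}-Z_s|,
\]
whose right-hand side is integrable: applying the tower property together with the assumed uniform bound $\excond{|Z_{s+1}-Z_s|}{\mathcal{F}_s} \le c$ gives
\[
\ex{\sum_{s=0}^{\tau-1} |Z_{s+1}-Z_s|} = \sum_{s \ge 0} \ex{\mathbf{1}[\tau>s]\cdot \excond{|Z_{s+1}-Z_s|}{\mathcal{F}_s}} \le c\cdot \ex{\tau} < \infty.
\]
Dominated convergence then yields $\ex{Y_t} \to \ex{Z_\tau}$, and combined with $\ex{Y_t} \ge \ex{Z_0}$ this gives the desired $\ex{Z_\tau} \ge \ex{Z_0}$.

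The only delicate point, and the step I would be most careful with, is the domination argument. Without either $\ex{\tau} < \infty$ or the conditional-increment bound the limit exchange can fail (classical gambler's ruin examples show this), so I would explicitly write out how both hypotheses of the theorem are used exactly once: $\ex{\tau} < \infty$ to finitely cap the tail sum $\sum_s \pr{\tau > s}$, and the conditional bound $c$ to control each term of that sum. Everything else is just invoking the stopping-time property of $\tau$ and elementary properties of conditional expectation.
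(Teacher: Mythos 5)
The paper does not supply a proof for this result: Doob's optional-stopping theorem is stated in the preliminaries as classical background and is invoked as a black box in Lemma~\ref{lem:cr}, so there is no ``paper's own proof'' to compare against. Your argument is a correct and standard proof of the infinite-horizon version of the theorem. The decomposition $Y_{t+1}-Y_t = (Z_{t+1}-Z_t)\cdot\mathbf{1}[\tau>t]$, the observation that $\mathbf{1}[\tau>t]$ is $\mathcal{F}_t$-measurable, the conclusion that the stopped process is a submartingale, and the passage to the limit via dominated convergence with the dominating random variable $|Z_0|+\sum_{s=0}^{\tau-1}|Z_{s+1}-Z_s|$ are exactly how this is done in textbooks (e.g.\ Williams, \emph{Probability with Martingales}), and the bookkeeping with Tonelli and the tower property to show integrability of the dominating variable is clean and correct.

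One small remark on the statement itself rather than your proof: the paper writes the process as $(Z_t)_{t=1}^n$ with a finite index set, and indeed in its only application the stopping time satisfies $\tau\le n$ deterministically, so the finite-horizon version suffices there (no limit, no domination hypothesis needed). You proved the stronger infinite-horizon statement that the hypotheses $\ex{\tau}<\infty$ and $\excond{|Z_{t+1}-Z_t|}{\mathcal{F}_t}\le c$ are actually tailored for, which is harmless and arguably truer to the intent of those hypotheses than the finite-time reading is.
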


\section{Random-order contention resolution scheme for a matroid}
\label{sec:matroid-cr}

We formulate our first goal as a motivation to present the simplest variant of the mechanism.
\begin{thm}
\label{thm:matroid-cr}
There exists a random-order CR scheme for a  matroid
with $c=\frac{1}{2}$.
\end{thm}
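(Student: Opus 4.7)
The plan is to instantiate the controller mechanism previewed in the introduction for $\mathcal{M}$, obtain $\lambda=1$, and close the argument with a submartingale computation.

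First I would invoke Lemma~\ref{prelim:support} to decompose $x=\sum_{i\in\mathcal{S}}\beta_i\chi_{B_i}$. Independently for every $e\in E$, sample $e\in R(x)$ with probability $x_e$ and, conditional on $e\in R(x)$, sample a controller $C_e$ with $\Pr[C_e=i]=\beta_i/x_e$ for each $i$ with $e\in B_i$. Present the elements in a uniformly random order while maintaining a solution $S$ and, for each $i\in\mathcal{S}$, an evolving independent set $B_i^{(t)}$ that starts as $B_i^{(0)}=B_i$ and preserves $S\subseteq B_i^{(t)}$ throughout. When an active element $e$ arrives at step $t$, accept it iff $e\in B_{C_e}^{(t-1)}$, which by the invariant immediately gives $S+e\in\mathcal{I}$. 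Upon accepting $e$ with controller $i$, update every basis $B_{i'}^{(t)}=B_{i'}^{(t-1)}-\phi[B_i^{(t-1)},B_{i'}^{(t-1)}](e)+e$ when $e\notin B_{i'}^{(t-1)}$ and leave it unchanged otherwise; Lemma~\ref{prelim:mapping} guarantees that the displaced element exists and, via property~(1), lies outside $S$. We say that the controller of $e$ is \emph{blocked} at step $t$ once $e\notin B_{C_e}^{(t)}$.

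The key estimate is that, conditional on $\mathcal{F}_{t-1}$ and on $e$ being neither processed nor blocked yet, the probability of $e$'s controller being blocked at step $t$ is at most $\tfrac{1}{n-t+1}$, where $n-t+1$ denotes the number of unprocessed elements. Fixing $e$ with $C_e=i$ and summing over the possible controller values $j$ of the arriving element $f$, property~(2) of Lemma~\ref{prelim:mapping} applied to $B_j^{(t-1)}$ and $B_i^{(t-1)}$ allows at most one $f\in B_j^{(t-1)}$ to satisfy $\phi[B_j^{(t-1)},B_i^{(t-1)}](f)=e$. Since $\Pr[f\in R(x)\wedge C_f=j]=\beta_j\mathbf{1}[f\in B_j]$ is independent of $\mathcal{F}_{t-1}$ for unprocessed $f$, the total contribution is at most $\sum_j\beta_j=1$, and multiplying by the uniform arrival probability $\tfrac{1}{n-t+1}$ yields the bound.

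Fix $e$ with $e\in R(x)$ and $C_e=i$ and let $\tau$ denote the first step at which $e$ is processed or its controller is blocked. Define
\[
Z_t=\mathbf{1}[e\in S_t]+\tfrac12\cdot\mathbf{1}[e\text{ is unprocessed at step }t\text{ and }e\in B_i^{(t)}].
\]
Combining the blocking bound with the matching probability $\tfrac{1}{n-t}$ that $e$ itself is the arrival at step $t+1$, a direct one-line computation gives $\mathbb{E}[Z_{t+1}\mid\mathcal{F}_t]\geq Z_t$, so $Z_t$ is a bounded submartingale. Doob's optional stopping theorem then yields $\Pr[e\in S\mid e\in R(x),\,C_e=i]=\mathbb{E}[Z_\tau]\geq Z_0=\tfrac12$, and averaging over the choice of $C_e$ establishes $c=\tfrac12$.

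The main obstacle I foresee is verifying that the evolving bases $B_i^{(t)}$ retain the structural guarantees of Lemma~\ref{prelim:mapping} along the whole execution, so that the per-step counting argument remains valid at every $t$ and not merely at $t=0$. Once this bookkeeping is tied down, both the submartingale calculation and the averaging step are routine.
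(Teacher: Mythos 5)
Your proposal is correct and follows essentially the same route as the paper: decompose $x$ into a support of independent sets via Lemma~\ref{prelim:support}, assign each element a controller drawn with probability $\beta_i/x_e$, track the evolving independent sets through the exchange-mappings of Lemma~\ref{prelim:mapping}, bound the per-step blocking probability by $1/(\text{number of unprocessed elements})$ using the injectivity of the exchange map, and close with Doob's optional stopping applied to the same submartingale (your $Z_t$ is the paper's $2S_e^t+Y_e^t$ rescaled by $\tfrac12$). The bookkeeping concern you flag at the end is exactly what the paper addresses by recomputing the exchange-mappings for the updated family $(B_i^{(t)})$ at every step and re-invoking Lemma~\ref{prelim:mapping}, which is legitimate since each $B_i^{(t)}$ remains independent by property (3).
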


\paragraph{Initialization}
The procedure is shown in Algorithm~\ref{alg:matroid-cr}.
Given a vector $x\in\P\M$, we begin with decomposing it into a~\emph{support} $x = \sum_{j \in\cal S}\beta_{j}\cdot \chr{B_{j}^{0}}$, where each set $B_i^0$ is independent in $\cal M$ (Lemma~\ref{prelim:support}), and finding exchange-mappings $\phi\brq{B_{i}^{0},B_{j}^{0}}$ between each pair of sets in the support (Lemma~\ref{prelim:mapping}).
For each element $e \in E$ we choose a~controller $j(e) \in\cal S$ such that $e\in B_{j}^0$, with
probability $\frac{\beta_{j}}{x_{e}}$
(note that $\sum_{j : e \in B_j^0} \beta_j = x_e$).
The set family given by the support is being modified after each step of the algorithm
and we denote the sets in step $t$ as $(B_j^t)_{j\in\cal S}$.
The set $\cal S$ and scalars $\beta_j$ remain the same.
For the sake of legibility we refer directly to set $B_{j(e)}^t$ as $C_e^t$
and shorten it to $C_e$ when it does not lead to a confusion.

\paragraph{Blocking events}
We scan elements from $E$ in a random order.
If the element $e$ chosen in step $t$ happens to belong to $R(x)$ and its controller has not been blocked yet (to be explained shortly), we take it into the solution.
Then we modify the set family family $(B_j^t)_{j\in\cal S}$ by inserting $e$ to each of them.
This operation is performed according to the exchange-mappings.
It may result in some other element $f$ being removed from the set $C_f^t = B_{j(f)}^t$.
When this happens, we say that $(f,C_f)$ gets \textbf{blocked}.

We emphasize that at the moment of doing so, in some circumstances,
it would be still possible to take element $f$ into the solution.
However, we require a clean condition to know when an element is not considered any longer. This
simplifies the analysis significantly. 
In the pseudocode shown below, we check for the blocking event of
$e$ in line~\ref{algline:majorif}.

\begin{algorithm}
\caption{\label{alg:matroid-cr}Random-order contention resolution scheme for a matroid}

\begin{algorithmic}[1]


\STATE decompose $x$ into its support in $\cal M$, that is, $x = \sum_{i \in\cal S}\beta_{i}\cdot \chr{B_{i}^{0}}$

\STATE find exchange-mappings $\phi\brq{B_{i}^{0},B_{j}^{0}}$
between all pairs $i,j\in\cal S$

\STATE for each element $e$ choose a controller $j(e) \in\cal S$ such that $e\in B_{j}^0$ with
probability $\frac{\beta_{i}}{x_{e}}$,
denote $B_{j(e)}^0$ by $C_{e}^0$ \label{algline:controller}

\STATE $S\leftarrow\emptyset$, $t\leftarrow0$


\STATE \textbf{for} each element $e$ in $E$ in $\sigma$ order \textbf{do}

\STATE ~~~\textbf{if $e\notin R(x)$ then \label{algline:Roracle}}

\STATE ~~~~~~continue

\STATE ~~~\textbf{if $e\in C_{e}^t$ then \label{algline:majorif}}

\STATE ~~~\textbf{~~~}$S\leftarrow S\cup\set e$
\label{algline:adde}


\STATE ~~~\textbf{~~}~\textbf{for }each $i\in{\cal S}:e\notin B_{i}^{t}$
\textbf{do}

\STATE ~~~~~~~~~\textbf{if }$\phi\brq{C_{e}^t, B_{i}^{t}}\br e=\perp$
\textbf{then} $B_{i}^{t+1}\leftarrow B_{i}^{t}+e$

\STATE ~~~~~~~~~\textbf{if }$\phi\brq{C_{e}^{t},B_{i}^{t}}\br e=f$
\textbf{then} $B_{i}^{t+1}\leftarrow B_{i}^{t}-f+e$\label{algline:addetoB}

\STATE ~~~\textbf{~}~~\textbf{for }each $i\in{\cal S}:e\in B_{i}^{t}$
\textbf{do}

\STATE ~~~~~~\textbf{~~$B_{i}^{t+1}\leftarrow B_{i}^{t}$}

\STATE ~~~~~~find new exchange-mappings $\phi\brq{B_{i}^{t+1},B_{j}^{t+1}}$
between all pairs $i,j\in\cal S$

\STATE ~~~$t\leftarrow t+1$; 

\RETURN $S$

\end{algorithmic} 
\end{algorithm}

\paragraph{Correctness}

Let $S^t$ stand for the solution constructed up to step $t$. We need to show that the output is indeed an independent set
of the matroid.
This follows from the two facts below.
\begin{fact}
For every $t$ and $i \in\cal S$ it holds $S^{t}\subseteq B_{i}^{t}$.
\end{fact}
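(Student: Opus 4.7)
The plan is to proceed by induction on $t$. The base case $t=0$ is immediate since $S^0 = \emptyset$. For the inductive step, I would suppose $S^t \subseteq B_i^t$ for every $i \in {\cal S}$ and track what happens in step $t$: either the condition in line~\ref{algline:majorif} fails, in which case the algorithm modifies neither $S^t$ nor any $B_i^t$ and the claim is preserved trivially; or it succeeds, and then $S^{t+1} = S^t \cup \{e\}$ while each $B_i^{t+1}$ is produced from $B_i^t$ by the exchange-mapping update.

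In the latter situation I would split into two sub-cases for each $i \in {\cal S}$. If $e \in B_i^t$, then $B_i^{t+1} = B_i^t$ and we already have $S^{t+1} = S^t \cup \{e\} \subseteq B_i^t$ by induction and the case assumption. If $e \notin B_i^t$, then either $B_i^{t+1} = B_i^t + e$ or $B_i^{t+1} = B_i^t - f + e$ with $f = \phi\brq{C_e^t, B_i^t}\br{e}$. In both situations $e \in B_i^{t+1}$, so it suffices to show that no element of $S^t$ is removed from $B_i^t$ by this update, i.e., that $f \notin S^t$ in the second sub-case.

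The only delicate step is this last exclusion, and it is where I would invoke the properties of the exchange mapping. I would argue by contradiction: if $f \in S^t$, the inductive hypothesis gives $f \in B_j^t$ for every $j \in {\cal S}$, so in particular $f \in C_e^t \cap B_i^t$. Property~1 of Lemma~\ref{prelim:mapping} then forces $\phi\brq{C_e^t, B_i^t}\br{f} = f$, while we also have $\phi\brq{C_e^t, B_i^t}\br{e} = f$. The injectivity guaranteed by property~2 of the same lemma implies that $f$ has at most one preimage, so $e = f$, contradicting $e \notin B_i^t$ and $f \in B_i^t$. With this point settled, the induction closes since $e$ is added to every $B_i^{t+1}$ and no previously chosen element is ever removed.
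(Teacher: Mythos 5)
Your proof is correct and proceeds along the same lines as the paper's --- induction on $t$, following the algorithm's updates --- but it is substantially more careful. The paper's proof is a single sentence ("If we add an element $e$ to $S^{t}$, then we add $e$ to each $B_{i}^{t}$"), which implicitly assumes that once an element of $S^t$ is sitting in every $B_i^t$, the exchange-mapping update never evicts it. You make this the central point and justify it properly: if $f \in S^t$, the inductive hypothesis places $f$ in both $C_e^t$ and $B_i^t$, so property~1 of Lemma~\ref{prelim:mapping} pins $\phi\brq{C_e^t,B_i^t}(f)=f$, and property~2 (injectivity) then rules out $\phi\brq{C_e^t,B_i^t}(e)=f$ for $e\neq f$. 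This is exactly the invariant that makes the one-line proof true, and spelling it out is an improvement rather than a divergence.
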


\begin{proof}
If we add an element $e$ to $S^{t}$ on line~\ref{algline:adde},
then we add $e$ to each $B_{i}^{t}$.
\end{proof}
\begin{fact}
For every $t$ and $i \in\cal S$ the set $B_{i}^{t}$
is independent in the matroid $\M$.
\end{fact}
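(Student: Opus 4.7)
My plan is to prove the fact by induction on the step counter $t$, treating the two kinds of updates performed in one iteration separately and invoking Lemma~\ref{prelim:mapping} (the basis-exchange property) as the single nontrivial ingredient.

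For the base case $t=0$, independence of every $B_i^0$ is immediate from Lemma~\ref{prelim:support}, since the decomposition $x = \sum_{i\in\cal S}\beta_i\cdot\chr{B_i^0}$ is taken over independent sets of $\M$. For the inductive step, assume that $B_i^t$ is independent for all $i\in\cal S$. If in step $t$ the algorithm does not accept a new element (either because $e\notin R(x)$ or because $e\notin C_e^t$, i.e., the controller of $e$ is already blocked), then $B_i^{t+1}=B_i^t$ for every $i$ and there is nothing to check. So assume $e$ is accepted on line~\ref{algline:adde}, which in particular means $e\in C_e^t$.

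The sets are updated in two groups. For those $i\in\cal S$ with $e\in B_i^t$, the update keeps $B_i^{t+1}=B_i^t$, so independence is preserved trivially. For those $i\in\cal S$ with $e\notin B_i^t$, the algorithm uses the exchange-mapping $\phi[C_e^t,B_i^t]$ from $C_e^t$ to $B_i^t$. Here I would apply Lemma~\ref{prelim:mapping} with $A=C_e^t$ and $B=B_i^t$, both of which are independent by the inductive hypothesis; since $e\in C_e^t\setminus B_i^t$, property~(3) of the lemma yields either $B_i^t+e\in\I$ when $\phi[C_e^t,B_i^t](e)=\bot$, or $B_i^t-\phi[C_e^t,B_i^t](e)+e\in\I$ otherwise. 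These are precisely the two assignments performed on lines with $B_i^{t+1}\leftarrow B_i^t+e$ and $B_i^{t+1}\leftarrow B_i^t-f+e$, so $B_i^{t+1}\in\I$ in either case.

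The only subtle point, which I would highlight as the main place to be careful, is that Lemma~\ref{prelim:mapping}(3) requires $e\in A\setminus B=C_e^t\setminus B_i^t$. The condition $e\notin B_i^t$ is exactly the selector of this branch of the inner loop, and the condition $e\in C_e^t$ is guaranteed by the test in line~\ref{algline:majorif} combined with the definition of the blocking event: a pair $(e,C_e)$ is declared blocked precisely when $e$ would be removed from $C_e^t$ by an exchange triggered by some earlier accepted element, so as long as we do not stop considering $e$, the invariant $e\in C_e^t$ holds. Given this, the induction closes immediately, and together with the previous fact $S^t\subseteq B_i^t$ it follows that the returned set $S$ is independent in $\M$.
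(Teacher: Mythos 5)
Your proof is correct and rests on exactly the same key ingredient the paper uses, namely property (3) of the exchange-mapping in Lemma~\ref{prelim:mapping}; you just spell out the induction and the two update branches explicitly where the paper states this in one line. The extra care about verifying $e\in C_e^t\setminus B_i^t$ (so that item (3) actually applies) is a good instinct but does not constitute a different route.
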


\begin{proof}
All changes of the sets $B_{i}^{t}$ are due do the exchange-mapping $\phi$
whose property (3) ensures that after each exchange sets
$B_{i}^{t}$ remain independent in $\M$.
See Lemma~\ref{prelim:mapping} for details.
\end{proof}

\paragraph{Approximation guarantee}
In our setting
we cannot assume we know the whole set $R\br x$ in advance,
but rather we learn if $e \in R(x)$ after probing $e$ in line~\ref{algline:Roracle}.
In the following arguments we fix an element $e$ and condition
all the probabilities on the fact that $e\in R\br x$, and on the
controller $C_{e}$ chosen in line~\ref{algline:controller}.
Since the choice of other controllers is irrelevant to $e$ until an element $f$ with a controller blocking $C_e$ is revealed to exist in line~\ref{algline:Roracle},
we can assume in the analysis that the assignment of $C_f$ happens after the latest family of exchange-mappings has been established.

The next two lemmas encapsulate the properties of the controller mechanism for a matroid.
The main proof is postponed to Lemma~\ref{lem:cr}.

\begin{lem}
\label{lem:matroid:support}
Suppose that $\sum_{j\in\cal S}\beta_j \le 1$ and $(B_j)_{j \in \cal{S}}$ is a family of independent sets from $\cal M$
with fixed exchange-mappings between each $B_j$ and set $C \in \cal M$.
Let us denote by $\Gamma\br{e,C}=\setst{\br{f,j}}{\,\phi_{[B_{j},C]}\br f=e}$
the set of all pairs $(f,j)$ that makes $e$ get removed from $C$.
Then
\[
\sum_{f \in E}\,\,\sum_{j:\br{f,j}\in\Gamma\br{e,C}}\beta_{j}\le 1.
\]
\end{lem}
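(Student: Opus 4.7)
The plan is to reduce this to a counting argument based on property (2) of the exchange-mapping from Lemma~\ref{prelim:mapping}. The left-hand side is morally $\sum_{(f,j)\in\Gamma(e,C)}\beta_j$, so I would first swap the order of summation to rewrite it as
\[
\sum_{j\in\cal S}\beta_{j}\cdot\bigl|\{\,f\in E: (f,j)\in\Gamma\br{e,C}\,\}\bigr|.
\]
This reorganization is the main conceptual step; everything else follows from facts we have already available.

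Next, I would apply property (2) of the exchange-mapping $\phi_{[B_{j},C]}:B_{j}\to C\cup\{\perp\}$, which guarantees that for each target $e\in C$ there exists at most one $f\in B_{j}$ with $\phi_{[B_{j},C]}\br f=e$. In other words, fixing $j\in\cal S$, the set $\setst{f}{\br{f,j}\in\Gamma\br{e,C}}$ has cardinality at most $1$. (Note also that if such an $f$ exists it must lie in $B_{j}$, so restricting $f$ to $E$ in the outer sum is harmless.)

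Plugging this cardinality bound into the reorganized sum gives
\[
\sum_{j\in\cal S}\beta_{j}\cdot\bigl|\{f: (f,j)\in\Gamma\br{e,C}\}\bigr|\le\sum_{j\in\cal S}\beta_{j}\le 1,
\]
where the last inequality is the hypothesis of the lemma. This completes the argument.

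I do not anticipate a genuine obstacle here: the lemma is essentially a bookkeeping identity, and the only care needed is to correctly match the direction of the exchange-mapping (the domain is $B_{j}$, the codomain is $C\cup\set\perp$) when invoking property (2). The reason the statement is formulated at this level of generality -- allowing $\sum_{j}\beta_{j}\le 1$ rather than $=1$, and an arbitrary independent set $C$ rather than some $B_{j_0}$ -- is presumably so that it can be reused later when we condition on partial information (e.g.\ a fixed controller for $e$) and when $C$ is the current set $C_e^t$ rather than part of the original support.
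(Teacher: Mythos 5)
Your proof is correct and matches the paper's argument essentially step for step: both swap the summation order and then invoke property (2) of Lemma~\ref{prelim:mapping} to bound the inner count by $1$ for each fixed $j$, yielding $\sum_{j}\beta_{j}\le 1$.
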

\begin{proof}
For every set $B_{j}$ there can be at
most one element $f$ such that $\br{f,j}\in\Gamma\br{e,C}$ because
$\phi\brq{B_{j},C}$ cannot map two elements onto $e$ (Lemma~\ref{prelim:mapping}).
Therefore for fixed $j\in \cal S$ we have $\sum_{f:\br{f,j}\in\Gamma_{i}\br{e,C}}\beta_{j}\leq\beta_{j}$.
We change the summation order to obtain
\[
\sum_{f \in E} \,\,\sum_{j:\br{f,j}\in\Gamma\br e,C}\beta_{j}=\sum_{j \in \cal S}\,\,\sum_{f:\br{f,j}\in\Gamma\br{e,C}}\beta_{j}\leq\sum_{j\in\cal S}\beta_{j}\le 1.
\]
\end{proof}

\begin{lem}
\label{lem:matroid-blocking-pr}
The probability of a blocking event for $(e, C_e)$ in step $t$ is at most $\frac{1}{n-t}$.
\end{lem}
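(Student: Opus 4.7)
The plan is to condition on the history of the algorithm through step $t-1$, which fixes the current family $\br{B_j^t}_{j\in\cal S}$, the freshly computed exchange-mappings, and the set of remaining elements, and then to union-bound over candidate culprit pairs $\br{f, j(f)}$. First I would make the blocking event explicit: $\br{e, C_e}$ is blocked at step $t$ iff $\sigma_t = f$ for some $f \neq e$ with $f \in R(x)$, with the test on line~\ref{algline:majorif} still succeeding (i.e.\ $f \in B_{j(f)}^t$), and with $\phi\brq{B_{j(f)}^t, C_e^t}\br f = e$ --- only then does line~\ref{algline:addetoB} remove $e$ from $C_e$.

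Next, invoking the deferred-decision remark made just before Lemma~\ref{lem:matroid:support}, I would treat the controller label $j(f)$ as fresh randomness drawn at the moment $f$ is touched: independent of the history, with $\pr{j(f) = j} = \beta_j/x_f$ on the support $\setst{j}{f \in B_j^0}$. Since the random permutation is uniform, $\pr{\sigma_t = f \mid \text{history}} = 1/(n-t)$ for each remaining $f$, and $f \in R(x)$ independently with probability $x_f$. Multiplying these three factors cleanly cancels $x_f$ and yields $\beta_j/(n-t)$ as the contribution of each pair $\br{f, j}$.

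Finally I would sum this contribution over all $\br{f, j}$ with $\phi\brq{B_j^t, C_e^t}\br f = e$ --- precisely the set $\Gamma\br{e, C_e^t}$ of Lemma~\ref{lem:matroid:support} --- to obtain
\[
\pr{\text{block at step } t} \;\le\; \frac{1}{n-t} \sum_{\br{f,j} \in \Gamma\br{e, C_e^t}} \beta_j \;\le\; \frac{1}{n-t},
\]
where the last inequality is Lemma~\ref{lem:matroid:support} applied with $\sum_{j \in \cal S} \beta_j = 1$. The main subtlety is the deferred-decision step: although $j(f)$ was nominally sampled at initialization, one must verify that the history up to step $t$ reveals nothing about it beyond the indicator $f \in B_{j(f)}^0$. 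This holds because the sets $B_j^t$ and the exchange-mappings evolve only through picks of elements other than $f$, whose controller labels are drawn independently of $j(f)$, so conditioning on the history of those picks does not bias $j(f)$.
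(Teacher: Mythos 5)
Your proof is correct and follows essentially the same route as the paper's: condition on the history, invoke the deferred-decision principle to treat $j(f)$ as fresh randomness, note that $\pr{\sigma_t = f}\cdot\pr{f\in R(x)}\cdot\pr{j(f)=j}$ telescopes to $\beta_j/(n-t)$, and finish with Lemma~\ref{lem:matroid:support}. You are somewhat more explicit than the paper about the check that $f\in B_{j(f)}^t$ (line~\ref{algline:majorif}) and about exactly why conditioning on $\mathcal{F}_{t}$ does not bias the unseen label $j(f)$; both points are subsumed in the paper's argument (the former because $\Gamma^{t}$ is only defined on elements of $B_j^t$, the latter in the remark preceding Lemma~\ref{lem:matroid:support}), so the two proofs coincide.
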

\begin{proof}
We enumerate steps starting with 0. A~blocking event occurs when we remove $e$ from $C_{e}^t$.
This happens if we choose $f\neq e$ in step $t$, that 1) turns out to belong to $R\br x$ in line~\ref{algline:Roracle}, and
2) we choose a controller $C_f$ such that $\phi_{C_f^{t},C_{e}^{t}}\br f=e$ in line~\ref{algline:controller}
(recall that in our analysis we can treat this event as happening after the existence $f$ has been revealed).
Let $\Gamma^{t}\br {e,C_e}$ be as in Lemma~\ref{lem:matroid:support} with respect to the set family $(B_j^t)_{j\in\cal S}$.
Since there are $n-t$ elements to choose in step $t$,
the probability that $e$ gets removed from $C_{e}^t$ is
at most 
\[
\frac{1}{n-t}\sum_{f}\sum_{j:\br{f,j}\in\Gamma^{t}\br{e,C_e}}\pr{f\in R\br x}\cdot\pr{f\mbox{ chooses controller }j}.
\]
We have $\pr{f\in R\br x}=x_{f}$.
If $f$ belongs to $R\br x$, then $f$
is assigned $C_f = B_{j}^{t}$ with probability $\frac{\beta_{j}}{x_{f}}$.
Therefore the above expression simplifies to 
\[
\frac{1}{n-t}\sum_{\br{f,j}\in\Gamma^{t}\br{e,C_e}}x_{f}\cdot \frac{\beta_{j}}{x_{f}}=\frac{1}{n-t}\sum_{\br{f,j}\in\Gamma^{t}\br{e,C_e}}\beta_{j}.
\]
The claim follows from Lemma~\ref{lem:matroid:support}.

\end{proof}

\section{The controller mechanism}
\label{sec:controller}
Before we are ready to finish the proof of Theorem~\ref{thm:matroid-cr}, we need to introduce our toolbox.
In this section we abstract from the structure of the constraint and present the general framework for obtaining approximation ratios with the controller mechanism.

\begin{algorithm}
\caption{\label{alg:abstract-cr}Abstract view of the random-order contention resolution scheme}
\begin{algorithmic}[1]

\STATE assign each element $e \in E$ a controller $C_e$

\STATE $S\leftarrow\emptyset$

\STATE \textbf{foreach} element $e$ in $E$ in $\sigma$ order \textbf{do}\label{algline:abstact-cr-pick}

\STATE ~~~\textbf{if $e\notin R(x)$ then} 

\STATE ~~~~~~continue

\STATE ~~~\textbf{if} $(e,\, C_{e})$ has not been blocked \textbf{then}  

\STATE ~~~\textbf{~~~}$S\leftarrow S\cup\set e$

\STATE ~~~\textbf{~~~}update controllers

\RETURN $S$
\end{algorithmic} 
\end{algorithm}

\subsection{Characteristic sequences}
\label{sec:character}
In order to analyze the approximation guarantee we fix an element $e$ and condition
all the probabilities on the fact that $e\in R\br x$, and on the choice of
controller $C_{e}$ (using notation $\pr{\mbox{event}\,|\,C_e}$).
The element $e$ is oblivious to the choice of other controllers until an element $f$ with a controller blocking $C_e$ is taken into the solution.
Hence, we can assume in the analysis that for $f\ne e$ the assignment of $C_f$ happens after the last controller update and the disclosure of $f$.

Initially we know that $e$ is \emph{available} to take, i.e., there
is still a possibility of accepting $e$ via $C_{e}^{t}$ for some $t$.
As the process is being executed, at some point we get to
know the fate of $e$: there comes a step in which we
either
1) pick $e$ in line~\ref{algline:abstact-cr-pick}, or
2) pick $f\neq e$ and choose a controller $C_f$
which blocks $(e,\,C_e)$ (in the matroid example: $e$ gets removed from $C_{e}$).

\begin{defn}[\textbf{Characteristic sequences}]
Consider an abstract routine, where in every turn each unseen element might be picked with equal probability and, if its controller has not been blocked, it gets accepted and might block other controllers.
We shall associate three binary processes with $(e,C_e)$:
\begin{OneLiners}
\item[$S_e^t:$] indicates whether $e$ was taken into the solution before step $t$;
initially $S_{e}^{0}=0$;
\item[$Z_e^t:$] indicates whether the controller of $e$ has been blocked before step $t$;
initially $Z_{e}^{0}=0$;
\item[$Y_e^t:$] we still \textbf{didn't get} to know the fate of $e$ before
step $t$; initially $Y_{e}^{0}=1$.
\end{OneLiners}
The sequences are bound with a following relationship
\[
Y_{e}^{t}=1-S_{e}^{t}-Z_{e}^{t}.
\]
We call the characteristic sequences $\lambda$-\textbf{bounded} if
\[
\excond{Z_{e}^{t+1}-Z_{e}^{t}}{{\cal F}^{t}}\leq\frac{\lambda\cdot Y_{e}^{t}}{n-t}.
\]
\end{defn}


\begin{cor}
\label{lem:matroid-lam}
For the matroid constraint the characteristic sequences are 1-bounded.
\end{cor}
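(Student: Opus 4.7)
The plan is to derive the corollary directly from Lemma~\ref{lem:matroid-blocking-pr} by translating its probabilistic statement into the language of the characteristic sequences. By construction $Z_e^t$ is a $\{0,1\}$-valued, non-decreasing process, so the increment $Z_e^{t+1}-Z_e^t$ is the indicator of the blocking event for $(e,C_e)$ occurring exactly at step $t$. Consequently $\mathbb{E}[Z_e^{t+1}-Z_e^t\mid \mathcal{F}^t]$ equals the $\mathcal{F}^t$-conditional probability of this event, which is precisely the quantity bounded in Lemma~\ref{lem:matroid-blocking-pr}.

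I would then split the argument into two cases according to the value of the $\mathcal{F}^t$-measurable variable $Y_e^t$. If $Y_e^t=0$, the fate of $e$ is already known by step $t$: either $e$ has already been selected (and the algorithm never revisits it, so no further increment to $Z_e$ is counted), or $C_e$ has already been blocked (in which case $Z_e^t=1$ and monotonicity forces $Z_e^{t+1}=1$). Either way $Z_e^{t+1}-Z_e^t=0$ almost surely and the required inequality is vacuous. If instead $Y_e^t=1$, then $(e,C_e)$ is still available entering step $t$, and Lemma~\ref{lem:matroid-blocking-pr} gives a conditional blocking probability of at most $\tfrac{1}{n-t}$, matching the definition of $\lambda$-boundedness with $\lambda=1$. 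Combining the two cases yields
\[
\mathbb{E}\bigl[Z_e^{t+1}-Z_e^t\,\big|\,\mathcal{F}^t\bigr]\;\le\;\frac{Y_e^t}{n-t}.
\]

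The main thing to verify carefully, rather than a real obstacle, is that the conditioning used in Lemma~\ref{lem:matroid-blocking-pr} coincides with the filtration $\mathcal{F}^t$ used here. Lemma~\ref{lem:matroid-blocking-pr} fixes $e\in R(x)$ and its controller choice $C_e$, and uses only the current family $(B_j^t)_{j\in\mathcal{S}}$ together with the uniformly random choice of the next element; all of these are $\mathcal{F}^t$-measurable, and the remaining randomness (the identity of $f$ picked at step $t$ and its controller $C_f$) is exactly the randomness over which the conditional expectation is taken. Hence the bound transfers verbatim, and the corollary follows.
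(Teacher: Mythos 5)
Your proof is correct and follows essentially the same route as the paper's: you make the same case split on $Y_e^t\in\{0,1\}$, observe that $Z_e^{t+1}-Z_e^t=0$ when the fate is already revealed, and invoke Lemma~\ref{lem:matroid-blocking-pr} in the case $Y_e^t=1$. The extra care you take in checking that the conditioning in Lemma~\ref{lem:matroid-blocking-pr} is $\mathcal{F}^t$-measurable is a sound sanity check, but it is already implicit in the paper's argument.
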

\begin{proof}
First let us note that if $Y_{e}^{t}=0$, then we already got to know
the fate of $e$ before step $t$, and so the status of blocking
$e$ cannot change, i.e., $Z_{e}^{t}=Z_{e}^{t+1}$.
If $Y_{e}^{t}=1$, then the claim reduces to Lemma~\ref{lem:matroid-blocking-pr}.
\end{proof}

\begin{lem}
\label{lem:deltas}
If the characteristic sequences of $e$ are $\lambda$-bounded, then they satisfy
\[
\excond{Z_{e}^{t+1}-Z_{e}^{t}}{{\cal F}^{t}} \le \lambda\cdot\excond{S_{e}^{t+1}-S_{e}^{t}}{{\cal F}^{t}}.
\]
\end{lem}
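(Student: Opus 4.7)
The plan is to reduce the inequality to the hypothesis by computing $\excond{S_{e}^{t+1}-S_{e}^{t}}{{\cal F}^{t}}$ exactly. I expect the key identity to be
\[
\excond{S_{e}^{t+1}-S_{e}^{t}}{{\cal F}^{t}} \;=\; \frac{Y_e^t}{n-t},
\]
after which the claim follows by chaining this identity with the $\lambda$-bounded assumption.

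To establish the identity I would split according to the value of $Y_e^t$. On the event $\{Y_e^t = 0\}$ the fate of $e$ has already been determined: either $e$ was accepted at an earlier step (so $S_e^t = 1$ and $e$ no longer belongs to the pool of unseen elements), or the controller $C_e$ has been blocked (so the check ``$(e, C_e)$ has not been blocked'' in line~6 of Algorithm~\ref{alg:abstract-cr} will fail, regardless of when $e$ is picked). In either sub-case $S_e^{t+1} = S_e^t$ almost surely, which agrees with the formula since $Y_e^t = 0$ on the right-hand side. On the event $\{Y_e^t = 1\}$ we have $S_e^t = Z_e^t = 0$, so $e$ still sits in the pool of the $n-t$ unseen elements and its controller is still unblocked; the uniform choice in line~3 picks $e$ at step $t+1$ with probability exactly $\frac{1}{n-t}$, and conditionally on $e \in R(x)$ (which is folded into the setup of the characteristic sequences) this pick turns into an acceptance.

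With the identity in hand, the conclusion of the lemma follows from a single chain:
\[
\excond{Z_{e}^{t+1}-Z_{e}^{t}}{{\cal F}^{t}} \;\le\; \frac{\lambda \cdot Y_e^t}{n-t} \;=\; \lambda \cdot \excond{S_{e}^{t+1}-S_{e}^{t}}{{\cal F}^{t}},
\]
where the first inequality is the $\lambda$-boundedness hypothesis and the second step is the identity above.

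There is no genuine obstacle here; the only point requiring care is keeping straight the implicit conditioning built into the characteristic-sequence framework (conditioning on $e \in R(x)$ and on the choice of $C_e$, together with the postponement of the other $C_f$'s discussed in Section~\ref{sec:character}), so that the probability that $e$ is picked next is cleanly $\frac{1}{n-t}$ and is not contaminated by other sources of randomness.
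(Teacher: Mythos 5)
Your proposal is correct and follows essentially the same route as the paper: both establish the exact identity $\excond{S_{e}^{t+1}-S_{e}^{t}}{{\cal F}^{t}}=\frac{Y_e^t}{n-t}$ by the same case split on $Y_e^t\in\{0,1\}$ and then chain it with the $\lambda$-boundedness hypothesis. No substantive differences.
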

\begin{proof}

Consider step $t+1$ of the process. We claim the following relationship
\[
\excond{S_{e}^{t+1}-S_{e}^{t}}{{\cal F}^{t}}=\frac{Y_{e}^{t}}{n-t}.
\]
We check this relation by a case-work. If $Y_{e,i}^{t}=0$, then we
already know the fate of $e$. In this case we either have $S_{e,i}^{t}=S_{e,i}^{t+1}=0$
if $e$ has been blocked, or we have $S_{e,i}^{t}=S_{e,i}^{t+1}=1$,
if we have taken $e$ before step $t$. In both cases left-hand
side and right-hand side are equal 0. Now if $Y_{e}^{t}=1$, then
we know that 1) we have not chosen $e$ in line~\ref{algline:abstact-cr-pick}
 before, and 2) $(e,C_{e})$ has not been blocked.
Then we can
 pick $e$ in step $t$ with probability $\frac{1}{n-t}$, what
means exactly that $S_{e}^{t}=0$ but $S_{e}^{t+1}=1$. 
The claim follows.
\end{proof}

\begin{lem}
\label{lem:martingale}
Suppose characteristic sequences of $e$ are $\lam$-bounded.
Then process $\br{(1+\lam)\cdot S_{e}^{t}+Y_{e}^{t}}_{t=0}^n$
is a submartingale.
\end{lem}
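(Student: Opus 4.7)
The plan is to reduce the submartingale condition to the inequality already established in Lemma~\ref{lem:deltas}. First I would rewrite the candidate process using the identity $Y_e^t = 1 - S_e^t - Z_e^t$ to get
\[
M_t := (1+\lambda)\cdot S_e^t + Y_e^t = 1 + \lambda\cdot S_e^t - Z_e^t,
\]
which makes the increment telescope cleanly into
\[
M_{t+1} - M_t = \lambda\br{S_e^{t+1} - S_e^t} - \br{Z_e^{t+1} - Z_e^t}.
\]

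Next I would verify the three defining conditions of a submartingale. Measurability with respect to $\mathcal{F}^t$ is immediate because $S_e^t, Z_e^t, Y_e^t$ are all determined by the history of choices and controller updates up to step $t$. Integrability is trivial since each of $S_e^t, Z_e^t, Y_e^t$ takes values in $\{0,1\}$, so $|M_t| \le 1+\lambda$. The main step is the drift condition: taking conditional expectation of the displayed increment gives
\[
\excond{M_{t+1} - M_t}{\mathcal{F}^t} = \lambda\cdot\excond{S_e^{t+1} - S_e^t}{\mathcal{F}^t} - \excond{Z_e^{t+1} - Z_e^t}{\mathcal{F}^t},
\]
which is nonnegative by Lemma~\ref{lem:deltas}. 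This completes the argument.

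I do not anticipate a real obstacle here, as the statement is essentially a repackaging of Lemma~\ref{lem:deltas}: the coefficient $(1+\lambda)$ in front of $S_e^t$ is exactly the constant needed to absorb a unit decrement of $Y_e^t$ from either selection (which changes $S_e^t$ by $+1$ and $Y_e^t$ by $-1$) or blocking (which changes $Z_e^t$ by $+1$ and $Y_e^t$ by $-1$), so the bookkeeping only needs to be careful in the rewriting step, after which the inequality drops out immediately.
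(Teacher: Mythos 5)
Your proof is correct and follows essentially the same route as the paper: both rely on the identity $Y_e^t = 1 - S_e^t - Z_e^t$ together with Lemma~\ref{lem:deltas} to establish the drift condition. Your substitution $M_t = 1 + \lambda S_e^t - Z_e^t$ is a slightly tidier way to organize the same algebra, and explicitly checking measurability and integrability is a harmless (if routine) addition.
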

\begin{proof}
Recall that $Y_{e}^{t}=1-S_{e}^{t}-Z_{e}^{t}$.
From Lemma~\ref{lem:deltas} we have
\begin{eqnarray*}
\excond{Y_{e}^{t}-Y_{e}^{t+1}}{{\cal F}^{t}} & = & \excond{S_{e}^{t+1}+Z_{e}^{t+1}-S_{e}^{t}-Z_{e}^{t}}{{\cal F}^{t}} = \\
 & = & \excond{S_{e}^{t+1}-S_{e}^{t}+Z_{e}^{t+1}-Z_{e}^{t}}{{\cal F}^{t}} \le \\
 & \le & (1+\lam)\cdot\excond{S_{e}^{t+1}-S_{e}^{t}}{{\cal F}^{t}},
\end{eqnarray*}
\begin{eqnarray*}
\excond{\br{(1+\lam)\cdot S_{e}^{t+1}+Y_{e}^{t+1}}-\br{(1+\lam)\cdot S_{e}^{t}+Y_{e}^{t}}}{{\cal F}^{t}} =&\\
= \quad\excond{(1+\lam)\cdot\br{S_{e}^{t+1}-S_{e}^{t}}-\br{Y_{e}^{t}-Y_{e}^{t+1}}}{{\cal F}^{t}} \geq & 0,
\end{eqnarray*}
which means that the process $\br{(1+\lam)\cdot S_{e}^{t}+Y_{e}^{t}}_{t=0}^n$
is indeed a submartingale.
\end{proof}

\begin{lem}
\label{lem:cr}
Suppose a random-order CR scheme yields a controller mechanism with $\lam$-bounded characteristic sequences.
Then the probability that $e$ does not get blocked before it is picked is at least $\frac{1}{1 + \lam}$.
\end{lem}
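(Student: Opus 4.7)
My plan is to apply Doob's Optional-Stopping Theorem to the submartingale $M_e^t := (1+\lambda)\cdot S_e^t + Y_e^t$ identified in Lemma~\ref{lem:martingale}. The natural stopping time is $\tau := \min\set{t : Y_e^t = 0}$, i.e., the first step at which the fate of $e$ has been revealed. Since the process takes at most $n$ steps and every element's fate is determined by the end of the execution, $\tau$ is bounded by $n$ and hence has finite expectation; moreover, the increments $|M_e^{t+1}-M_e^t|$ are bounded by $1+\lambda$ almost surely, so all the hypotheses of the theorem are satisfied. I should also verify that $\tau$ is indeed a stopping time with respect to $\mathcal{F}^t$, which is immediate because $Y_e^t$ is $\mathcal{F}^t$-measurable by construction (knowing whether the fate is revealed at each step is part of the information available after step $t$).

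Next I would evaluate the submartingale at the endpoints. At $t=0$ we have $S_e^0 = Z_e^0 = 0$ and $Y_e^0 = 1$, so $M_e^0 = 1$. At $t = \tau$ we have $Y_e^\tau = 0$ by definition, so $M_e^\tau = (1+\lambda)\cdot S_e^\tau$. Doob's theorem then yields
\[
(1+\lambda)\cdot \ex{S_e^\tau} \;=\; \ex{M_e^\tau} \;\ge\; \ex{M_e^0} \;=\; 1.
\]
Since $S_e^\tau \in \set{0,1}$ and $S_e^\tau = 1$ precisely on the event that $e$ was picked (equivalently, $(e, C_e)$ was not blocked before $e$ appeared in the random order), we conclude
\[
\pr{e \text{ is not blocked before it is picked} \,\big|\, e \in R(x),\, C_e} \;=\; \ex{S_e^\tau} \;\ge\; \frac{1}{1+\lambda}.
\]

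Finally, since this bound holds uniformly over every choice of controller $C_e$ and conditional on $e \in R(x)$, taking expectation over the random choice of $C_e$ (which is independent of whether $e \in R(x)$) preserves the same lower bound, yielding the claimed conditional probability. The main obstacle I anticipate is only a~bookkeeping one: making sure that the conditioning on $C_e$ and on $e \in R(x)$ is compatible with the filtration $(\mathcal{F}^t)$ used in Lemma~\ref{lem:martingale}, which is handled by the observation already made in Section~\ref{sec:character} that the controllers of other elements may be assumed to be sampled lazily, only at the moment they are needed.
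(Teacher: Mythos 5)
Your proof is correct and mirrors the paper's argument exactly: both pass through the submartingale $(1+\lambda)S_e^t + Y_e^t$ from Lemma~\ref{lem:martingale}, apply Doob's optional-stopping theorem at $\tau = \min\{t : Y_e^t = 0\}$, and then observe the bound is uniform over the choice of $C_e$. The only difference is that you spell out the bounded-increment and measurability hypotheses of Doob's theorem a bit more explicitly, which the paper leaves implicit.
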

\begin{proof}
Lemma~\ref{lem:martingale} guarantees that process $\br{(1+\lam)\cdot S_{e}^{t}+Y_{e}^{t}}_{t=0}^n$
is a submartingale.
Let $\tau=\min\setst t{Y_{e}^{t}=0}$ denote the
first moment when we get to know what happens with $e$. Since $\tau$
is a bounded (always $\tau\leq n$) stopping time, we can take advantage of the
Doob's stopping theorem to get
\[
\ex{(1+\lam)\cdot S_{e}^{0}+Y_{e}^{0}}\leq\ex{(1+\lam)\cdot S_{e}^{\tau}+Y_{e}^{\tau}}.
\]
Since $S_{e}^{0}=0=Y_{e}^{\tau}$ and $Y_{e}^{0}=1$, we have
\[
1=\ex{(1+\lam)\cdot S_{e}^{0}+Y_{e}^{0}}\leq\ex{(1+\lam)\cdot S_{e}^{\tau}+Y_{e}^{\tau}}=(1+\lam)\cdot\ex{S_{e}^{\tau}},
\]
and so $\ex{S_{e}^{\tau}}\geq\frac{1}{1+\lam}$.
Now one just has to note that $\prcond{e\mbox{ is available to take when picked}}{C_{e}}$
is exactly equal to $\ex{S_{e}^{\tau}}$
(conditioning on $C_{e}$ comes from the fact that the derivation
is performed this particular controller).
Since this holds for any choice of the controller $C_{e}$,
we get the same bound unconditionally.
\end{proof}

Thus the probability that element $e$ will be taken into the solution under condition $e \in R(x)$ is at least $\frac{1}{1+\lam}$.
By combining Corollary~\ref{lem:matroid-lam} and Lemma~\ref{lem:cr} we finish the proof of Theorem~\ref{thm:matroid-cr}.

\subsection{Combining constraints}
\label{sec:matroid-multi-cr}
Suppose now that we are given $k$ constraints ${\cal I}_{1},{\cal I}_{2},\dots, {\cal I}_{k}$.
The combination of the mechanisms is simple.
We assign each element $k$ controllers independently with respect to each constraint.
We scan elements in a random order and when an element gets accepted we independently update each controller mechanism.
An~element gets blocked if it is blocked in at least one constraint.

The correctness of the mechanism, i.e., the fact that we return a
set that is independent in all constraints, is clear. We need to argue
for the approximation ratio to be proper.
Let us refer to the characteristic
sequences of the $i$-th constraint
as $(^{i}S_{e}^{t}),\, (^{i}Z_{e}^{t}),\, (^{i}Y_{e}^{t})$.
In order to construct the characteristic
sequences describing the joint mechanism,
observe that an element gets blocked if at least one of its controllers gets blocked, it gets accepted if it is accepted in all constraints, and we get to know its fate if it is revealed in at least one constraint.
Recall that $Y_e^t = 1$ stands for fate of $e$ \textbf{not being} revealed before step $t+1$.
This can be summarized as
\begin{eqnarray*}
Z_{e}^{t}&=&\max\br{^{1}Z_{e}^{t},{}^{2}Z_{e}^{t},...,{}^{k}Z_{e}^{t}}, \\
S_{e}^{t}&=&\min\br{^{1}S_{e}^{t},{}^{2}S_{e}^{t},...,{}^{k}S_{e}^{t}}, \\ 
Y_{e}^{t}&=&\min\br{^{1}Y_{e}^{t},{}^{2}Y_{e}^{t},...,{}^{k}Y_{e}^{t}}.
\end{eqnarray*}
We call these the \emph{joint} characteristic sequences of $e$. The relationship between $Z_{e}^{t}$, $S_{e}^{t}$ and
$Y_{e}^{t}$ becomes again
$Y_{e}^{t}=1-S_{e}^{t}-Z_{e}^{t}$.

\begin{lem}
\label{lem:deltas-multi}
Suppose the characteristic sequences for the $i$-th constraint are $\lam_i$-bounded.
Then the joint characteristic sequences are $\br{\sum_i \lam_i}$-bounded.
\end{lem}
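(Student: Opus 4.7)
The plan is to dominate the joint increment $Z_e^{t+1}-Z_e^{t}$ by a sum of per-constraint increments in a pointwise sense, then pass to conditional expectations and apply the per-constraint hypothesis. The case analysis is driven by the value of $Y_e^{t}$.

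First I would prove the pointwise inequality
\[
Z_e^{t+1}-Z_e^{t}\;\le\;\sum_{i=1}^{k}\br{{}^{i}Z_e^{t+1}-{}^{i}Z_e^{t}}.
\]
Each individual process $({}^{i}Z_e^{t})$ is $\{0,1\}$-valued and non-decreasing under the freezing convention of Lemma~\ref{lem:deltas}, so every summand on the right is non-negative, and whenever the joint $\max_{i}{}^{i}Z_e$ flips from $0$ to $1$, at least one individual must do the same and contribute at least one unit to the sum. Taking conditional expectation of this inequality and invoking $\lambda_{i}$-boundedness of each constraint yields
$\excond{Z_e^{t+1}-Z_e^{t}}{{\cal F}^{t}}\le\sum_{i}\frac{\lambda_{i}\cdot{}^{i}Y_e^{t}}{n-t}$.

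In the case $Y_e^{t}=1$, the relation $Y_e^{t}=\min_{i}{}^{i}Y_e^{t}$ forces every ${}^{i}Y_e^{t}=1$, so the right-hand side collapses to $\frac{\sum_{i}\lambda_{i}}{n-t}$, matching the target. In the complementary case $Y_e^{t}=0$, I would instead verify $Z_e^{t+1}=Z_e^{t}$ directly, so both sides of the desired inequality vanish: if joint $Z_e^{t}=1$ we invoke monotonicity of the joint $\max$, and if instead $S_e^{t}=1$ then $\min_{i}{}^{i}S_e^{t}=1$ gives every ${}^{i}S_e^{t}=1$, which via the per-constraint identity ${}^{i}Y_e^{t}=1-{}^{i}S_e^{t}-{}^{i}Z_e^{t}$ forces every ${}^{i}Z_e^{t}=0$, and the freezing convention keeps them pinned at $0$ thereafter, so $\max_{i}{}^{i}Z_e$ stays $0$.

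The main point of care is this second case: once one constraint commits the joint fate, nothing in the pointwise bound by itself prevents another constraint's ${}^{j}Z_e$ from flipping inside the joint algorithm. The freezing convention for individual sequences -- which is exactly what makes the identity ${}^{i}Y_e^{t}=1-{}^{i}S_e^{t}-{}^{i}Z_e^{t}$ consistent in Lemma~\ref{lem:deltas} -- is what rescues this step, and once that is adopted the argument reduces to pointwise domination plus linearity of expectation.
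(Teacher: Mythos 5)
Your proof is correct and follows essentially the same route as the paper's: both arguments split on whether $Y_e^t$ equals $0$ or $1$, both exploit that $Y_e^t=\min_i{}^iY_e^t$ forces all $^iY_e^t=1$ when $Y_e^t=1$, and both bound the joint increment of $Z_e$ by a sum of per-constraint increments — the paper via $\max_i a_i - \max_i b_i \le \max_i(a_i-b_i) \le \sum_i(a_i-b_i)$ inside the conditional expectation, you via the equivalent pointwise domination followed by linearity. Your handling of the $Y_e^t=0$ case splits on the joint $Z_e^t$ versus $S_e^t$ rather than on a single $^iY_e^t=0$ constraint as the paper does, but this is a cosmetic repackaging of the same reasoning, and your explicit appeal to the freezing convention merely spells out what the paper states informally.
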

\begin{proof}
If $Y_e^t = 0$, then for some $i$ we have $^iY_e^t = 0$,
i.e., the fate of $e$ has been revealed in the the $i$-th constraint.
There are two cases: either  $^iZ_e^t = 1$ or  $^iS_e^t = 1$.
In the first case we have $Z_e^t = Z_e^{t+1} = 1$.
If $^iS_e^t = 1$, then the element $e$ has been picked before step $t$ and either it got accepted in all constraints or it had been blocked before in some other constraint.
In both cases be have $Z_e^t = Z_e^{t+1}$.

If $Y_e^t = 1$, then it holds $^iY_e^t = 1$ for all $i$.
We estimate the probability of any event $\br{Z_{e}^{t+1} > Z_{e}^{t}}$ by the union bound, obtaining
\begin{eqnarray*}
\excond{Z_{e}^{t+1}-Z_{e}^{t}}{{\cal F}^{t}} & = & \excond{\max\br{^{1}Z_{e}^{t+1},{}^{2}Z_{e}^{t+1},...,{}^{k}Z_{e}^{t+1}}-\max\br{^{1}Z_{e}^{t},{}^{2}Z_{e}^{t},...,{}^{k}Z_{e}^{t}}}{{\cal F}^{t}} \le \\
 & \leq & \excond{\max\br{^{1}Z_{e}^{t+1}-{}^{1}Z_{e}^{t},{}^{2}Z_{e}^{t+1}-{}^{2}Z_{e}^{t},...,{}^{k}Z_{e}^{t+1}-{}^{k}Z_{e}^{t}}}{{\cal F}^{t}} \le \\
 & \le & \excond{\sum_{i}{}^{i}Z_{e}^{t+1}-{}^{i}Z_{e}^{t}}{{\cal F}^{t}} = \sum_{i}\excond{{}^{i}Z_{e}^{t+1}-{}^{i}Z_{e}^{t}}{{\cal F}^{t}} \le  \\
 & \le & \sum_i \frac{\lam_i\cdot ^iY_e^t}{n-t} =  \frac{\sum_i\lam_i\cdot Y_e^t}{n-t}.
\end{eqnarray*}
\end{proof}

\crscheme*
\begin{proof}
The claims follows from Corollary~\ref{lem:matroid-lam} and Lemmas~\ref{lem:cr} and~\ref{lem:deltas-multi}.
\end{proof}

\section{Multi-parameter mechanism design}
\label{sec:bmumd}

Recall that each client $i \in\cal I$ is interested in purchasing one service from ${\cal J}_i$ and their valuation of an item $c \in {\cal J}_i$ is modeled by a random variable $v_c$, independent of other valuations, with a known distribution $D_c$.
Following~\cite{DBLP:conf/ipco/GuptaN13} we assume that the distribution $D_c$ is always discrete and takes values over ${\cal B} = \set{0,1,\ldots,B}$.

\subsection{Bounding by auction with copies}
Imagine
a setting where for each item $c \in {\cal J}_{i}$ we create an independent copy-client $c$ interested solely in this item. The new
instance has the same constraint system as the original one plus additional partition matroid.
We rely on the crucial lemma by Chawla et al.~\cite{DBLP:conf/stoc/ChawlaHMS10}, saying
that the optimal revenue in the new instance can be only greater 
because the competition increases.

This observation allows us to obtain an LP upperbound for the true OPT.
The linear program \textsc{Bmumd-LP}~\cite{DBLP:conf/ipco/GuptaN13} models the auction with copy-clients, which is single-parameter.
$C$ denotes the set of copy-clients, which is equivalent to the set of items, and ${\cal P}$ is the polytope of the constraint system.
\begin{eqnarray*}
\max & \sum_{c\in C}\sum_{p}x_{c,p}\cdot p\cdot\pr{v_{c}\geq p}&\quad\mbox{ \textsc{(Bmumd-LP)}}\\
\mbox{s.t.} & \br{\sum_{p}x_{c,p}\cdot\pr{v_{c}\geq p}}_{c}\in{\cal P}\\
 & \sum_{p}x_{c,p}\leq1 & \forall c\in C\\
 & \sum_{c\in{\cal J}_{i}}\sum_{p}x_{c,p}\cdot\pr{v_{c}\geq p}\leq1 & \forall i\in {\cal I}
\end{eqnarray*}

\begin{lem}[\cite{DBLP:conf/stoc/ChawlaHMS10, DBLP:conf/ipco/GuptaN13}]
\label{lem:chawla}
The optimal value of \textsc{Bmumd-LP} is an upper bound for
the maximal revenue in the multi-parameter auction.
\end{lem}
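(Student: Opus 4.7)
The plan is to prove the statement in two reductions: first pass from the multi-parameter instance to a single-parameter auction on item copies, then relax the revenue of the optimal copy auction to \textsc{Bmumd-LP} via Myerson's characterization.

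For the first reduction (due to Chawla et al.), introduce for every item $c\in{\cal J}_i$ an independent copy-bidder interested solely in $c$, whose valuation is drawn from $D_c$. The feasibility of the copy auction is inherited from the multi-parameter instance: allocations must lie in the polytope $\mathcal{P}$, and at most one copy from ${\cal J}_i$ can be served, reflecting the fact that client $i$ is unit-demand. Given any truthful multi-parameter mechanism $M$, define a mechanism $M'$ for the copy instance by collecting the copy reports, reassembling profiles $(\hat v_c)_{c\in{\cal J}_i}$, running $M$ on them, and passing on its allocations and payments. Each copy is a single-parameter agent whose truth-telling optimizes the same coordinate that truthfulness of $M$ already guarantees, so $M'$ is truthful and has the same expected revenue as $M$. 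Therefore $\mathrm{OPT}_{\mathrm{mp}} \le \mathrm{OPT}_{\mathrm{copies}}$.

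For the second reduction, take the revenue-optimal truthful mechanism $M^*$ for the copy auction. For each copy $c$ let $q_c(v)=\pr{M^*\text{ allocates }c\mid v_c=v}$, taken over the randomness of $M^*$ and of the other copies' valuations. By Myerson's monotonicity theorem, $q_c$ is non-decreasing on ${\cal B}$, so its forward differences $x_{c,p}=q_c(p)-q_c(p-1)$ (with $q_c(-1)=0$) are non-negative and $q_c(v)=\sum_{p\le v}x_{c,p}$. A direct discrete computation using Myerson's payment formula shows that the expected payment of copy $c$ equals $\sum_p x_{c,p}\cdot p\cdot \pr{v_c\ge p}$, so the total revenue of $M^*$ equals the \textsc{Bmumd-LP} objective evaluated at $x$. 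The LP constraints are inherited immediately: $\sum_p x_{c,p}=q_c(B)\le 1$; the vector $\bigl(\sum_p x_{c,p}\pr{v_c\ge p}\bigr)_c$ equals the expected allocation vector of $M^*$, which lies in $\mathcal{P}$; and $\sum_{c\in{\cal J}_i}\sum_p x_{c,p}\pr{v_c\ge p}\le 1$ follows from the unit-demand constraint imposed on the copy auction.

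The only step requiring genuine work is the discrete Myerson identity. Truthfulness forces the expected payment when $v_c=v$ to equal $v\,q_c(v)-\sum_{u<v}q_c(u)$; substituting $q_c(v)=\sum_{p\le v}x_{c,p}$ and swapping the order of summation collapses the expression to $\sum_{p\le v}p\cdot x_{c,p}$, whose expectation over $v_c$ is precisely $\sum_p x_{c,p}\cdot p\cdot \pr{v_c\ge p}$. Everything else is routine bookkeeping, and combining the two reductions yields the claimed bound.
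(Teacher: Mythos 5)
The paper cites this lemma without proof, so I'll judge your argument on its own terms. Your overall plan (reduce to the single-parameter copies instance, then apply Myerson to get the LP bound) is the right one, and your second reduction is sound: the interim allocation rule $q_c$ of the optimal copy mechanism is monotone by Myerson, the forward differences $x_{c,p}=q_c(p)-q_c(p-1)$ are nonnegative, the discrete payment-identity calculation is correct, and the three LP constraints are inherited exactly as you say.

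The first reduction, however, has a genuine gap. You claim that the mechanism $M'$, obtained by feeding the copies' reports back into the multi-parameter mechanism $M$, is truthful ``because each copy optimizes the same coordinate that truthfulness of $M$ already guarantees.'' This does not follow: a copy's utility function differs from the original agent's. Consider a client $i$ with two items $a,b$ and true values $v_a=10,\,v_b=5$. Suppose $M$ on the truthful report allocates $a$ at price $8$ (utility $2$ for $i$), and $M$ on the report $(10,100)$ allocates $b$ at price $4$. This is consistent with $M$ being truthful for agent $i$: a unilateral deviation to $\hat v_b=100$ would yield agent $i$ utility $5-4=1<2$. But in $M'$, copy $b$'s truthful utility is $0$ (it is not served), while deviating to $\hat v_b=100$ yields utility $5-4=1>0$. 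So copy $b$ has a profitable deviation and $M'$ is not truthful. In general, the IC constraint for copy $c$ when $c$ is not the item $M$ would serve agent $i$ reads $v_c-p'\le 0$, whereas multi-parameter IC only gives $v_c-p'\le u^*$ (the agent's truthful utility), and these differ whenever $u^*>0$. The actual copies lemma of Chawla et al.\ requires a more delicate construction (e.g., replacing $M$'s payments with Myerson critical prices together with an argument that the induced interim allocation is amenable to this), and cannot be obtained by direct simulation. Everything downstream of this step in your proof is fine.
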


\subsection{Single client routine}

The algorithm scans clients in random order, and presents a price
menu to each client, from which the client picks one item which gives
him the highest utility, or resigns from choosing if all utilities are negative.
Such a procedure clearly yields a truthful mechanism.
Let $x_{c,p}$ be the probability that we place item $c \in {\cal J}_{i}$ with price $p$ in the menu of client $i$.
The vector $\mathbf{x} = (x_{c,p})$ describing randomized menu for client $i$ must satisfy following constraints. We will call it a \emph{menu-vector}.
\hspace{-1.5cm}
\begin{eqnarray*}
 & \sum_{p}x_{c,p}\leq1 & \forall c\in {\cal J}_{i}\\
 & \sum_{c\in{\cal J}_{i}}\sum_{p}x_{c,p}\cdot\pr{v_{c}\geq p}\leq1
\end{eqnarray*}

Given menu-vector $\mathbf{x}$, we construct the menu as follows.
Independently for each item $c$ we choose price $p$ with probability $x_{c,p}$ and discard the item with probability $1 - \sum_p x^t_{c,p}$.
Then the client reveals their utilities for each item.
We define $\evx {} c p$ to be the event of the item $c$ with price $p$ being at the top of the menu.
To ensure that it is well-defined we need to fix a mechanism to break the ties between items of equal utility to the client, e.g., lexicographically or by random choice. 
However we do not need to know the mechanism explicitly for the analysis sake.

The following lemma describes how to construct a menu-vector with almost tight guarantees on probabilities of item acceptance and rejection.
The proof, based on $O(1/\eps^2)$ rounds of a local search procedure, is located in Section~\ref{sec:single-client}.

\begin{restatable}{lem}{singleclient}
\label{lem:one-client-perfect}
Suppose we can compute values $\pr{\evx {} c p}$ in a polynomial time for a known menu-vector.
Then for any $\eps>0$ there is a polynomial-time procedure that, given menu-vector $\mathbf{x}$, finds another menu-vector $\mathbf{y}$, such that for each $c,p$:
\hspace{-1cm}
\[
\frac{1}{4}x_{c,p}\cdot\pr{v_{c}\geq p} \le \pr{\mathbf{Y}_{c,p}} \le \Big(\frac{1}{4} + \eps\Big)\cdot x_{c,p}\cdot\pr{v_{c}\geq p}.
\]
\end{restatable}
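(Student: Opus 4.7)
The plan is to begin with the natural half-scaled menu $\mathbf{y}^{(0)} = \mathbf{x}/2$, which already meets the lower bound by a Markov argument, and then to run a short local search to repair the upper bound. For the lower bound on $\mathbf{y}^{(0)}$: the menu-vector constraint $\sum_{c,p}x_{c,p}\pr{v_{c}\geq p}\leq 1$ implies that with $\mathbf{y}=\mathbf{x}/2$ the expected number of menu items whose value exceeds their posted price is at most $1/2$. By independence of different items' placements and valuations, conditioning on $(c,p)$ being offered and acceptable (a joint event of probability $y_{c,p}\pr{v_{c}\geq p}$) does not change the bound on the expected number of \emph{other} acceptable items, so Markov's inequality gives probability at least $1/2$ that no other acceptable item is offered. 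In that case $(c,p)$ is the only item the client can consider and must be at the top, whence
\[
\pr{\mathbf{Y}_{c,p}} \;\geq\; \tfrac{1}{2}\, y_{c,p}\pr{v_{c}\geq p} \;=\; \tfrac{1}{4}\, x_{c,p}\pr{v_{c}\geq p}.
\]
The matching trivial bound $\pr{\mathbf{Y}_{c,p}} \leq y_{c,p}\pr{v_{c}\geq p}$ is at this point too weak by a factor of up to $2$.

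Next, I would iteratively correct the upper bound: while some ratio $r_{c,p}(\mathbf{y}) := \pr{\mathbf{Y}_{c,p}}/\bigl(x_{c,p}\pr{v_{c}\geq p}\bigr)$ exceeds $\tfrac{1}{4}+\eps$, decrease $y_{c,p}$ by a step $\delta=\Theta(\eps)$. The polynomial-time evaluation of $\pr{\mathbf{Y}_{c,p}}$ assumed in the statement is exactly what drives each iteration. Shrinking $y_{c,p}$ monotonically decreases $r_{c,p}$ (it tends to $0$ as $y_{c,p}\to 0$), and only \emph{weakens} the competition faced by every other item, so no other ratio can fall below $\tfrac{1}{4}$ during the search. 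To bound the iteration count I would use a convex Lyapunov function such as
\[
\Phi(\mathbf{y}) \;=\; \sum_{c,p}\frac{\pr{\mathbf{Y}_{c,p}}^{2}}{x_{c,p}\pr{v_{c}\geq p}},
\]
and argue via a first-order derivative computation, using the independence of menu placements across items, that each step on a violator decreases $\Phi$ by $\Omega(\eps^{2})$. Since $\Phi$ lies in $[0, O(1)]$ by the menu-vector constraints, the search terminates in $O(1/\eps^{2})$ rounds with $r_{c,p}\in [\tfrac{1}{4},\tfrac{1}{4}+\eps]$ for every $(c,p)$.

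The main obstacle is controlling the ripple effect of a shrinking step on the other probabilities $\pr{\mathbf{Y}_{c',p'}}$. The ``top of menu'' event couples all offered items through utility ordering, so the cross-derivative $\partial \pr{\mathbf{Y}_{c',p'}}/\partial y_{c,p}$ is negative and not individually small; while this helps the lower bound, it also contributes to $\Phi$ in the direction opposite to the first-order gain at the violator. The natural route is to bound each $|\partial \pr{\mathbf{Y}_{c',p'}}/\partial y_{c,p}|$ in terms of $\pr{\mathbf{Y}_{c',p'}}$ (using that menu placements are independent Bernoullis) and then aggregate via the menu-vector inequality $\sum y_{c',p'}\pr{v_{c'}\geq p'}\leq 1$; this should show that the sum of cross-terms is dominated by the direct decrease at $(c,p)$ for any violator. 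A secondary subtlety is that a shrinking step may push a previously acceptable ratio just above $\tfrac{1}{4}+\eps$, so the violator set need not shrink monotonically; the global Lyapunov view circumvents this, since only $\Phi$ is required to decrease per iteration.
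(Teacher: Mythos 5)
Your opening move and lower bound match the paper's: start from $\mathbf{y}^{(0)}=\mathbf{x}/2$ and use a union/Markov argument to get $\pr{\mathbf{Y}_{c,p}}\ge\frac14 x_{c,p}\pr{v_c\ge p}$ (this is exactly Lemma~\ref{lem:one-client-first}), and the high-level idea of iteratively shrinking violators is also the paper's. Where you diverge, and where a genuine gap opens, is in the convergence analysis. The paper does \emph{not} use a Lyapunov argument; it scales \emph{all} violators simultaneously by a multiplicative $(1-\eps)$ and controls the ripple effect by an explicit coupling: the event $\mathcal{E}^t(c_1,p_1,c,p)$ of victory transferring from $(c_1,p_1)$ to $(c,p)$ has probability at most $\eps\,x^t_{c_1,p_1}\pr{v_{c_1}\!\ge p_1}\cdot x^t_{c,p}\pr{v_c\!\ge p}$, and summing over $(c_1,p_1)$ and using the menu-vector constraint gives the crucial bound $\sum_{(c_1,p_1)}\pr{\mathcal{E}^t(c_1,p_1,c,p)}\le\eps q(c,p)$, uniformly in the number of violators. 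With this, the paper obtains a clean per-coordinate monotonicity argument: violators' ratios drop by $\ge 2\eps^2$ per phase, non-violators' ratios stay below $1+3\eps$, and the invariant $\pr{\mathbf{X}^t_{c,p}}\ge q(c,p)$ is preserved; hence $O(1/\eps^2)$ phases suffice.

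In your proposal, this coupling estimate --- precisely the step you flag as "the main obstacle" --- is left unproved, and without it the Lyapunov claim does not close. If one actually differentiates $\Phi=\sum_{c,p}\pr{\mathbf{Y}_{c,p}}^2/(x_{c,p}\pr{v_c\ge p})$ under a $(1-\eps)$ scaling of one violator, the direct decrease is $\approx 2\eps\, r^2 q_{c,p}$ where $r=\pr{\mathbf{Y}_{c,p}}/q_{c,p}>1+4\eps$, while (using the tightest aggregation you sketch) the total cross-term increase is $\approx \sum_{(c',p')\ne(c,p)}\bigl(\pr{\mathbf{Y}'_{c',p'}}+\pr{\mathbf{Y}_{c',p'}}\bigr)\Delta\pr{\mathbf{Y}_{c',p'}}/q_{c',p'}$, which can be as large as $\approx 4\eps\, q_{c,p}$ unless the other ratios are already known to be $\le 1+O(\eps)$ --- i.e.\ unless you already have the invariants that the coupling bound is needed to establish. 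Furthermore, the Lyapunov decrement scales with $q_{c,p}$, so a naive $\Phi\in[0,O(1)]$ bound does not yield a uniform $O(1/\eps^2)$ round count (some $q_{c,p}$ may be exponentially small), and the one-violator-at-a-time update makes the invariant maintenance genuinely harder than the paper's simultaneous scaling, because ratios of items you have already "repaired" can be pushed back above threshold by later shrink steps. So the proposal has the right scaffolding but is missing the coupling estimate that does the real work, and the Lyapunov route as stated does not substitute for it.
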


\subsection{The algorithm}

With the subroutine to handle a single client, we are ready to prove the main result of this paper.
\bmumd*

\begin{algorithm}
\caption{\label{alg:abstract-auction}Auction mechanism}
\begin{algorithmic}[1]
\STATE assign each item $c$ a controller $C_c$

\STATE \textbf{for} each client $i \in I$ in random order \textbf{do}

\STATE ~~~perform \textsc{SingleClientSubroutine} (Lemma~\ref{lem:one-client-perfect}) on the non-blocked
items in $J_i$

\STATE ~~~offer the chosen item $c$ to client $i$

\STATE ~~~\textbf{if} client $i$ accepts $c$ \textbf{then}  

\STATE ~~~\textbf{~~~}update controllers
\end{algorithmic} 
\end{algorithm}

We begin with a relaxation to \textsc{Bmumd-LP} and obtain vector $\br{x_{c,p}}_{c,p}$ that supplies the auction mechanism, that is based on the random-order contention resolution scheme.
The abstract view of the auction mechanism is presented in Algorithm~\ref{alg:abstract-auction}.

\paragraph{Matroid implementation} The controller mechanism for matroids
is analogous to the one from Theorem~\ref{thm:matroid-cr}.
We decompose vector $\br{\sum_{p}x_{c}^{p}\cdot\pr{v_{c}\geq p}}_{c\in\cal C}$ into a support in matroid $\M$, that is $\sum_{i\in\cal S}\beta_{i}\cdot B_{i}^{0}$ (Lemma~\ref{prelim:support}).
We find exchange-mappings $\phi\brq{B_{i}^{0},B_{j}^{0}}$
between each pair $i,j\in\cal S$ as in Lemma~\ref{prelim:mapping}.
Then for each element $c$ we choose $j(c)\in\cal S$ such that
$c\in B_{j}$, with probability $\frac{\beta_{j}}{\sum_{p}x_{c}^{p}\cdot\pr{v_{c}\geq p}}$,
call it the controller of $c$, and denote $C_{c}^t = B_{j(c)}^t$.

When an item $c \in J_i$ gets accepted by client $i$,
we update the controllers, as presented in Algorithm~\ref{alg:matroid-auction}.
A~pair $(c,\, C_c)$ gets blocked when $c$ is removed from $C_c$.

The correctness follows again from the invariant, that the set of served items is a subset of $B_{j}^{t}$, which is an independent set, for all $j \in\cal S$.

\begin{algorithm}
\caption{\label{alg:matroid-auction}Controller mechanism update for a matroid, restated}
\begin{algorithmic}[1]
\STATE \textbf{for} each $i\in{\cal S}:c\notin B_{i}^{t}$
\textbf{do}

\STATE ~~~\textbf{if }$\phi\brq{C_{c}^{t},B_{i}^{t}}\br c=\perp$
\textbf{then} $B_{i}^{t+1}\leftarrow B_{i}^{t}+c$

\STATE ~~~\textbf{if }$\phi\brq{C_{c}^{t},B_{i}^{t}}\br c=d$
\textbf{then} $B_{i}^{t+1}\leftarrow B_{i}^{t}-d+c$

\STATE \textbf{for} each $i\in{\cal S}:c\in B_{i}^{t}$
\textbf{do}

\STATE ~~~\textbf{$B_{i}^{t+1}\leftarrow B_{i}^{t}$}

\STATE find new exchange-mappings $\phi\brq{B_{i}^{t+1},B_{j}^{t+1}}$
between each pair $i,j\in\cal S$
\end{algorithmic} 
\end{algorithm}

\paragraph{Approximation guarantee}

We are interested in estimating the probability that a fixed item $c\in{\cal J}_i$
will be served to a client $i$ at price $p$ .
In this paragraph we condition all the events on the critical set $C_c$
and we argue that it will not get blocked until the turn of client $i$ with high probability.
We retrace the reasoning from Section~\ref{sec:character}
and assign each pair $(c, C_c)$ the characteristic sequences $(S_c^t, Z_c^t, Y_c^t)$.
This time $S_c^t = 1$ carries semantics of $c$ having ended up in the menu of client $i$ before step $t+1$.

\begin{lem}
\label{lem:matroid-auction-pr}
The characteristic sequences of the auction mechanism for a matroid are $\br{\frac{1}{4} + \eps}$-bounded.
\end{lem}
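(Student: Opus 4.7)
The plan is to mimic the proof of Lemma~\ref{lem:matroid-blocking-pr}, with the single-client acceptance event replacing $\{f\in R(x)\}$ and the upper bound of Lemma~\ref{lem:one-client-perfect} playing the role of $\pr{f\in R(x)}=x_f$ in the purely matroidal argument.

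First dispose of the case $Y_c^t=0$ by a short case analysis: if $Z_c^t=1$ then $c$ has already been removed from $C_c^{t'}$ at some $t'\le t$, so no future controller update can remove it again; if instead $S_c^t=1$ then $c$ lies in the solution and hence in every set $B_j^t$, so by property~(1) of Lemma~\ref{prelim:mapping} the exchange-mapping fixes $c\mapsto c$ and by property~(2) no $d\ne c$ can be mapped onto $c$, giving again $Z_c^{t+1}=Z_c^t$. For $Y_c^t=1$, a client $i'$ is drawn uniformly from the remaining $n-t$ ones in step $t+1$. Writing
\[
\Gamma^t=\setst{(d,j)}{\phi\brq{B_j^t,C_c^t}(d)=c}\quad\text{and}\quad \mu_d=\sum_q x_{d,q}\pr{v_d\ge q},
\]
an increase $Z_c^{t+1}>Z_c^t$ requires some $(d,j)\in\Gamma^t$ with $d\in J_{i'}$ to satisfy $C_d=B_j$ together with $d$ being picked at some price $q$ by the single-client subroutine of $i'$, so a union bound gives
\[
\excond{Z_c^{t+1}-Z_c^t}{\mathcal{F}^t}\le \frac{1}{n-t}\sum_{i'}\sum_{(d,j)\in\Gamma^t,\,d\in J_{i'}}\sum_q \pr{d\text{ picked at price }q,\,C_d=B_j\mid\mathcal{F}^t,\,i'\text{ picked}}.
\]

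Bounding each joint probability is the main technical step. The algorithm consults $C_d$ only when $d$'s owning client is processed, so for any $d$ whose owning client is still remaining, $(B_j^t)_{j\in\mathcal{S}}$ has evolved as a deterministic function of the controllers of already-picked items and $\pr{C_d=B_j\mid\mathcal{F}^t}$ therefore equals the prior $\beta_j/\mu_d$. Conditioned on $C_d=B_j$, in which case being non-blocked reduces to $d\in B_j^t$ and is determined by $\mathcal{F}^t$, Lemma~\ref{lem:one-client-perfect} applied to the single-client subroutine of $i'$ yields $\pr{d\text{ picked at price }q}\le\br{\tfrac14+\eps}\,x_{d,q}\pr{v_d\ge q}$. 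Summing over $q$ telescopes to $\br{\tfrac14+\eps}\mu_d$, so multiplication by $\beta_j/\mu_d$ collapses the joint probability to $\br{\tfrac14+\eps}\beta_j$. Finally, Lemma~\ref{lem:matroid:support} supplies $\sum_{(d,j)\in\Gamma^t}\beta_j\le 1$, so the whole expression is at most $\frac{1/4+\eps}{n-t}$, which is the $\br{\tfrac14+\eps}$-boundedness claim.

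The main obstacle is the factorisation of the joint probability: one has to show that $\mathcal{F}^t$ carries no information about $C_d$ as long as $d$'s client is still in the queue, so that its posterior equals the prior. The clean way to see this is the observation above that $(B_j^t)_j$ evolves deterministically from the controllers of already-picked items alone; once this decoupling is granted, the rest is the same calculation as in Lemma~\ref{lem:matroid-blocking-pr}, with $\br{\tfrac14+\eps}\mu_d$ from Lemma~\ref{lem:one-client-perfect} taking the place of $x_d$.
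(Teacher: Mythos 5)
Your proof is correct and follows essentially the same route as the paper's: condition on picking a client, bound the menu-acceptance probability by Lemma~\ref{lem:one-client-perfect}, multiply by the controller-choice probability $\beta_j/\mu_d$, and close with Lemma~\ref{lem:matroid:support}. You are somewhat more explicit about the $Y_c^t=0$ case and the factorisation of the joint probability, which the paper handles implicitly via the convention (Section~\ref{sec:character}) that unrevealed controllers are assigned after the latest exchange-mappings are fixed.
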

\begin{proof}
We proceed as in Lemma~\ref{lem:matroid-blocking-pr}.
We need to show that probability that $(c, C_c)$ gets blocked in turn $t$ in a single matroid is at most
\[
\br{\frac{1}{4} + \eps}\cdot\frac{1}{n-t}.
\]
A blocking event happens when an item $d$ of a different agent $j$
gets chosen, that makes $c$ removed from its controller set.
The agent $j$ is chosen with probability $\frac{1}{n-t}$.
Then the agent has to pick item $d$ from the menu.
The properties of the single-agent subroutine (Lemma~\ref{lem:one-client-perfect})
guarantees that this
happens with probability at most $\br{\frac{1}{4} + \eps}\sum_{p}x_{d}^{p}\cdot\pr{v_{d}\geq p}$.
Then $d$ must be assigned a controller $j\in\cal S$ which makes $c$ removed from $C_{c}$ -- this occurs with
probability $\frac{\beta_j}{\sum_{p}x_{d}^{p}\cdot\pr{v_{d}\geq p}}$.
The total probability of any of these events is
\begin{align*}
 & \frac{1}{n-t}\sum_{d}\br{\frac{1}{4} + \eps}\sum_{p}x_{d}^{p}\cdot\pr{v_{d}\geq p}\cdot\sum_{{j}:\phi_{[B_{j}^t,C_{c}^t]}\br d=c}\frac{\beta_{j}}{\sum_{p}x_{d}^{p}\cdot\pr{v_{d}\geq p}} =\\
= & \br{\frac{1}{4} + \eps}\frac{1}{n-t}\sum_{d}\sum_{{j}:\phi_{[B_{j}^t,C_{c}^t]}\br d=c}\beta_{j} \le \\
\leq &\br{\frac{1}{4} + \eps}\frac{1}{n-t},
\end{align*}
where the last inequality follows from Lemma~\ref{lem:matroid:support}.
\end{proof}

\begin{proof}[Proof of Theorem \ref{thm:bmumd-main}]
The joint mechanism for the intersection of $k$ matroids
is given by assigning each item $k$ controllers and blocking the item if at least one of its controllers has been blocked.
After an item is picked, it gets placed in the
menu as long as it has not been blocked before.
This leads to a construction analogous to the one
from Section~\ref{sec:matroid-multi-cr}.

In Lemma~\ref{lem:matroid-auction-pr}
we have analyzed a blocking event in a single turn and for a single matroid.
We use Lemma~\ref{lem:deltas-multi} to extend this result to $k$ matroids.
Then we apply Lemma~\ref{lem:cr} to get the global probability of not being blocked in any turn and in any matroid.
\[
\prcond{c\mbox{ does not get blocked until the turn of client }i}{C_{c}}\geq\frac{1}{k\cdot\br{\frac{1}{4} + \eps} + 1}
\]

Lemma~\ref{lem:one-client-perfect} guarantees that once $c$ gets into the menu of client $i$, it will be served at price $p$ with probability at least $\frac{1}{4}\cdot x_{c,p}\cdot \pr{v_{c}\geq p}$.
By multiplying these quantities and setting the final value of $\eps$ to $\frac{\eps}{4k}$ we obtain
\[
\prcond{c\mbox{ gets served client $i$ at price $p$}}{C_{c}}\geq\frac{\frac{1}{4}\cdot x_{c,p} \cdot \pr{v_{c}\geq p}}{k\cdot\br{\frac{1}{4} +\frac{\eps}{4k}} +1} = \frac{ x_{c,p} \cdot \pr{v_{c}\geq p}}{k +4 + \eps}.
\]

Since this holds for every choice of $C_c$ we get the same bound unconditionally.
It means that the expected revenue of the mechanism is at least
$\frac{1}{k+4+\eps}$ times the optimal value of the linear program \textsc{Bmumd-LP}, what finishes the proof.
\end{proof}

\section{Submodular optimization}
\label{sec:submodular}
As another elegant application of our toolbox, as well as a building block for submodular stochastic probing, we show a framework for non-negative submodular function maximization.
We are given the following optimization task over (possibly) a sequence of constraints $({\cal I}_i)_{i=1}^k$.
\begin{eqnarray*}
\max & f\br X\\
\mbox{s.t.} & X\in{\cal I}_{1}\cap{\cal I}_{2}\cap\dots \cap{\cal I}_{k}
\end{eqnarray*}
We first need to solve the multilinear relaxation for this
problem.
\begin{eqnarray*}
\max & F\br x\\
\mbox{s.t.} & x\in\P{{\cal I}_{1}}\cap\P{{\cal I}_{2}}\cap\dots\cap\P{{\cal I}_{k}}
\end{eqnarray*}

We can approximately solve such a optimization problem with the measured continuous greedy algorithm~\cite{DBLP:conf/focs/FeldmanNS11},
that provides a~vector $x$ such that $F\br x\geq(\frac{1}{e}-\eps)\cdot f\br{X_{OPT}}$.
We are going to sample elements with respect to $x$ and execute the random-order contention resolution scheme on the sampled set with a following postprocessing: we add an accepted element $e$ to $X$ only if $f(X \cup \{e\}) > f(X)$.
We claim that this procedure returns a solution $X$ such that
$\ex{f\br X} \geq\frac{1}{\lam+1}F\br x$.

\begin{lem}
\label{lem:submodular-abstract}
Consider a sampling scheme with $\lambda$-bounded characteristic sequences, in which the chosen element $e$ materializes with probability $x_e$.
Suppose we are given a non-negative submodular function $f$
and we accept the chosen element if it materializes and taking it increases the value of $f$.
Such a procedure generates a random set $X$ such that $\ex{f(X)} \ge \frac{1}{\lam +1}F(x)$. 
\end{lem}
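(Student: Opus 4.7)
The strategy is to exhibit a submartingale $(\Phi^t)_{t=0}^n$ satisfying $\Phi^0 \ge F(x)$ and $\Phi^n = (1+\lambda)f(X)$; taking expectations then yields $(1+\lambda)\,\mathbb{E}[f(X)] \ge F(x)$, which is the desired inequality.

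I would define the fractional state $y^t\in [0,1]^E$ by setting $y^t_e = 1$ if $e\in X^t$, $y^t_e = 0$ if $e$'s fate has been resolved with $e\notin X^t$, and $y^t_e = x_e$ if $Y_e^t = 1$, and consider the potential
\[
\Phi^t \;=\; F(y^t) \;+\; \lambda \cdot f(X^t).
\]
The boundary conditions are immediate: at $t=0$ we have $y^0 = x$ and $X^0 = \emptyset$, so $\Phi^0 = F(x) + \lambda f(\emptyset) \ge F(x)$ by non-negativity of $f$; at $t=n$ every $y_e^n$ matches $\mathbf{1}[e\in X^n]$, whence $\Phi^n = f(X^n) + \lambda f(X^n) = (1+\lambda)f(X)$.

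The technical core is verifying $\mathbb{E}[\Phi^{t+1}\mid \mathcal{F}^t]\ge \Phi^t$. I would decompose the expected change into contributions from the picked element $e^\star$ (affecting $y_{e^\star}$ and $f(X^t)$) and from collateral blocking of other pending elements. Writing $m_e^t = f(X^t\cup\{e\})-f(X^t)$, a direct case check shows that $\mathbb{E}[\Delta y_{e^\star}\mid e^\star \text{ picked}] = 0$ when $m_{e^\star}^t > 0$ (the $+(1-x_{e^\star})$ shift upon materialisation cancels the $-x_{e^\star}$ shift upon non-materialisation) and equals $-x_{e^\star}$ otherwise. The key submodular identity
\[
\frac{\partial F}{\partial y_{e^\star}}\bigg|_{y^t} \;=\; \mathbb{E}_{R}\big[f(R\cup\{e^\star\})-f(R)\big] \;\le\; m_{e^\star}^t,
\]
for $R$ sampled according to $y^t$ (so that $R\supseteq X^t$ almost surely) and diminishing marginals, ensures that the product $\mathbb{E}[\Delta y_{e^\star}]\cdot \partial F/\partial y_{e^\star}|_{y^t}$ is non-negative in either regime, while the $\lambda f(X^t)$ term gains $\lambda x_{e^\star} m_{e^\star}^t$ when $m_{e^\star}^t > 0$.

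The main obstacle is dominating the collateral blocking effect. Each blocking of a pending $e$ flips $y_e$ from $x_e$ to $0$, costing $x_e\,\partial F/\partial y_e|_{y^t}$ in $F$ when this derivative is positive. By the $\lambda$-boundedness hypothesis, the total expected blocking loss in a single step is at most $\tfrac{\lambda}{n-t}\sum_e x_eY_e^t\,[\partial F/\partial y_e|_{y^t}]_+\;\le\;\tfrac{\lambda}{n-t}\sum_e x_eY_e^t\,[m_e^t]_+$, again by the submodular inequality above. On the other hand, the expected gain of the $\lambda f(X^t)$ term over a step equals $\tfrac{\lambda}{n-t}\sum_e x_eY_e^t\,[m_e^t]_+$ exactly, so the two cancel. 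Second-order cross-effects from simultaneous changes in $y_{e^\star}$ and blocked $y_e$ contribute non-negatively, since they arise only when $e^\star$ is added to $X$ (so $\Delta y_{e^\star}>0$, $\Delta y_e<0$, their product negative) and $\partial^2 F/\partial y_{e^\star}\partial y_e\le 0$ by submodularity. Hence $(\Phi^t)$ is a submartingale and the lemma follows.
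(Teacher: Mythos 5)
Your potential $\Phi^t = F(y^t) + \lambda f(X^t)$ is a legitimate and genuinely different route from the paper's, which instead uses the discrete potential $(\lambda+1)f(X^t) - f(S^t \cup Z^t)$, where $S^t$ and $Z^t$ are the (random) sets of materialized CR-accepted and materialized blocked elements. Your fractional $y^t$ encodes the same information (sampling $R\sim y^t$ always contains $X^t$, never contains resolved-not-taken elements, and contains pending elements with probability $x_e$), and your boundary values and the cancellation between the $\lambda f(X^t)$ gain and the blocking loss are accounted for correctly. The trade-off is that the paper's discrete potential avoids any multilinear expansion: the increments of $f(S^t\cup Z^t)$ are bounded directly by submodularity, and one never needs to discuss derivatives or cross-terms.

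That said, the argument as written has a concrete gap in the cross-term analysis, in two places. First, you claim cross-effects arise only when $e^\star$ is added to $X$; but blocking is triggered whenever $e^\star$ is CR-accepted (picked, unblocked, materialized), even if $m_{e^\star}^t\le 0$ so that $e^\star$ is \emph{not} added to $X$. In that case $\Delta y_{e^\star}<0$ and $\Delta y_e<0$, so the $(e^\star,e)$ cross-term has sign $(+)\cdot\partial^2 F\le 0$. Second, when several $e_1,\dots,e_k$ are blocked in the same step (which does happen — each of the $m$ convex-decomposition sets may shed its own element), the cross-terms among blocked elements are $(-x_{e_i})(-x_{e_j})\,\partial^2_{e_i,e_j}F = x_{e_i}x_{e_j}\,\partial^2 F\le 0$, again non-positive, and your argument never addresses these. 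Both issues disappear if you replace the ``all partials at $y^t$ plus cross-terms'' decomposition with a telescoping one: update $y_{e^\star}$ first, then the blocked coordinates one at a time, and bound each $\partial_{e_i}F(\tilde y^{(i)})$ at its own intermediate point. Since every such $\tilde y^{(i)}$ dominates $\chr{X^t}$ coordinate-wise (or $\chr{X^{t+1}}$, which only helps), submodularity gives $\partial_{e_i}F(\tilde y^{(i)})\le m_{e_i}^t$ without any residual cross-terms, and the $e^\star$ contribution is then handled exactly as you intended. With that repair your proof goes through.
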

\begin{proof}
This time we need to track globally the solution that we create, and not
just a particular element. Let $X^{t}$ be the solution created up to step
$t$ and $S^{t}=\setst{S_{e}^{t} = 1}{e \in E} \,\cap\, R(x)$.
For all $t$ it holds $X^t \subseteq S^t$.
Also let $Z^{t}$
be the set of all present elements that have been blocked up to step $t$,
that is, $Z^{t}=\setst{Z_{e}^{t} = 1}{e \in E} \,\cap\, R(x)$.
We are going to show that the following
sequence 
\[
\br{\br{\lam+1}\cdot f\br{X^{t}}-f\br{S^{t}\cup Z^{t}}}_{t=0}^n
\]
is a submartingale. 
Let us consider the deltas
\begin{eqnarray*}
\excond{f\br{X^{t+1}}-f\br{X^{t}}}{{\cal F}^{t}} & = & \sum_{e\in E}x_e\cdot\excond{S_{e}^{t+1}-S_{e}^{t}}{{\cal F}^{t}}\cdot\max\br{f\br{X^{t}+e}-f\br{X^{t}}, 0}, \\
\excond{f\br{S^{t+1}\cup Z^{t+1}}-f\br{S^{t}\cup Z^{t}}}{{\cal F}^{t}} & = & \sum_{e\in E}x_e\cdot\excond{S_{e}^{t+1}-S_{e}^{t}+Z_{e}^{t+1}-Z_{e}^{t}}{{\cal F}^{t}}\cdot\br{f\br{S^{t}\cup Z^{t}+e}-f\br{S^{t}\cup Z^{t}}}.
\end{eqnarray*}
From Lemma~\ref{lem:deltas} we have
$\excond{Z_{e}^{t+1}-Z_{e}^{t}}{{\cal F}^{t}}\leq \lam\cdot\excond{S_{e}^{t+1}-S_{e}^{t}}{{\cal F}^{t}}$,
and from submodularity we know that 
$f\br{S^{t}\cup Z^{t}+e}-f\br{S^{t}\cup Z^{t}}\leq f\br{S^{t}+e}-f\br{S^{t}} \le f\br{X^{t}+e}-f\br{X^{t}}.$
Therefore
\begin{eqnarray*}
&& \excond{f\br{S^{t+1}\cup Z^{t+1}}-f\br{S^{t}\cup Z^{t}}}{{\cal F}^{t}} = \\
& = & \sum_{e\in E}x_e\cdot\excond{S_{e}^{t+1}-S_{e}^{t}+Z_{e}^{t+1}-Z_{e}^{t}}{{\cal F}^{t}}\cdot\br{f\br{S^{t}\cup Z^{t}+e}-f\br{S^{t}\cup Z^{t}}} \le \\
& \le & \sum_{e\in E}x_e\cdot\excond{S_{e}^{t+1}-S_{e}^{t}+Z_{e}^{t+1}-Z_{e}^{t}}{{\cal F}^{t}}\cdot \max\br{f\br{X^{t}+e}-f\br{X^{t}}, 0} \le \\
 & \leq & \br{\lam+1}\cdot\sum_{e\in E}x_e\cdot\excond{S_{e}^{t+1}-S_{e}^{t}}{{\cal F}^{t}}\cdot\max\br{f\br{X^{t}+e}-f\br{X^{t}}, 0} =\\
 & = & \br{\lam+1}\cdot\excond{f\br{X^{t+1}}-f\br{X^{t}}}{{\cal F}^{t}},
\end{eqnarray*}
and we conclude that the sequence 
$\br{\br{\lam+1}\cdot f\br{X^{t}}-f\br{S^{t}\cup Z^{t}}}_{t=0}^n$
is indeed a submartingale. Since $S^n \cup Z^n = R(x)$ and $f(\emptyset) \ge 0$, we have
\begin{eqnarray*}
0 & \le & \lam\cdot f(\emptyset) = \br{\lam+1}\cdot f\br{X^{0}}-f\br{S^{0}\cup Z^{0}} \le \\
& \le & \br{\lam+1}\cdot\ex{f\br{X^{n}}}-\ex{f\br{S^{n}\cup Z^{n}}}=\br{\lam+1}\cdot\ex{f\br{X^{n}}}-\ex{f\br{R\br x}},
\end{eqnarray*}
and further, by the definition of the multilinear extension $F,$
\[
\br{\lam+1}\cdot\ex{f\br{X^{n}}}\geq\ex{f\br{R\br x}}=F\br x.
\]
\end{proof}
\submodular*
\begin{proof}
Corollary~\ref{lem:matroid-lam} and Lemma~\ref{lem:deltas-multi} implies that the characteristic sequences of the CR scheme for the intersection of $k$ matroids are $k$-bounded.
We 
find vector $x$, such that $F\br x\geq(\frac{1}{e}-{\eps})\cdot f\br{X_{OPT}}$,
with the measured continuous greedy algorithm~\cite{DBLP:conf/focs/FeldmanNS11},
sample elements accordingly to $x$, and apply Lemma~\ref{lem:submodular-abstract}.
In the end we adjust $\eps$ to appropriately depend on $k$.
\end{proof}

\section{Stochastic $k$-set packing}
\label{sec:packing}

In the basic stochastic $k$-set packing problem, we are given $n$ elements/columns, where each item $e\in E=\brq n$
has a profit $v_{e}\in\mathbb{R}_{+}$, and a random $d$-dimensional
size $L_{e}\in\{0,1\}^{d}$. The sizes are independent for different
items. Additionally, for each item $e$, there is a set $Q_{e}$ of
at most $k$ coordinates such that each size vector $L_{e}$ takes
positive values only in these coordinates, i.e., $L_{e}\subseteq Q_{e}$
with probability $1$. We are also given a~capacity vector $b\in\mathbb{Z}_{+}^{d}$
into which items must be packed. We assume that $v_{e}$ is a random
variable that may be correlated with $L_{e}$. The coordinates of
$L_{e}$ also might be correlated between each other.
After probing element $e$, its size $L_{e}$ is revealed
and the reward $v_{e}$ is drawn.

Equivalently, one can consider $d$ copies of each item: $e^1, e^2, \dots, e^d$, so that if $e$ is probed then its $i$-th copy materializes with probability $p_e^i$ and $p_e^i = 0$ for $i \not\in Q_e$.
In this view the capacity vector $b$ induces a constraint family $U_{b_i}$ over the ground set $E^i$ of $i$-th copies of each item.
We can easily generalize this setting to consider arbitrary matroid constraint $\M_i$ over $E^i$.
Let $R^i \subseteq E^i$ denote the random set of  materialized $i$-th copies of elements.

The following linear program (used first in~\cite{Bansal:woes} in the case of uniform matroids) provides a relaxation for
the problem.\micr{referencja albo argument ze to relaksacja}
\begin{eqnarray*}
\max\qquad\sum_{e\in E}\ex{v_{e}}\cdot x_{e} &  & \qquad\mbox{\textsc{(SetPacking-LP)}}\\
\mbox{s.t.}\qquad p^{i}\cdot x\in\P{\M_i} &  & \qquad\forall i\in\brq d\\
x_{e}\in\brq{0,1} &  & \qquad\forall e\in\brq n,
\end{eqnarray*}
where, as usual, $x_{e}$ is interpreted as $\pr{\mbox{optimal solution probes  }e}$.
We are going to present a probing strategy in which for every element
$e$ the probability of being probed is at least $\frac{x_{e}}{k+1}$.
Having this property, a $(k+1)$-approximation guarantee will follow.

\packing*
\begin{proof}
We  present the abstract view of the mechanism in Algorithm~\ref{alg:Stochastic-contention-resolution-k-set-packing}.
\begin{algorithm}
\caption{Controller mechanism for stochastic $k$-set packing \label{alg:Stochastic-contention-resolution-k-set-packing}}

\begin{algorithmic}[1]

\STATE  solve \textsc{SetPacking-LP}; let $\br{x_{e}}$ be the solution

\STATE  \textbf{for} each element $e$  \textbf{do}
\STATE  ~~~\textbf{for} each constraint $i\in Q_e$ \textbf{do}
\STATE  ~~~~~~assign $e$ a controller $C_e^i$ with respect to vector $p^{i}\cdot x$

\STATE  \textbf{for} each element $e$ in random order \textbf{do}\\

\STATE  ~~~\textbf{if} $(e^i, C_e^i)$ has been blocked for any $i \in Q_{e}$ \textbf{then}
\STATE  ~~~~~~continue

\STATE  ~~~take $e$ into the solution with probability $x_e$ \label{algline:probing-accept}

\STATE  ~~~\textbf{for} each constraint $i \in Q_{e}$ \textbf{do }

\STATE  ~~~~~~\textbf{if} $e^i \in R^i$ \textbf{then}

\STATE  ~~~~~~~~~update controllers with respect to the $i$-th constraint

\end{algorithmic}
\end{algorithm}

\paragraph{Matroid implementation} The controller mechanism
is analogous to those in Sections~\ref{sec:matroid-cr} and~\ref{sec:bmumd}.
We decompose vector $p^{i}\cdot x$ into a support in matroid $\M_i$,
 that is $p^{i}\cdot x = \sum_{j\in{\cal S}_i}\beta_{j}\cdot B_{j}^{0}$ (Lemma~\ref{prelim:support}),
and find exchange-mappings
between each pair of sets (Lemma~\ref{prelim:mapping}).
Then for each element $e$ we choose $j(e,i) \in{\cal S}_i$ such that
$e \in B_{j}$, with probability $\frac{\beta_{j}}{p_e^i\cdot x_e}$,
call it the $i$-th controller of $c$, and denote\footnote{
There is a notation conflict in the superscript of $C_{e}^i$ as in previous sections we used that to refer to the set $C_e$ in step $t$. This time we reserve it to denote the constraint index.} it by~$C_{e}^i$.

When an element $e$ gets accepted,
we update the controllers in $Q_e$, as presented previously in Algorithm~\ref{alg:matroid-auction}.
A~pair $(e,\, C_e^i)$ gets blocked when $e$ is removed from $C_e^i$.

\paragraph{Correctness}
Let $S^t$ denote the set of accepted elements up to step $t$.
The controller mechanism ensures that if $e$ has not been blocked, then $\setst{e^i}{e\in S^t} \,\cap\, R^i$ together with $e$ belong to $C_e^i$, which is an independent set in $\M_i$ for all $i \in Q_e$.
Therefore taking $e$ into the solution would not break any constraint from $Q_e$ and other constraints are oblivious to $e$.

\paragraph{Approximation guarantee}
Consider the event that $(e, C_e^i)$ gets blocked in step $t$ in the $i$-th constraint.
For this to happen, an element $f\ne e$ must be chosen with probability $\frac{1}{n-t}$, it must be taken into the solution with probability $x_f$ and it must exist in $R^i$ with probability $p_f^i$.
A choice of particular controller $j(f,i)$ happens with probability $\frac{\beta_{j}}{p_e^i\cdot x_e}$.
Let $\Gamma^{i,t}\br{e,C}$ denote the set of pairs $(f,j)$ that would block $(e,C)$ in step $t$ in $i$-th constraint, as in Lemma~\ref{lem:matroid:support}.
Combining all of these, we get a bound on the probability of a blocking event
\begin{eqnarray*}
&& \frac{1}{n-t}\sum_{f\ne e}\pr{f\mbox{ gets accepted in line \ref{algline:probing-accept}}}\cdot\pr{f\in R^i}\sum_{j:\br{f,j}\in\Gamma^{i,t}\br{e,C^i_e}}\pr{f\mbox{ chooses controller }j} =\\
&=& \frac{1}{n-t}\sum_{f\ne e} x_f\cdot p_f^i \sum_{j:\br{f,j}\in\Gamma^{i,t}\br{e,C^i_e}}  \frac{\beta_{j}}{p_f^i\cdot x_f} = \\
&=& \frac{1}{n-t}\sum_{\br{f,j}\in\Gamma^{i,t}\br{e,C^i_e}}\beta_{j} \le \frac{1}{n-t},
\end{eqnarray*}
where the last inequality follows from Lemma~\ref{lem:matroid:support}.

As in previous sections, we use characteristic sequences 
to keep track of status of $e$ in the $i$-th constraint.
From the derivation above we know that they are 1-bounded.
Since $e$ could be blocked only by constraints from $Q_e$,
the joint characteristic sequences are given by a combination of at most $k$ sequences, as described in Section~\ref{sec:matroid-multi-cr}.
Lemma~\ref{lem:deltas-multi} says that these sequences are $k$-bounded and Lemma~\ref{lem:cr} guarantees that $e$ will reach line~\ref{algline:probing-accept} with probability at least $\frac{1}{k+1}$.
This finishes the proof.
\end{proof}
\section{Submodular stochastic probing}
\label{sec:probing}
Recall that each element $e$ of the universe $E$ might be active 
with probability $p_{e}$ and the only way to learn whether $e$ is active or not is to probe it.
If an element is active, the algorithm is forced to add it to the
current solution. In this way, the algorithm gradually constructs
a solution consisting of active elements.

We are given two independence systems of downward-closed
sets: an \emph{outer} independence system $\br{E,\Iout}$ restricting
the set of elements probed by the algorithm, and an \emph{inner} independence
system $\br{E,\Iin}$, restricting the set of elements taken by the
algorithm. We denote by $Q^{t}$ the set of elements probed in the
first $t$ steps of the algorithm, and by $S^{t}$ the subset of active
elements from $Q^{t}$. Then, $S^{t}$ is the partial solution constructed
by the first $t$ steps of the algorithm. We require that at each
time $t$, $Q^{t}\in\Iout$ and $S^{t}\in\Iin$. Thus, at each time
$t$, the element $e$ that we probe must satisfy both $Q^{t-1}\cup\{e\}\in\Iout$
and $S^{t-1}\cup\{e\}\in\Iin$. The goal is to maximize expected value
$\ex{f\br {S^n}}$ where $f$ is a given non-negative submodular function.
We denote such
a stochastic probing problem by $\br{E,p,\Iin,\Iout}$ with function
$f$ stated on the side (if needed).

In the first presentation of the mechanism we neglect function $f$ and focus on ensuring that each element will be probed with sufficiently high probability, i.e., we maximize linear objectives.
Later we introduce a~relaxation for the non-negative submodular case and combine the algorithm with an argument from Section~\ref{sec:submodular}.

\subsection{Sampling scheme}
\begin{algorithm}
\caption{Controller mechanism for submodular stochastic probing \label{alg:probing-easy}}

\begin{algorithmic}[1]

\STATE  solve \textsc{Probing-MP}; let $x = \br{x_{e}}_{e\in E}$ be the solution and $R(x)$ be a set sampled with respect to $x$
\STATE  \textbf{for} each element $e$  \textbf{do}
\STATE  ~~~\textbf{for} each constraint $i\in K_{out}$ \textbf{do}
\STATE  ~~~~~~assign $e$ a controller $C_e^i$ with respect to vector $x$
\STATE  ~~~\textbf{for} each constraint $i\in K_{in}$ \textbf{do}
\STATE  ~~~~~~assign $e$ a controller $C_e^i$ with respect to vector $p\cdot x$

\STATE  \textbf{for} each element $e$ in random order \textbf{do}

\STATE  ~~~\textbf{if} $e\not\in R(x)$ \textbf{ then}\label{algline:delta}
\STATE  ~~~~~~continue

\STATE  ~~~\textbf{if} $(e, C_e^i)$ has been blocked for any constraint $i$ \textbf{then}
\STATE  ~~~~~~continue

\STATE  ~~~\textbf{for} each constraint $i \in K_{out}$ \textbf{do }

\STATE  ~~~~~~update controllers with respect to the $i$-th constraint

\STATE  ~~~probe $e$ with probability of success $p_e$ 

\STATE  ~~~\textbf{if} probe has been successful \textbf{then}

\STATE  ~~~~~~\textbf{for} each constraint $i \in K_{in}$ \textbf{do }

\STATE  ~~~~~~~~~update controllers with respect to the $i$-th
constraint
\end{algorithmic}
\end{algorithm}

\begin{lem}
\label{lem:probing-easy}
Given a vector $x = (x_e)_{e\in E}$, we can construct a stochastic probing  procedure with $(k_{in} + k_{out})$-bounded characteristic sequences,
in which each chosen element is taken into solution with probability $p_e\cdot x_e$.
\end{lem}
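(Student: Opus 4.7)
The plan is to verify two properties of Algorithm~\ref{alg:probing-easy}: the claim about the chosen element being taken with probability $p_e\cdot x_e$, and the bound on the characteristic sequences. The first is essentially immediate from the algorithm's design: once element $e$ is picked in the random scan, the independent Bernoulli events of passing the $R(x)$ membership check in line~\ref{algline:delta} (with probability $x_e$) and the probe succeeding (with probability $p_e$) jointly give the claimed materialization probability $p_e\cdot x_e$, with the blocking check factored out into the characteristic sequences.

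For the main claim I will analyze each of the $k_{out}+k_{in}$ constraints separately, show that each gives rise to 1-bounded characteristic sequences, and then invoke Lemma~\ref{lem:deltas-multi} to combine them. Fix an outer constraint $i\in K_{out}$, an element $e$, and its assigned controller $C_e^i$. Mirroring Lemma~\ref{lem:matroid-blocking-pr}, a blocking event for $(e,C_e^i)$ at step $t$ requires picking some $f\neq e$ (probability $\frac{1}{n-t}$), $f$ passing the $R(x)$ check (probability $x_f$), and $f$ being assigned an outer controller $j$ with $(f,j)\in\Gamma^{i,t}(e,C_e^i)$ (probability $\frac{\beta_j}{x_f}$). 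The $x_f$ factors cancel and Lemma~\ref{lem:matroid:support} bounds the resulting sum by $\frac{1}{n-t}$, as required.

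For an inner constraint $i\in K_{in}$ the derivation is analogous but accounts for one extra Bernoulli: the inner controllers of $f$ are updated only on a successful probe, contributing a factor of $p_f$. Because inner controllers are drawn with respect to the vector $p\cdot x$, they are assigned with probability $\frac{\beta_j}{p_f\cdot x_f}$, so both $p_f$ and $x_f$ cancel and the same $\frac{1}{n-t}$ bound is obtained. Combining the $k_{out}+k_{in}$ individually 1-bounded sequences with Lemma~\ref{lem:deltas-multi} yields the postulated $(k_{in}+k_{out})$-boundedness.

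The main delicacy, as in the earlier controller arguments, is the conditioning convention from the beginning of Section~\ref{sec:character}: the controllers of a competing element $f$ should be regarded as sampled only once $f$ becomes relevant to $e$, so that the independence between ``$f$ advances in the algorithm'' and ``$f$'s controller blocks $C_e^i$'' is preserved. I would also remark that in the inner case, dropping the not-yet-blocked condition on $f$ can only loosen the estimate, which is why we obtain an upper bound rather than an equality. With these observations handled exactly as before, the estimates are genuine unconditional upper bounds and the proof goes through without further complications.
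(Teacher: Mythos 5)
Your proposal is correct and mirrors the paper's own proof almost exactly: both analyze each of the $k_{in}+k_{out}$ matroid constraints separately, show that the controller mechanism for each constraint yields $1$-bounded characteristic sequences (with the extra Bernoulli $p_f$ for inner constraints cancelling against the $\frac{\beta_j}{p_f x_f}$ controller-assignment probability), and then invoke Lemma~\ref{lem:deltas-multi} to combine. Your remarks about dropping the not-yet-blocked condition on $f$ and the conditioning convention from Section~\ref{sec:character} are useful clarifications that the paper uses implicitly but does not spell out.
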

\begin{proof}
The mechanism is described in Algorithm~\ref{alg:probing-easy},
where $K_{in}$ and $K_{out}$ represent families of respectively inner and outer constraints.

\paragraph{Matroid implementation}
The implementation is again analogous to the one from Section~\ref{sec:matroid-cr}, however there are minor details in handling inner and outer constraints.
First, observe that for inner constraints we use decomposition  
$p\cdot x = \sum_{j\in{\cal S}_i}\beta_{j}\cdot B_{j}^{0}$,
and for outer constraints it is, as usual,
$x = \sum_{j\in{\cal S}_i}\beta_{j}\cdot B_{j}^{0}$.
Then the choice of a controller is performed according to respectively 
$\frac{\beta_{j}}{p_e\cdot x_e}$ and $\frac{\beta_{j}}{x_e}$.

When an element $e$ is selected according to the random permutation, we first check if it belongs to the sampled set $R(x)$.
Since we are going to probe it, we update controllers for the outer constraints, possibly blocking other elements from being probed.
Then, if the probe turned out successful, we update controllers
for the inner constraints.

\paragraph{Correctness}
The controller mechanism ensures that $Q^t$ is a subset of all controller sets, which are independent, in the outer constraints and likewise for $S^t$.
As long as element $e$ belongs to $C_e^i$ for all $i \in K_{in} \cup K_{out}$, adding $e$ to both $S^t$ and $Q^t$ would not break any constraint.

\paragraph{Approximation guarantee}
We claim that the probability of a blocking event for $e$ in step $t$ in any single constraint is at most $\frac{1}{n-t}$.
For outer constraints, the derivation is analogous as in Section~\ref{sec:matroid-cr}.
For inner constraints we bound the probability by
\begin{eqnarray*}
&& \frac{1}{n-t}\sum_{f\ne e}\pr{f\in R(x)}\cdot\pr{f\mbox{ gets probed successfully}}\sum_{j:\br{f,j}\in\Gamma^{i,t}\br{e,C^i_e}}\pr{f\mbox{ chooses controller }j} =\\
&=& \frac{1}{n-t}\sum_{f\ne e} x_f\cdot p_f\sum_{j:\br{f,j}\in\Gamma^{i,t}\br{e,C^i_e}}  \frac{\beta_{j}}{p_f\cdot x_f} = \\
&=& \frac{1}{n-t}\sum_{\br{f,j}\in\Gamma^{i,t}\br{e,C^i_e}}\beta_{j} \le \frac{1}{n-t},
\end{eqnarray*}
where the last inequality follows from Lemma~\ref{lem:matroid:support}.
We conclude that for each constraint the characteristic sequences of $e$ are 1-bounded.
Lemma~\ref{lem:deltas-multi} guarantees that the joint characteristic sequences are $(k_{in} + k_{out})$-bounded.
\end{proof}

\subsection{Relaxation for a non-negative submodular objective}
So far we were using only the multilinear relaxation $F$ of a submodular function $f$. It was mainly due to the convenient fact that $F(x)$ is exactly equal to $\ex{f(\R)}$,
i.e., it corresponds to sampling each point $e\in E$ independently
with probability $x_{e}$.
Here it will also be used to guide the algorithm, however we shall need another relaxation of a submodular function to get an appropriate benchmark.

Another extension of $f$ studied in \cite{Calinescu2011} is given
by: 
\[
f^{+}(y)=\max\setst{\sum_{A\subseteq E}\alpha_{A}f(A)}{\sum_{A\subseteq E}\alpha_{A}\le1,\ \forall_{j\in E}\sum_{A:j\in A}\alpha_{A}\le y_{j},\ \forall_{A\subseteq E}\,\alpha_{A}\ge0}.
\]
Intuitively, the solution $\br{\alpha_{A}}{}_{A\subseteq E}$ above
represents the distribution over $2^{E}$ that maximizes the value
$\ex{f(A)}$ subject to a constraint that its marginal values satisfy
$\pr{i\in A}\le y_{i}$. The value $f^{+}\br y$ is then the expected
value of $\ex{f\br A}$ under this distribution, while the value of
$F\br y$ is the value of $\ex{f\br A}$ under the particular distribution
that places each element $i$ in $A$ independently. This relaxation
is important for our applications because the following mathematical
programming relaxation gives an upper bound on the expected value
of the optimal feasible strategy for the submodular stochastic probing
problem: 
\begin{align*}
\textrm{maximize } &\setst{f^{+}\br{x\cdot p}}{x\in \P{\Iin,\Iout}},\quad\quad\mbox{(\textsc{Probing-MP})}  \\
\mbox{where } &
\P{\Iin,\Iout}=\setst x{ x\in \P{\Iout},p\cdot x\in\P{\Iin},x\in\brq{0,1}^{E}}. \nonumber 
\end{align*}

\begin{lem}
\label{lem:math-programming-bound} Let $S$ be the (random) solution generated by the optimal
strategy for the stochastic probing problem with non-negative submodular objective function $f$ over $\br{E,p,\Iin,\Iout}$.
Then $\ex{f\br{S}}\le f^{+}\br{x^{+}\cdot p}$, where $x^{+}=\mbox{argmax}_{y\in\P{\Iin,\Iout}}f^{+}\br{y\cdot p}$. \end{lem}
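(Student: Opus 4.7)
\medskip

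\noindent\textbf{Proof proposal.} The plan is to build, out of the optimal stochastic probing strategy, an explicit feasible point of \textsc{Probing-MP} together with an explicit feasible distribution for the $f^{+}$ linear program at that point, whose expected value is exactly $\ex{f(S)}$. Optimality of $x^{+}$ then does the rest.

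Concretely, let $S = S^{n}$ and $Q = Q^{n}$ be the (random) solution and probed set produced by the optimal strategy, and define $y \in [0,1]^{E}$ by
\[
y_{e} \;=\; \pr{e \in Q}.
\]
My first step is to show $y \in \P{\Iin,\Iout}$. Since $Q \in \Iout$ always, the vector $y$ is a convex combination of characteristic vectors of sets in $\Iout$ (averaging $\chr{Q}$ over the strategy's randomness), so $y \in \P{\Iout}$. Moreover, the decision to probe $e$ depends only on the outcomes of \emph{earlier} probes and is thus independent of whether $e$ itself is active; hence $\pr{e \in S} = p_{e} \cdot y_{e}$, and since $S \in \Iin$ always, averaging $\chr{S}$ gives $p\cdot y \in \P{\Iin}$. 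The non-negativity and $\leq 1$ constraints are immediate, so $y \in \P{\Iin,\Iout}$.

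Next I would realize $\ex{f(S)}$ as a value of the $f^{+}$ program at the point $p\cdot y$. Put $\alpha_{A} = \pr{S = A}$ for each $A \subseteq E$. Then $\sum_{A} \alpha_{A} = 1$, $\alpha_{A} \ge 0$, and for every $j \in E$
\[
\sum_{A \,:\, j \in A} \alpha_{A} \;=\; \pr{j \in S} \;=\; p_{j}\cdot y_{j},
\]
so $\br{\alpha_{A}}_{A}$ is a feasible solution for the maximization defining $f^{+}(p\cdot y)$. Consequently
\[
\ex{f(S)} \;=\; \sum_{A \subseteq E} \alpha_{A}\, f(A) \;\le\; f^{+}(p\cdot y).
\]
Finally, since $y \in \P{\Iin,\Iout}$ and $x^{+} = \arg\max_{y' \in \P{\Iin,\Iout}} f^{+}(y'\cdot p)$, the conclusion $\ex{f(S)} \le f^{+}(p\cdot y) \le f^{+}(x^{+}\cdot p)$ follows directly.

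The only subtle point — and the one I would take most care with — is the independence argument ensuring $\pr{e \in S} = p_{e}\, y_{e}$; this requires that the strategy be adaptive only in the information-theoretically allowed way (decisions based on past realizations, not on the unrevealed status of $e$), which is exactly the probing model we assume. Everything else is bookkeeping: polytope membership by averaging over random outcomes, and feasibility of $\alpha$ for $f^{+}$ straight from the definitions.
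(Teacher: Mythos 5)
Your proposal is correct and matches the paper's argument: both take $y_e = \pr{e\in Q}$ (the paper calls it $x_e$), establish $y \in \P{\Iout}$ and $p\cdot y \in \P{\Iin}$ by averaging characteristic vectors of the always-feasible sets $Q$ and $S$, observe that $\pr{e\in S}=p_e y_e$, and note that the distribution of $S$ is a feasible point of the $f^{+}$ linear program at $p\cdot y$, giving $\ex{f(S)}\le f^{+}(p\cdot y)\le f^{+}(x^{+}\cdot p)$. The only difference is cosmetic: you spell out the independence reason behind $\pr{e\in S}=p_e y_e$, which the paper leaves implicit.
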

\begin{proof}
Denote the optimal probing strategy by $\cal S$.
We construct a feasible solution $x$ to \textsc{Probing-MP}
by setting $x_{e}=\pr{{\cal S} \mbox{ probes }e}$. First, we show that
this is indeed a feasible solution.
The set of elements $Q$ probed by any execution of $\cal S$ is always
an independent set of each outer matroid ${\cal M}_{j}^{out}=\br{E,\Iout_{j}}$,
i.e.\ $Q\in\bigcap_{j=1}^{\kout}\Iout_{j}$. Thus the vector $\ex{\chr Q}=x$
may be represented as a convex combination of vectors from $\setst{\chr A}{A\in\bigcap_{j=1}^{\kout}\Iout_{j}}$,
and so $x\in{\cal P}\br{\M_{j}^{out}}$ for any $j\in K_{out}$.
Analogously, the set of elements $S$ that were successfully probed
by $\cal S$ satisfy $S\in\bigcap_{j=1}^{\kin}\Iin_{j}$ for every possible
execution of $\cal S$. Hence, the vector $\ex{\chr S}=x\cdot p$ may
be represented as a convex combination of vectors from $\setst{\chr A}{A\in\bigcap_{j=1}^{\kin}\Iin_{j}}$
and so $x\cdot p\in{\cal P}\br{{\M_{j}^{in}}}$ for any $j\in K_{in}$.
The value $f^{+}(x\cdot p)$ gives the maximum value of $\exls{S\sim\mathcal{D}}{f(S)}$
over all distributions $\mathcal{D}$ satisfying $\prls{S\sim\mathcal{D}}{e\in S}\leq x_{e}\cdot p_{e}$.
The solution $S$ returned by $\cal S$ satisfies $\pr{e\in S}=\pr{{\cal S}\textrm{ probes \ensuremath{e}}}\cdot p_{e}=x_{e}\cdot p_{e}$.
Thus, $\cal S$ defines one such distribution, and so we have $\ex{f(S)}\le f^{+}(x\cdot p)\le f^{+}(x^{+}\cdot p)$. 
\end{proof}

We have obtained a relaxation, but it relies on  $f^+$, evaluation of which is already NP-hard. Optimization over~$F$ alone is not enough, since from the above discussion we know that for any point $x$ we have $f^{+}(x)\geq F(x)$.
Hence, we need another tool to use $F$ for optimization, but with provable guarantees over the solution of \textsc{Probing-MP}.
The following lemma states a stronger lower bound for
the measured greedy algorithm of Feldman et al.~\cite{DBLP:conf/focs/FeldmanNS11}.
The proof is postponed to Section~\ref{sec:new-bound-measured}.
\begin{lem}
\label{lem:measured-greedy} Let $b\in\brq{0,1}$, let $f$ be a non-negative submodular
function with multilinear extension $F$, and let $\mathcal{P}$ be
any downward closed polytope. Then, the solution $x\in\brq{0,1}^{E}$
produced by the measured greedy algorithm satisfies 1) $x\in b\cdot{\cal P}$,
2) $F\br x\ge\br{b\cdot e^{-b}-\eps}\cdot\max_{x\in\mathcal{P}}f^{+}\br x$.
\end{lem}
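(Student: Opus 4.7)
The plan is to use the measured continuous greedy algorithm of Feldman--Naor--Schwartz (FNS), but with a refined analysis performed against $f^{+}$ instead of the combinatorial optimum. The algorithm maintains a fractional vector $x(t)$ with $x(0)=0$, and at each instant $t\in[0,b]$ selects $v(t)=\arg\max_{v\in\mathcal{P}}\sum_{e}v_{e}\br{1-x_{e}(t)}\partial_{e}F\br{x(t)}$ and evolves by the ODE $\frac{dx_{e}}{dt}=v_{e}(t)\br{1-x_{e}(t)}$. Discretising into $O(1/\delta)$ steps of length $\delta$ together with sample-based estimation of $\partial_{e}F$ incurs an additive $\eps$ loss in the guarantee, which I would absorb into the final $\eps$ at the end.

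Item~1 follows quickly. Solving the ODE coordinate-wise yields $x_{e}(t)=1-\exp\br{-\int_{0}^{t}v_{e}(s)\,ds}\le\int_{0}^{t}v_{e}(s)\,ds$, using $1-e^{-z}\le z$. The average $\frac{1}{b}\int_{0}^{b}v(s)\,ds$ lies in $\mathcal{P}$ by convexity, so $\int_{0}^{b}v(s)\,ds\in b\cdot\mathcal{P}$, and downward closedness of $\mathcal{P}$ yields $x(b)\in b\cdot\mathcal{P}$. As a by-product I record that $\max_{e}x_{e}(t)\le 1-e^{-t}$, which is needed for item~2.

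For item~2, let $x^{*}\in\arg\max_{x\in\mathcal{P}}f^{+}\br x$ and let $\mathcal{D}^{*}$ be a distribution over $2^{E}$ with $\prls{S\sim\mathcal{D}^{*}}{e\in S}\le x_{e}^{*}$ and $\exls{S\sim\mathcal{D}^{*}}{f\br S}=f^{+}\br{x^{*}}$ witnessing $f^{+}\br{x^{*}}$. Set $q_{e}=\prls{S\sim\mathcal{D}^{*}}{e\in S}$; then $q\le x^{*}\in\mathcal{P}$, so $q\in\mathcal{P}$ by downward closedness. By greedy maximality and then by the concavity of $F$ along the non-negative direction $\br{1-x(t)}\odot\chr S$,
\[
\sum_{e}v_{e}(t)\br{1-x_{e}(t)}\partial_{e}F\br{x(t)}\ \ge\ \sum_{e}q_{e}\br{1-x_{e}(t)}\partial_{e}F\br{x(t)}\ \ge\ \exls{S\sim\mathcal{D}^{*}}{F\br{x(t)\vee\chr S}-F\br{x(t)}}.
\]
FNS's standard lemma then gives $F\br{x(t)\vee\chr S}\ge\br{1-\max_{e}x_{e}(t)}f\br S\ge e^{-t}f\br S$, so taking expectation over $\mathcal{D}^{*}$ leads to the differential inequality $\frac{d}{dt}F\br{x(t)}\ge e^{-t}\cdot f^{+}\br{x^{*}}-F\br{x(t)}$. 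Multiplying through by the integrating factor $e^{t}$ turns this into $\frac{d}{dt}\brq{e^{t}F\br{x(t)}}\ge f^{+}\br{x^{*}}$, and integrating from $0$ to $b$ with $F\br{x(0)}=f\br{\emptyset}\ge 0$ produces $F\br{x(b)}\ge be^{-b}\cdot f^{+}\br{x^{*}}$. The discretisation and sampling slack refine this to the claimed $\br{be^{-b}-\eps}$ factor.

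The principal obstacle is replacing the combinatorial target $f\br{OPT}$ used in the original FNS proof with the potentially much larger $f^{+}\br{x^{*}}$: one cannot simply test the greedy step against $\chr{T}$ for a single feasible $T$, since no such integer point need match $f^{+}$. The remedy is to use the fractional marginal vector $q\in\mathcal{P}$ of $\mathcal{D}^{*}$ as the comparison target in greedy maximality and then to apply the per-realisation FNS lemma inside the expectation over $\mathcal{D}^{*}$. Everything afterwards is a routine ODE manipulation.
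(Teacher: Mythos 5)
Your proposal is correct and follows essentially the same route as the paper: both prove that the measured continuous greedy guarantee carries through when $f(OPT)$ is replaced by $f^{+}(x^{+})$, with the crucial new step being to test the greedy maximality against the marginal vector of the distribution $\mathcal D^{*}$ witnessing $f^{+}$ (which lies in $\mathcal P$ by downward closedness) and then apply the per-realisation FNS lower bound $F(x\lor\chr{S})\ge(1-\max_e x_e)f(S)$ inside the expectation over $\mathcal D^{*}$. The only difference is presentational: you run the argument in the continuous-time ODE form with an integrating factor, whereas the paper retraces FNS's discrete-step inequality and invokes their discretisation machinery verbatim with $f^{+}(x^{+})$ in place of $f(OPT)$.
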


\begin{cor}
\label{cor:measured-greedy}
We can find a vector $x = (x_e)_{e\in E} \in \P{\Iin,\Iout}$,
such that $F(p\cdot x)$ is no less than  $\frac{1}{e} - \eps$ times the optimum of \textsc{Probing-MP}.
The procedure runs in polynomial time for any $\eps > 0$.
\end{cor}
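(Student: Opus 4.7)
The plan is to reduce Corollary~\ref{cor:measured-greedy} to a direct application of Lemma~\ref{lem:measured-greedy} after rescaling the feasible region by $p$. Define
\[
\cal Q \;=\; \setst{p\cdot x}{x\in\P{\Iin,\Iout}},
\]
the image of $\P{\Iin,\Iout}$ under the coordinate-wise map $x\mapsto p\cdot x$. Since $p\ge 0$ and the three polytopes appearing in the definition of $\P{\Iin,\Iout}$ are all down-closed, $\cal Q$ is down-closed as well. By construction,
\[
\max_{y\in\cal Q} f^{+}(y) \;=\; \max_{x\in\P{\Iin,\Iout}} f^{+}(p\cdot x),
\]
which is precisely the optimum of \textsc{Probing-MP}.

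Next, I would apply Lemma~\ref{lem:measured-greedy} to $f$ on the polytope $\cal Q$ with parameter $b=1$. This produces, for any prescribed $\eps>0$, a vector $y^{\star}\in\cal Q$ satisfying $F(y^{\star})\ge(e^{-1}-\eps)\cdot\max_{y\in\cal Q}f^{+}(y)$. From $y^{\star}\in\cal Q$ one recovers some $x^{\star}\in\P{\Iin,\Iout}$ with $p\cdot x^{\star}=y^{\star}$ in linear time: put $x^{\star}_{e}=y^{\star}_{e}/p_{e}$ where $p_{e}>0$ and $x^{\star}_{e}=0$ otherwise; coordinates with $p_{e}=0$ do not affect $y^{\star}$ and the down-closure of $\P{\Iout}$ handles the membership constraint. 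This yields $F(p\cdot x^{\star})=F(y^{\star})\ge(1/e-\eps)\cdot\mbox{OPT}(\textsc{Probing-MP})$, as required.

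The only non-trivial point to check is that the measured greedy of~\cite{DBLP:conf/focs/FeldmanNS11}, as packaged in Lemma~\ref{lem:measured-greedy}, indeed runs in polynomial time on $\cal Q$. The algorithm uses its polytope only through a linear-optimization oracle, and maximizing $\sum_{e}w_{e}y_{e}$ over $\cal Q$ is equivalent to maximizing $\sum_{e}(w_{e}p_{e})x_{e}$ over $\P{\Iin,\Iout}$. The latter is an intersection of a polynomial number of matroid polytopes, each equipped with a polynomial separation oracle, so the combined linear optimization can be performed in polynomial time by standard LP techniques (e.g.\ the ellipsoid method). This is the place where I expect the most care to be needed, but it is essentially a bookkeeping step: once the oracle is in place, both the measured greedy and the $1/e-\eps$ guarantee of Lemma~\ref{lem:measured-greedy} transfer verbatim to $\cal Q$, concluding the proof.
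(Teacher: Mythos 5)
Your proof takes essentially the same route as the paper: both reparametrize via $y = p\cdot x$, observe that the resulting polytope is downward-closed (dropping coordinates with $p_e = 0$), and invoke Lemma~\ref{lem:measured-greedy} with $b=1$ to get the $1/e-\eps$ guarantee. The extra points you spell out -- recovering $x^*$ from $y^*$ and checking that linear optimization over the scaled polytope remains polynomial -- are exactly the details the paper leaves implicit in the phrase ``we can neglect elements with $p_e=0$.''
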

\begin{proof}
First observe we can neglect elements with $p_e = 0$.
We substitute $y= x \cdot p$ and the polytope
$\P{\Iin,\Iout}$ becomes $\setst y{ y / p \in \P{\Iout},y\in\P{\Iin},y/p \in\brq{0,1}^{E}}$.
We optimize $f^+(y)$ over such polytope using Lemma~\ref{lem:measured-greedy} with $b=1$.
\end{proof}

\probing*
\begin{proof}
First we find a vector $x = (x_e)_{e\in E} \in \P{\Iin,\Iout}$, so that $F(p\cdot x)$ is
no less than $(\frac{1}{e} - \eps)$ times the optimum of \textsc{Probing-MP}, which is no less than the optimal revenue of the optimal probing mechanism (Lemma~\ref{lem:math-programming-bound} and Corollary~\ref{cor:measured-greedy}).
We run Algorithm~\ref{alg:probing-easy} on vector $x$ with a minor modification: in line~\ref{algline:delta} we check whether $f(S^t\cup\{e\}) > f(S^t)$, since the function $f$ does not have to be monotone.
Lemma~\ref{lem:probing-easy} combined with Lemma~\ref{lem:submodular-abstract} guarantee that the (random) set $S^n$ of successfully probed elements satisfies $\ex{f(S^n)} \ge \frac{1}{k_{in}+k_{out}+1} F(p\cdot x)$.
In the end we adjust $\eps$ to appropriately depend on $k_{in}+k_{out}$.
\end{proof}

We remark that the same machinery works for  stochastic $k$-set packing and we also can replace its linear objective wit a non-negative submodular function.

\subsection{Stronger bound for the measured continuous greedy algorithm}
\label{sec:new-bound-measured}
The results of this section are due to Justin Ward~\cite{Justin15private}.

We now briefly review the measured
continuous greedy algorithm of Feldman et al. \cite{DBLP:conf/focs/FeldmanNS11}.
The algorithm runs in $1 / \delta$ discrete time steps within time interval $[0,T]$, where $T \le 1$ and $\delta$
is a suitably chosen parameter depending on $n = |E|$.
Denote
\[
\partial_{e}F(x) = \frac{F\br{x\lor\chr e}-F(x)}{1-x_e},
\]
where $\lor$ stands for element-wise maximum. 

 Let $y\br t$ be the
current fractional solution at time $t$.
In each step the algorithm
selects vector $I\br t\in{\cal P}$ given by $\arg\max_{x\in{\cal P}}\sum_{e\in E}x_{e}\cdot\br{F\br{y\br t\lor\chr{e}}-F\br{y\br t}}$.
Then, it sets $y_{e}\br{t+\delta}=y_{e}\br t+\delta\cdot I_{e}(t)\cdot(1-y_{e}(t))$
and moves on to time $t+\delta$.

The analysis of Feldman et al.\ shows that if, at every time step
\begin{equation}
F(y(t+\delta))-F(y(t))\ge\delta\cdot\brq{e^{-t}\cdot f\br{OPT}-F\br{y(t)}}-O\br{n^{3}\delta^{2}f\br{OPT}},\label{eq:feldman-main}
\end{equation}
then for appropriate choice of $\delta$ we have $F\br{y(T)}\ge\brq{Te^{-T}-\eps}\cdot f(OPT).$
We note that, in fact, this portion of their analysis works even if
$f(OPT)$ is replaced by any constant value. Thus, in order to prove
our claim, it suffices to derive an analogue of \eqref{eq:feldman-main}
in which $f(OPT)$ is replaced by $f^{+}(x^{+})$, where $x^{+}=\mbox{argmax}_{y\in\mathcal{P}}f^{+}\br y$.
The remainder of the proof then follows as in~\cite{DBLP:conf/focs/FeldmanNS11}.

Lemma~\ref{lem:feldman-generalization} below contains the required
analogue of \eqref{eq:feldman-main}. Hence it implies Lemma~\ref{lem:measured-greedy}.
\begin{lem}
\label{lem:feldman-generalization} For every time $0\le t\le T$

$F\br{y(t+\delta)}-F\br{y(t)}\ge\delta\cdot\brq{e^{-t}\cdot f^{+}(x^{+})-F(y(t))}-O(n^{3}\delta^{2})\cdot f^{+}\br{x^{+}}.$
\end{lem}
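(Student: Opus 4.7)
The plan is to mirror the analysis of Feldman et al.~\cite{DBLP:conf/focs/FeldmanNS11} that establishes~\eqref{eq:feldman-main}, the sole modification being to replace the single benchmark set $OPT$ by the distribution $\{\alpha_A\}_{A\subseteq E}$ witnessing $f^+(x^+)$. Recall that this distribution satisfies $\sum_A \alpha_A\le 1$, $\sum_{A\ni e}\alpha_A\le x^+_e$ for every $e\in E$, and $\sum_A \alpha_A\,f(A)=f^+(x^+)$. Once we establish the one-step inequality
$$
\sum_{e\in E} I_e(t)\cdot\bigl(F(y(t)\lor\chr e)-F(y(t))\bigr) \;\ge\; e^{-t}\cdot f^+(x^+)-F(y(t))-O(\delta)\cdot f^+(x^+),
$$
the Taylor-expansion estimate of~\cite{DBLP:conf/focs/FeldmanNS11} transporting aggregated marginals to an actual increment $F(y(t+\delta))-F(y(t))$ applies verbatim and contributes the additive error $O(n^3\delta^2)\cdot f^+(x^+)$.

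The first step is to construct a feasible competitor for the linear optimization that defines $I(t)$. Set $x'_e:=\sum_{A\ni e}\alpha_A$, so that $x'\le x^+$ coordinate-wise; because $\mathcal{P}$ is downward closed and $x^+\in\mathcal{P}$, we have $x'\in\mathcal{P}$. Since $I(t)\in\arg\max_{x\in\mathcal{P}}\sum_e x_e\cdot(F(y(t)\lor\chr e)-F(y(t)))$, optimality of $I(t)$ together with subadditivity of marginals of the submodular $F$ yield
\begin{align*}
\sum_e I_e(t)\bigl(F(y(t)\lor\chr e)-F(y(t))\bigr)
&\;\ge\; \sum_e x'_e\bigl(F(y(t)\lor\chr e)-F(y(t))\bigr) \\
&\;=\; \sum_A \alpha_A\sum_{e\in A}\bigl(F(y(t)\lor\chr e)-F(y(t))\bigr) \\
&\;\ge\; \sum_A \alpha_A\bigl(F(y(t)\lor\chr A)-F(y(t))\bigr).
\end{align*}

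The second step uses the standard invariant of the measured continuous greedy that $y_e(t)\le 1-e^{-t}+O(\delta)$, combined with the folklore lemma (already used in~\cite{DBLP:conf/focs/FeldmanNS11}) asserting $F(y\lor\chr A)\ge\prod_{e\notin A}(1-y_e)\cdot f(A)\ge(e^{-t}-O(\delta))\cdot f(A)$ whenever $y\le 1-e^{-t}+O(\delta)$ coordinate-wise. Averaging over $A$ with weights $\alpha_A$, and using $\sum_A\alpha_A f(A)=f^+(x^+)$ together with $\sum_A\alpha_A\le 1$ and the nonnegativity of $F$,
$$
\sum_A \alpha_A\bigl(F(y(t)\lor\chr A)-F(y(t))\bigr) \;\ge\; (e^{-t}-O(\delta))\cdot f^+(x^+)-F(y(t)),
$$
which combined with the previous display delivers the one-step inequality modulo an $O(\delta)\cdot f^+(x^+)$ slack. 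Plugging this into the Feldman et al.~passage from marginals to $F(y(t+\delta))-F(y(t))$ gives the claim.

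The main obstacle I anticipate is book-keeping: one has to verify that every place in the proof of~\eqref{eq:feldman-main} where the error term is expressed as a constant times $f(OPT)$ can be re-expressed as a constant times $f^+(x^+)$. This is routine, because $f^+(x^+)\ge f(A)$ for every $A$ in the support of $\{\alpha_A\}$ and in particular $f^+(x^+)\ge\max_{e}f(\{e\})$ (after a dummy element ensures $\{e\}$ is feasible); hence every maximum-value surrogate controlling the second-order terms in the Taylor expansion of $F$ is dominated by $f^+(x^+)$, and the $O(n^3\delta^2)$ constant is inherited unchanged from~\cite{DBLP:conf/focs/FeldmanNS11}.
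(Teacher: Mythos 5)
Your proof is correct and follows essentially the same route as the paper: apply the Taylor-expansion lemma (Lemma 3.3 of Feldman et al.) to pass from the increment $F(y(t+\delta))-F(y(t))$ to a weighted sum of marginals, compete against a vector derived from the distribution witnessing $f^+(x^+)$ in the linear optimization defining $I(t)$, aggregate single-element marginals to set marginals via concavity in positive directions, invoke Lemmas 3.5 and 3.6 of Feldman et al.\ to lower-bound $F(y(t)\lor\chr A)$ by $(e^{-t}-O(\delta))f(A)$, and finally replace $f(OPT)$ in the error term by $f^+(x^+)$. The only cosmetic difference is that you introduce $x'_e=\sum_{A\ni e}\alpha_A$ as the feasible competitor (via downward closedness), whereas the paper competes against $x^+$ directly and uses $\sum_{A\ni e}\alpha_A\le x^+_e$ one step later, and your justification for dominating $f(OPT)$ by $f^+(x^+)$ is wordier than the paper's one-line observation $f^{+}(x^{+})\ge f^{+}(\chr{OPT})\ge f(OPT)$ — but these are the same argument.
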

We shall require the
following additional facts from the analysis of \cite{DBLP:conf/focs/FeldmanNS11}.
\begin{lem}[Lemma 3.3 in \cite{DBLP:conf/focs/FeldmanNS11}]\label{lem:feldman-1} Consider two
vectors $x,x'\in[0,1]^{E}$, such that for every $e\in E$, $\size{x_{e}-x'_{e}}\le\delta$.
Then, $F(x')-F(x)\ge\sum_{e\in E}(x'_{e}-x_{e})\cdot\partial_{e}F(x)-O\br{n^{3}\delta^{2}}\cdot f\br{OPT}$.
\end{lem}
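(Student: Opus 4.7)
The plan is to view $F$ as a polynomial (legitimate because the multilinear extension has degree at most $1$ in each coordinate) and apply a second-order Taylor expansion along the segment from $x$ to $x'$, controlling the quadratic remainder via submodularity. First, I would reconcile $\partial_{e}F(x)$ with the ordinary partial derivative $\partial F/\partial x_e$: multilinearity in the $e$-th coordinate gives $F(x\lor\chr{e})-F(x)=(1-x_e)\cdot\partial F/\partial x_e$, matching the paper's definition. Then I would define $g(s)=F(x+s(x'-x))$ for $s\in[0,1]$ and invoke the integral form of Taylor's theorem
\[
g(1)-g(0)=g'(0)+\int_{0}^{1}(1-s)\,g''(s)\,ds,
\]
noting that $g'(0)=\sum_{e}(x'_e-x_e)\,\partial_{e}F(x)$ is already exactly the main term of the lemma. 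Hence the statement reduces to bounding $\bigl|\int_{0}^{1}(1-s)g''(s)\,ds\bigr|\le O(n^{3}\delta^{2})\,f(OPT)$.

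Since $F$ is multilinear, its pure second partials $\partial^{2}F/\partial x_e^{2}$ vanish, so $g''(s)=\sum_{e\neq f}(x'_e-x_e)(x'_f-x_f)\cdot\partial^{2}_{ef}F(x+s(x'-x))$. The central estimate is on the mixed second partials: for any $\xi\in[0,1]^{E}$, the probabilistic definition of $F$ yields the discrete-difference identity
\[
\partial^{2}_{ef}F(\xi)=\mathbb{E}_{A}\bigl[f(A\cup\{e,f\})-f(A\cup\{e\})-f(A\cup\{f\})+f(A)\bigr],
\]
where $A\subseteq E\setminus\{e,f\}$ is sampled by including each $g$ independently with probability $\xi_g$. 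Submodularity forces this expectation to be non-positive, while non-negativity of $f$ bounds its absolute value by $2\max_{S}f(S)\le O(f(OPT))$. Combined with $|x'_e-x_e|\le\delta$, this gives $|g''(s)|\le O(n^{2}\delta^{2})\,f(OPT)$ uniformly in $s\in[0,1]$, so $\bigl|\int_{0}^{1}(1-s)\,g''(s)\,ds\bigr|\le O(n^{2}\delta^{2})\,f(OPT)$, comfortably absorbed into the stated $O(n^{3}\delta^{2})\,f(OPT)$.

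The main obstacle is establishing the discrete-difference expression for $\partial^{2}_{ef}F$ cleanly (standard but bookkeeping-heavy from the sampling definition of $F$) and then invoking submodularity together with non-negativity of $f$ to control its magnitude by a constant times $f(OPT)$; once this second-derivative bound is in hand, the Taylor remainder closes the argument immediately. The gap between the $n^{2}$ my sketch produces and the $n^{3}$ stated in the lemma provides generous slack, so no tightness issue arises.
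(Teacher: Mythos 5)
The paper does not prove this lemma at all --- it is imported verbatim as Lemma~3.3 of Feldman, Naor and Schwartz \cite{DBLP:conf/focs/FeldmanNS11} --- so there is no in-paper argument to compare against; your proposal must stand on its own. Its skeleton is sound and is the standard one: $\partial_e F(x)$ does coincide with $\partial F/\partial x_e$ by multilinearity, the integral form of Taylor's theorem reduces the claim to a uniform bound on $g''$, the pure second partials vanish, and the discrete-difference identity $\partial^2_{ef}F(\xi)=\mathbb{E}_A\bigl[f(A\cup\{e,f\})-f(A\cup\{e\})-f(A\cup\{f\})+f(A)\bigr]$ is correct and gives $|\partial^2_{ef}F|\le 2\max_S f(S)$.

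The one genuine flaw is the inequality $\max_S f(S)\le O(f(OPT))$, which is false for a constrained maximum: take $f(S)=|S|$ under a rank-one uniform matroid, where $f(OPT)=1$ but $\max_S f(S)=n$. The unconstrained maximum is only controlled by $f(OPT)$ up to a factor of $n$: by submodularity and non-negativity, $f(S)\le f(\emptyset)+\sum_{e\in S}\bigl(f(\{e\})-f(\emptyset)\bigr)\le f(\emptyset)+\sum_{e\in E}f(\{e\})\le (n+1)\cdot f(OPT)$, where the last step uses that $\emptyset$ and every singleton are feasible (a harmless normalization for downward-closed systems after discarding useless elements). This extra factor of $n$ is exactly the ``generous slack'' you believed you had between your $n^2$ and the stated $n^3$: with the corrected bound your estimate becomes $|g''(s)|\le 2n^2\delta^2\cdot(n+1)\,f(OPT)=O(n^3\delta^2)\cdot f(OPT)$, which still closes the argument, but with no room to spare. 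With that single step repaired (and the singleton-feasibility assumption made explicit), the proof is complete.
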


\begin{lem}[Lemma 3.5 in \cite{DBLP:conf/focs/FeldmanNS11}] \label{lem:feldman-2} Consider a vector
$x\in\brq{0,1}{}^{E}$. Assuming $x_{e}\le a$ for every $e\in E$,
for every set $S\subseteq E$ it holds $F\br{x\lor\chr
S}\ge(1-a)f\br S$. 
\end{lem}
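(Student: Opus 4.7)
The plan is to unpack the multilinear extension probabilistically and reduce to a classical Feige-type inequality for non-negative submodular functions under independent sampling. Recall that by definition
\[
F(x\lor \chr S)=\mathbb{E}[f(A)],
\]
where $A\subseteq E$ is the random set that contains each $e$ independently with probability $(x\lor \chr S)_e$. Concretely $A=S\cup T$, where $S$ is deterministic and $T\subseteq E\setminus S$ is a random set in which each $e$ appears independently with probability $x_e\le a$.

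The first step is to pass to the restricted function $g:2^{E\setminus S}\to\mathbb{R}_{\ge0}$ defined by $g(T)=f(S\cup T)$. Restriction preserves submodularity and non-negativity, and it satisfies $g(\emptyset)=f(S)$ together with $\mathbb{E}[f(A)]=\mathbb{E}[g(T)]$. So the entire claim collapses to showing that
\[
\mathbb{E}[g(T)]\ge (1-a)\,g(\emptyset)
\]
for a non-negative submodular $g$ sampled with coordinates $\le a$. This is exactly the statement proved by Feige (and reused as Lemma~2.2 in~\cite{DBLP:conf/focs/FeldmanNS11}); applying it gives the bound and finishes the reduction.

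The main obstacle is the Feige inequality itself, because the naive induction (fix one element $e$, condition on whether $e\in T$, and recurse) breaks down in the non-monotone regime: the marginal $g(\{e\})-g(\emptyset)$ may be negative, so the inductive hypothesis does not combine back. The standard workaround is a coupling / symmetric-difference argument. One would introduce an auxiliary deterministic reference set $U\subseteq E\setminus S$, use the submodular inequality $g(T\cup U)+g(T\cap U)\le g(T)+g(U)$ with $U=\emptyset$ absorbed inside an inductive step over an ordering of the elements, and average; the non-negativity of $g$ is used exactly once, to discard the $g(T\cap U)$ term at the bottom of the induction. Taking expectations at each step turns the per-coordinate sampling probability $x_e\le a$ into the global $(1-a)$ factor.

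Once Feige's inequality is in hand, the proof is immediate: apply it to $g(T)=f(S\cup T)$ with sampling probabilities $x_e\le a$ for $e\in E\setminus S$ to conclude
\[
F(x\lor \chr S)=\mathbb{E}[g(T)]\ge (1-a)\,g(\emptyset)=(1-a)\,f(S).
\]
Note that monotonicity of $f$ is never invoked, which is essential since Lemma~\ref{lem:feldman-generalization} and, through it, Lemma~\ref{lem:measured-greedy}, are applied in the non-monotone submodular setting of Sections~\ref{sec:submodular} and~\ref{sec:probing}.
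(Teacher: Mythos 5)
Your proof is correct and matches the standard argument: the paper itself imports this lemma verbatim from Feldman--Naor--Schwartz without reproving it, and your reduction to the restricted function $g(T)=f(S\cup T)$ followed by the Feige--Mirrokni--Vondr\'ak sampling inequality $\ex{g(T)}\ge(1-a)g(\emptyset)$ is exactly how Lemma~3.5 is proved in that reference. The only soft spot is your sketch of the sampling inequality itself, which is vague, but since that inequality is a standard cited result (Lemma~2.2 in \cite{DBLP:conf/focs/FeldmanNS11}) this does not constitute a gap.
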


\begin{lem}[Lemma 3.6 in \cite{DBLP:conf/focs/FeldmanNS11}]\label{lem:feldman-3}
For every time $0\le t\le T$
and element $\ensuremath{e\in E},\ensuremath{y_{e}\br t\le1-\br{1-\delta}{}^{t/\delta}\le1-e^{-t}+O\br{\delta}}$.
\end{lem}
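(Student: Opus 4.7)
My plan is to analyze the update rule coordinate-wise and apply induction. Recall that the measured continuous greedy algorithm uses the update
\[
y_e(t+\delta) = y_e(t) + \delta \cdot I_e(t) \cdot (1 - y_e(t)),
\]
where $I(t) \in \mathcal{P} \subseteq [0,1]^E$, so in particular $I_e(t) \in [0,1]$. The first step is to rewrite this in terms of the complement $1 - y_e(t)$, which is convenient because the recursion becomes multiplicative:
\[
1 - y_e(t+\delta) = (1 - y_e(t)) \cdot (1 - \delta \cdot I_e(t)).
\]
Since $I_e(t) \le 1$, the second factor is at least $1 - \delta$, so $1 - y_e(t+\delta) \ge (1-\delta) \cdot (1 - y_e(t))$.

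Now I would iterate this inequality from $t = 0$. By definition $y_e(0) = 0$, so $1 - y_e(0) = 1$. A~straightforward induction on the discrete time steps $0, \delta, 2\delta, \ldots, t$ yields $1 - y_e(t) \ge (1-\delta)^{t/\delta}$, which is exactly the first inequality $y_e(t) \le 1 - (1-\delta)^{t/\delta}$.

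For the second inequality, I would use the standard analytical estimate $\ln(1-\delta) \ge -\delta - \delta^2$ valid for $\delta$ sufficiently small, giving
\[
(1-\delta)^{t/\delta} = \exp\!\left(\frac{t}{\delta} \ln(1-\delta)\right) \ge \exp\!\br{-t - t\delta} = e^{-t} \cdot e^{-t\delta}.
\]
Since $t \le T \le 1$ and $e^{-t\delta} \ge 1 - t\delta$, we obtain $(1-\delta)^{t/\delta} \ge e^{-t} \cdot (1 - t\delta) \ge e^{-t} - O(\delta)$, and hence $1 - (1-\delta)^{t/\delta} \le 1 - e^{-t} + O(\delta)$.

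There is no substantial obstacle; the only subtlety is that the proof requires $I_e(t) \in [0,1]$, which follows from the fact that $\mathcal{P}$ is contained in the unit hypercube, and the observation that the $1 - y_e(t)$ parametrization turns the update into a product, making the bound immediate by induction. The final asymptotic step reduces to a one-line Taylor bound on $\ln(1-\delta)$.
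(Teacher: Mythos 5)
The paper does not prove this lemma -- it is quoted verbatim as Lemma 3.6 of Feldman et al.\ -- and your argument is correct and is exactly the standard one: the multiplicative recursion $1-y_e(t+\delta)=(1-y_e(t))(1-\delta I_e(t))\ge(1-\delta)(1-y_e(t))$ with $y_e(0)=0$ gives the first inequality, and the Taylor bound on $\ln(1-\delta)$ together with $t\le T\le 1$ gives the second. No issues.
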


\begin{proof}[Proof of Lemma~\ref{lem:feldman-generalization}]
Applying Lemma \ref{lem:feldman-1} to vectors $y(t+\delta)$
and $y(t)$, we have 

\begin{align}
 & F\br{y(t+\delta)}-F(y(t)) \ge \nonumber  \\
\ge & \sum_{e\in E}\delta\cdot I_{e}\br t(1-y_e(t))\cdot\partial_{e}F(y(t))-O\br{n^{3}\delta^{2}})\cdot f(OPT) = \nonumber \\
= & \sum_{e\in E}\delta\cdot I_{e}\br t(1-y_e(t))\cdot\frac{F\br{y(t)\lor\chr j}-F(y(t))}{1-y_e(t)}-O(n^{3}\delta^{2})\cdot f(OPT) = \nonumber \\
= & \sum_{e\in E}\delta\cdot I_{e}\br t\cdot\brq{F(y(t)\lor\chr e)-F(y(t))}-O(n^{3}\delta^{2})\cdot f(OPT) \ge \nonumber \\
\ge & \sum_{e\in E}\delta\cdot x_{e}^{+}\brq{F(y(t)\lor\chr e)-F(y(t))}-O(n^{3}\delta^{2})\cdot f(OPT),\label{eq:cg-1-1}
\end{align}
where the last inequality follows from our choice of $I(t)$.
Further, we have $f^{+}(x^{+})=\sum_{A\subseteq E}\alpha_{A}f(A)$
for some set of values $\alpha_{A}$ satisfying $\sum_{A\subseteq E}\alpha_{A}=1$
and $\sum_{A\subseteq E:e\in A}\alpha_{A} \le x_{e}^{+}$. Thus, 

\begin{multline*}
\sum_{e\in E}x_{e}^{+}\brq{F(y(t)\lor\chr e)-F(y(t))} \ge \sum_{A\subseteq E}\alpha_{A}\sum_{e\in A}\brq{F(y(t)\lor\chr e)-F(y(t))} \ge \\
\ge\sum_{A\subseteq E}\alpha_{A}\brq{F(y(t)\lor\chr A)-F(y(t))}\ge\sum_{A\subseteq E}\alpha_{A}\brq{(e^{-t}-O(\delta))\cdot f(A)-F(y(t))} = \\
=(e^{-t}-O(\delta))\cdot f^{+}(x^{+})-F(y(t)),
\end{multline*}
where the second inequality follows from the fact that $F$ is concave
in all positive directions, and the third from Lemmas~\ref{lem:feldman-2}
and~\ref{lem:feldman-3}. Combining this with
(\ref{eq:cg-1-1}), and noting that $f^{+}\br{x^{+}}\ge f^{+}\br{OPT}=f\br{OPT}$,
we finally obtain $F(y(t+\delta))-F(y(t))\ge\delta\cdot\brq{e^{-t}\cdot f^{+}(x^{+})-F(y(t))}-O(n^{3}\delta^{2})\cdot f^{+}(x^{+}).$
\end{proof}
\section{Single client routine for BMUMD}
\label{sec:single-client}

Consider the moment when we have decided to serve agent $i$. Note that some items 
from ${\cal J}_{i}$ might have already been blocked.
We are supplied with the vector $(x_{c,p})_{c \in {\cal J}_i,\, p \in\cal B}$
and we assume that the variables for blocked items are set to 0.

\subsection{First attempt}

Imagine that with probability $\frac{1}{2}$ we discard item $c$ and it 
does not go to the menu.
Then we set its price to $p$ with probability $x_{c,p}$ and add it to the menu of client $i$.
Note that $\sum_{p}x_{c,p}\leq1$, so it is possible that we assign no price to the item, and in this case we discard it.
Let us emphasize that this happens independently to the initial coin toss.
\begin{lem}
\label{lem:one-client-first}
With probability at least 
\[
\frac{1}{4}\sum_{p}x_{c,p}\cdot\pr{v_{c}\geq p}
\]
$c$ will be the item chosen from the menu by the client $i$.
\end{lem}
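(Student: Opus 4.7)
The plan is to decompose the event according to the price that item $c$ receives, then factor it into two independent pieces: one concerning $c$ itself and one concerning all other items in the menu.

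First, I would observe that the event ``$c$ is placed in the menu at price $p$, and the valuation satisfies $v_c\ge p$'' has probability exactly $\frac{1}{2}\,x_{c,p}\cdot\pr{v_c\ge p}$: the factor $\frac{1}{2}$ comes from the initial coin toss for $c$, the factor $x_{c,p}$ is the probability of selecting price $p$, and $\pr{v_c\ge p}$ is independent of both since $v_c$ is drawn independently of the menu construction. Summing over $p$ (which are mutually exclusive outcomes for $c$) will give the desired $\sum_p x_{c,p}\cdot\pr{v_c\ge p}$ factor, up to the constants.

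Next, I would use a simplifying sufficient condition for $c$ to be chosen by the client: if $c$ is placed at price $p$ with $v_c\ge p$ (so $c$ has nonnegative utility $v_c-p$) and additionally every other item $d\in{\cal J}_i\setminus\{c\}$ that ends up in the menu at some price $q$ satisfies $v_d<q$ (nonpositive utility), then $c$ is the unique item with nonnegative utility and must be chosen. Crucially, the event describing $c$ (its placement and valuation) is independent of the event describing the other items (their placements and their valuations), so the probability of the conjunction factors. Hence
\[
\pr{c\text{ chosen}}\ \ge\ \sum_{p}\tfrac{1}{2}\,x_{c,p}\cdot\pr{v_c\ge p}\cdot\pr{N=0},
\]
where $N$ denotes the number of items $d\ne c$ that are placed in the menu at some price $q$ with $v_d\ge q$.

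The final step is to bound $\pr{N=0}\ge\frac{1}{2}$ via Markov's inequality. By linearity and independence,
\[
\ex{N}\ =\ \sum_{d\in{\cal J}_i\setminus\{c\}}\sum_{q}\tfrac{1}{2}\,x_{d,q}\cdot\pr{v_d\ge q}\ \le\ \tfrac{1}{2}\sum_{d\in{\cal J}_i}\sum_{q}x_{d,q}\cdot\pr{v_d\ge q}\ \le\ \tfrac{1}{2},
\]
where the last inequality is the menu-vector constraint $\sum_{c\in{\cal J}_i}\sum_{p}x_{c,p}\cdot\pr{v_c\ge p}\le 1$. Markov then gives $\pr{N\ge 1}\le\frac{1}{2}$, so $\pr{N=0}\ge\frac{1}{2}$, and plugging in yields the claimed $\frac{1}{4}\sum_{p}x_{c,p}\cdot\pr{v_c\ge p}$ bound. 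The proof is essentially routine; the only mildly delicate point is to argue cleanly the independence between $c$'s piece and the rest, which follows because the random choices of prices and the valuations $v_d$ are drawn independently across items.
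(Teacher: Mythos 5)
Your proof is correct and follows essentially the same structure as the paper's: condition on $c$ entering the menu with a nonnegative-utility price (probability $\tfrac{1}{2}\sum_p x_{c,p}\pr{v_c\ge p}$ from the coin toss and price sampling), use independence to multiply by the probability that no other item has nonnegative utility, and bound that probability below by $\tfrac{1}{2}$ via a first-moment argument using the menu-vector constraint. The paper phrases the last step as a union bound over $(d,p)$ pairs rather than Markov on the count $N$, but these are the same calculation.
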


\begin{proof}
Consider another item $d$. With probability $\frac{1}{2}$ it goes
into the menu and with probability $x_{d,p}$ we set its price to $p$.
This price is acceptable by the agent (gives non-negative utility)
with probability $\pr{v_{d}\geq p}$.
From union-bound we can say that the probability of any such
event over all items $d\neq c$
it is at most
\[
\sum_{d\neq c}\frac{1}{2}\sum_{p}x_{c,p}\cdot\pr{v_{c}\geq p}\leq\frac{1}{2},
\]
therefore with probability at least $\frac{1}{2}$ no item $d\neq c$
is offered with a non-negative utility price.

With probability $\frac{1}{2}\sum_{p}x_{c,p}\cdot\pr{v_{c}\geq p}$
item $c$ ends up in the menu with a non-negative utility price.
Since this is independent from the event above, we see that with probability
at least $\frac{1}{4}\sum_{p}x_{c,p}\cdot\pr{v_{c}\geq p}$ item
$c$ is the only reasonable choice for the client, so we are sure $c$ will be chosen.
\end{proof}

\subsection{Almost perfect menu}
\global\long\def\evx#1#2#3{\mathbf{X}^{#1}_{#2,#3}}

In the previous section we have guaranteed that the probability of an item $c$ with price $p$ being at the top of the menu is proportional to $x_{c,p}\cdot\pr{v_{c}\geq p}$ with the ratio within $[\frac{1}{4}, \frac{1}{2}]$.
Now we are going to compress this interval to $[\frac{1}{4}, \frac{1}{4} + \eps]$.
Note that we cannot simply scale down the variables $(x_{c,p})$ because decreasing the value of $x_{c,p}$ may increase chances of winning for another item.

Recall that the vector $\mathbf{x} = (x_{c,p})$ describing randomized menu for client $i$ must satisfy following constraints and we call it a \emph{menu-vector}.
\begin{eqnarray*}
 & \sum_{p}x_{c,p}\leq1 & \forall c\in {\cal J}_{i}\\
 & \sum_{c\in{\cal J}_{i}}\sum_{p}x_{c,p}\cdot\pr{v_{c}\geq p}\leq1
\end{eqnarray*}

Given menu-vector $\mathbf{x}$, we construct the menu as follows.
Independently for each item $c$ we choose price $p$ with probability $x_{c,p}$ and discard the item with probability $1 - \sum_p x^t_{c,p}$.
Then the client reveals their utilities for each item.
We define $\evx {} c p$ to be the event of the item $c$ with price $p$ being at the top of the menu.

\singleclient*
\begin{proof}
We begin with $\mathbf{x}^0 = \frac{1}{2}\mathbf{x}$ as the first approximation.
We are going to construct a series o menu-vectors $\mathbf{x}^t$, each time decreasing the discrepancy, that converges to $\mathbf{y}$ in $O(1/\eps^2)$ steps.

Let us define $q(c,p) =\frac{1}{4} x_{c,p}\cdot\pr{v_{c}\geq p}$.
In order to construct $\mathbf{x^{t+1}}$ we compute set
$D^t = \{(c,p)\, :\, \pr{\evx {t} c p} > (1+2\eps)\cdot q(c,p)\}$
and scale down the variables according to the formula

\begin{equation*}
    x^{t+1}_{c,p}= 
\begin{cases}
    x^t_{c,p} & \text{if } (c,p) \not\in D^t,\\
    (1-\eps)\cdot x^t_{c,p} & \text{if } (c,p) \in D^t.
\end{cases}
\end{equation*}

We will take advantage of the \emph{coupling} technique to analyze  deltas between $\pr{\evx {t} c p}$and $\pr{\evx {t+1} c p}$.
The idea is to construct a common probabilistic space where events $(\evx t c p)$ and $(\evx {t+1} c p)$ are correlated.
We decide on each item $c$ independently by setting its price to $p$ with probability $x^t_{c,p}$ and discarding $c$ with probability $1 - \sum_p x^t_{c,p}$.
When the price gets fixed we check whether $(c,p) \in D^t$ and, if yes, we discard $c$ with probability $\eps$ independently to the previous choices.
This procedure is equivalent to choosing prices with respect to $(x^{t+1}_{c,p})$.

Let $\mathcal{E}^t(c_1,p_1,c_2,p_2)$ be an event indicating that in step $t$ the pair $(c_1,p_1)$ was at the top of the menu, then $c_1$ got discarded in the second phase, and $(c_2,p_2)$ is at the top of the menu in step $t+1$.
One can think of this as of transferring the victory from $(c_1,p_1)$ to $(c_2,p_2)$.
Note that events $\mathcal{E}^t(c_1,p_1,c_2,p_2)$ make sense only in the coupled probabilistic space but nevertheless we can use them to estimate the probabilities.
Namely, we have

\begin{equation}
\label{eq:dump-choice}
    \pr{\evx {t+1} c p}= 
\begin{cases}
    \quad\quad\quad\pr{\evx t c p} + \sum_{\myatop{p_1}{c_1\neq c}}\pr{\mathcal{E}^t(c_1,p_1,c,p)} & \text{if } (c,p) \not\in D^t,\\
    (1-\eps)\cdot \pr{\evx t c p} + \sum_{\myatop{p_1}{c_1\neq c}}\pr{\mathcal{E}^t(c_1,p_1,c,p)} & \text{if } (c,p) \in D^t.
\end{cases}
\end{equation}

If $\mathcal{E}^t(c_1,p_1,c_2,p_2)$ has occurred, then both $(c_1,p_1),\,(c_2,p_2)$ must have been included in the menu with non-negative utilities and $(c_1,p_1)$ must have been discarded in the second phase with probability $\eps$.
Since we can only decrease $x^t_{c,p}$, we have $x^t_{c,p} \le x^0_{c,p} = \frac{1}{2}x_{c,p}$ and

\begin{eqnarray}
\label{eq:dump-estimate}
& \pr{\mathcal{E}^t(c_1,p_1,c_2,p_2)} &\le \eps \cdot x^t_{c_1,p_1}\cdot \pr{v_{c_1}\geq p_1}\cdot x^t_{c_2,p_2}\cdot \pr{v_{c_2}\geq p_2} \le \nonumber \\
&&\le \frac{\eps}{4} \cdot x_{c_1,p_1}\cdot \pr{v_{c_1}\geq p_1}\cdot x_{c_2,p_2}\cdot \pr{v_{c_2}\geq p_2}, \nonumber\\
& \sum_{\myatop{p_1}{c_1\neq c}}\pr{\mathcal{E}^t(c_1,p_1,c,p)} &\le \frac{\eps}{4} \cdot x_{c,p}\cdot \pr{v_{c}\geq p}\cdot\sum_{\myatop{p_1}{c_1\neq c}} x_{c_1,p_1}\cdot \pr{v_{c_1}\geq p_1} \le \nonumber\\
&&\le \frac{\eps}{4} \cdot x_{c,p}\cdot \pr{v_{c}\geq p} = \eps\cdot q(c,p).
\end{eqnarray}

We are ready to formulate sufficiently tight bounds on deltas. 

\begin{enumerate}[label=(\alph*)]
\item $ \pr{\evx t c p}  \le \pr{\evx {t+1} c p} \le \pr{\evx t c p} + \eps\cdot q(c,p)$ for $(c,p)\not\in D^t$,
\item $\pr{\evx t c p} -  2\eps\cdot q(c,p) \le \pr{\evx {t+1} c p} \le \pr{\evx t c p} - 2\eps^2\cdot q(c,p)$ for $(c,p)\in D^t$,
\item the invariant $q(c,p)\le \pr{\evx t c p}$ is being maintained.
\end{enumerate}

Combining the formulas (\ref{eq:dump-choice}) and (\ref{eq:dump-estimate}) entails the property (a) directly. 
The left side of property (b) follows from
\begin{eqnarray*}
&\pr{\evx {t+1} c p} &\ge \pr{\evx t c p} - \eps\cdot \pr{\evx t c p} \ge \pr{\evx t c p} -\eps\cdot x^t_{c,p}\cdot \pr{v_{c}\geq p} \ge \\
&& \ge \pr{\evx t c p} -\eps\cdot x^0_{c,p}\cdot \pr{v_{c}\geq p} = \pr{\evx t c p} - \frac{\eps}{2}\cdot x_{c,p}\cdot \pr{v_{c}\geq p} = \\
&&= \pr{\evx t c p} -2\eps\cdot q(c,p).
\end{eqnarray*}
To handle the right side, recall that $ \pr{\evx t c p} > (1+2\eps)\cdot q(c,p)$ for $(c,p) \in D^t$, so
\[ \pr{\evx {t+1} c p} \le \pr{\evx t c p} - \eps\cdot \pr{\evx t c p} + \eps\cdot q(c,t) \le \pr{\evx t c p} - (\eps+2\eps^2 -\eps)\cdot q(c,p). \]

Finally we prove the invariant.
By Lemma~\ref{lem:one-client-first} we know that the property (c) is satisfied for $\mathbf{x}^0$, so we can proceed by induction.
The value of $\pr{\evx {t+1} c p}$ is smaller than $\pr{\evx t c p}$
only for $(c,p) \in D^t$.
Combining definition of $D^t$ with property (b) ensures that $\pr{\evx {t+1} c p} \ge q(c,p)$.

To finish the whole argument, observe that the probability can only increase when
$ \pr{\evx t c p} \le (1+2\eps)\cdot q(c,p)$ and the delta is bounded by $\eps\cdot q(c,p)$.
Therefore, as soon as the value of $\pr{\evx t c p}$
drops below $(1+3\eps)\cdot q(c,p)$ it remains there until the end of the procedure.
The values above this threshold are being truncated by $2\eps^2\cdot q(c,p)$ in each step, so after $O(1/\eps^2)$ steps all values $\pr{\evx t c p}$ lie respectively in $[q(c,p), (1+3\eps)\cdot q(c,p)]$, which is contained within $\big[\frac{1}{4}x_{c,p}\cdot\pr{v_{c}\geq p}, (\frac{1}{4} + \eps)\cdot x_{c,p}\cdot\pr{v_{c}\geq p}\big].$
\end{proof}
\section{Combining matroid and knapsack constraints}
\label{sec:knapsack}
We consider optimization over the knapsack
constraint, where each element $e$ is assigned size $s_e \in\brq{0,1}$ and a~set $X \subseteq E$ is considered independent as long as $\sum_{e\in X}s_{e}x_{e}\leq1$.
The
\emph{knapsack polytope} is given by ${\cal P}\br{{\cal I}}=\setst{x\in\mathbb{R}_{\geq0}^{E}}{\sum_{e\in E}s_{e}x_{e}\leq1}$.

We shall call such a constraint system a \textbf{bounded knapsack}, if we
additionally have $s_{e}\in\brq{0,\frac{1}{2}}$.
As for matroids, we illustrate the controller mechanism with a contention resolution scheme.

\subsection{A controller mechanism for the bounded knapsack constraint}
\label{sec:knapsack-bounded}
\begin{thm}
\label{thm:knapsack-cr}
There exists a random-order CR scheme for a bounded knapsack
with $c=\frac{1}{3}$.
\end{thm}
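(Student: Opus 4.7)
The plan is to instantiate the controller-mechanism framework of Section~\ref{sec:controller} for the bounded knapsack constraint, in direct parallel with the matroid construction of Section~\ref{sec:matroid-cr}. First I would attach to each element $e\in E$ a single controller $u_e$ drawn independently and uniformly from the unit interval, as hinted in the introduction. I would declare $(e,u_e)$ blocked at step $t$ as soon as the cumulative size $\mathrm{cum}^t$ of the previously accepted elements satisfies $\mathrm{cum}^t+s_e>u_e$. When $e$ is picked in the random order and has not yet been blocked, it is added to the solution and $\mathrm{cum}^t$ is updated. Feasibility is then automatic: every acceptance guarantees $\mathrm{cum}^t+s_e\le u_e\le 1$, and the bounded hypothesis $s_e\le 1/2$ is what lets $u_e$ live in $[0,1]$ while still leaving room for the increment $s_e$ and, more importantly, keeps the relevant conditional distributions away from their degenerate regime.

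The core technical step will be to verify that the induced characteristic sequences of Section~\ref{sec:character} are $2$-bounded. Fix $e\in R(x)$ and condition on the filtration $\mathcal{F}^t$ together with the event $Y_e^t=1$; since the history only reveals the one-sided fact $u_e\ge \mathrm{cum}^t+s_e$, the posterior law of $u_e$ is uniform on $[\mathrm{cum}^t+s_e,\,1]$. A blocking event at step $t+1$ requires that some $f\ne e$ be picked (probability $\frac{1}{n-t}$), survive the sampling (at most $x_f$), clear its own controller, and push the cumulative size past $u_e-s_e$; the last conditional probability is at most $\frac{s_f}{1-\mathrm{cum}^t-s_e}$. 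Summing over $f$ and invoking the knapsack inequality $\sum_f x_fs_f\le 1$ yields a preliminary single-step bound of $\frac{1}{(n-t)(1-\mathrm{cum}^t-s_e)}$. The decisive step is then to cancel the factor $\frac{1}{1-\mathrm{cum}^t-s_e}$ against the posterior normalisation of $u_e$, which itself scales as $1-\mathrm{cum}^t-s_e$, and to use $s_e\le 1/2$ to confine the analysis to the regime in which this cancellation delivers the clean constant $\lambda=2$.

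With $2$-boundedness in hand, Lemma~\ref{lem:cr} immediately yields $\Pr[e\in S\mid e\in R(x)]\ge \frac{1}{1+2}=\frac{1}{3}$ for every $e\in E$, which is exactly the definition of a random-order CR scheme with $c=\frac{1}{3}$ and therefore proves Theorem~\ref{thm:knapsack-cr}. I expect the main obstacle to be the step-wise estimation of the blocking probability: a crude worst-case bound carries a factor $\frac{1}{1-\mathrm{cum}^t-s_e}$ that blows up as $\mathrm{cum}^t$ approaches $1-s_e$, so extracting the promised constant $\lambda=2$ requires a careful coupling between the random order, the posterior of $u_e$, and the bounded-size hypothesis, rather than a direct uniform bound. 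Once that cancellation is established, the submartingale identity of Lemma~\ref{lem:martingale} closes the argument in the same way as in the matroid case, and the rest of the proof consists only of verifying the formal $\mathcal{F}^t$-measurability conditions needed to invoke Doob's optional-stopping theorem.
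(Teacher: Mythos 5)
Your mechanism is not the paper's, and I do not believe the claimed $2$-boundedness can be salvaged for it. The paper's controller is a random point $C_e\in[0,1]$ and, crucially, an accepted element $f$ blocks $2s_f$ mass chosen \emph{uniformly at random from the still-available set} $I^t\subseteq[0,1]$; acceptance of $e$ only requires $C_e\in I^t$. This produces the clean identity: $f$ is itself unblocked with probability $\size{I^{t}}$, and conditionally on $f$'s acceptance the point $C_e$ falls in the newly blocked mass with probability $\min\!\br{\tfrac{2s_f}{\size{I^{t}}},1}$, and the two factors of $\size{I^{t}}$ cancel exactly, leaving $\sum_f x_f\cdot 2 s_f\le 2$ and hence $\lambda=2$. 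The factor $2$ on the blocked mass is also what makes feasibility work: as long as $\size{I^t}>0$ the total accepted size is at most $\tfrac12$, so $s_e\le\tfrac12$ can still be added. In your mechanism the controller $u_e$ is compared against the deterministic cumulative size $\mathrm{cum}^t$, and the quantities that would need to cancel are $1-\mathrm{cum}^t-s_f$ (the survival probability of the new element $f$) and $1-\mathrm{cum}^t-s_e$ (the posterior normaliser of $u_e$). These carry \emph{different} sizes $s_f$ versus $s_e$, and the ratio $\tfrac{1-\mathrm{cum}^t-s_f}{1-\mathrm{cum}^t-s_e}$ is unbounded when $s_f\ll s_e$ and $\mathrm{cum}^t\to 1-s_e$. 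Concretely, take $s_e=\tfrac12$, $\mathrm{cum}^t=\tfrac12-\delta$, and $1/\delta$ elements $f$ with $x_f=1$, $s_f=\delta$ (so $\sum_f x_f s_f = 1$ is tight); then your single-step bound is $\tfrac{1}{n-t}\sum_f x_f s_f\cdot\tfrac{1-\mathrm{cum}^t-s_f}{1-\mathrm{cum}^t-s_e}\approx\tfrac{1}{n-t}\cdot\tfrac{1}{2\delta}$, which dwarfs $\tfrac{2}{n-t}$ for small $\delta$. So the $\lambda$-bound fails pointwise, and Lemma~\ref{lem:martingale} cannot be invoked.

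There is also a framework mismatch you should be aware of. Section~\ref{sec:character} fixes $C_e$ and conditions on it throughout, so that $Y_e^t$ is $\mathcal{F}^t$-measurable and $\excond{S_e^{t+1}-S_e^t}{\mathcal{F}^t}=\tfrac{Y_e^t}{n-t}$ holds as an identity (Lemma~\ref{lem:deltas}). Your argument instead integrates out $u_e$ to speak of its ``posterior law''; but if $u_e\notin\mathcal{F}^t$ then $Y_e^t$ is not $\mathcal{F}^t$-measurable and the characteristic-sequence machinery does not apply as stated, whereas if you do put $u_e$ into $\mathcal{F}^t$ the posterior collapses to a point and the blocking probability becomes an indicator $\mathbf{1}[\,s_f>u_e-\mathrm{cum}^t-s_e\,]$, which again produces sums that are not uniformly bounded by a constant. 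In short: you need a mechanism in which the controller and the quantity that decays over time are probabilities over the \emph{same} auxiliary interval, not a cumulative size; that forced decoupling, together with the factor $2$ inflation of blocked mass, is precisely what the paper's construction supplies and your variant lacks.
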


\begin{algorithm}
\caption{\label{alg:randomselection-1}Random order contention resolution scheme
for a knapsack}

\begin{algorithmic}[1]

\STATE $S\leftarrow\emptyset$

\STATE for each element $e\in E$ choose randomly a point from interval
$I=[0,1]$; call it the controller of $e$, and denote it by
$C_{e}$

\STATE \textbf{for} each $e\in E$ in random order \textbf{do}

\STATE ~~~\textbf{if $e\notin R\br x$ then }

\STATE ~~~~~~continue

\STATE ~~~\textbf{if} $(e, C_{e})$ has not been blocked\textbf{ then}

\STATE ~~~~~~$S\leftarrow S \cup \{e\}$

\STATE ~~~\textbf{~~~}randomly choose $2\cdot s_{e}$
mass from available points of interval $I$, and block it

\RETURN $S$

\end{algorithmic} 
\end{algorithm}

\paragraph{Implementation}
The controller $C_e$ is given by a random point from $I = [0,1]$.
When element $e$ is taken into the solution, it blocks $2\cdot s_{e}$ random mass from the non-blocked subset of $I$, or blocks everything if the remaining mass is less then $2\cdot s_{e}$.

Some explanation is necessary for this blocking procedure as we cannot implement sampling a random subset over real numbers.
However, the only property that we require is that when we have mass $\ell$ of available points and we sample mass $s \le \ell$, then the probability of hitting any particular point is $\frac{s}{\ell}$.

We can implement such a sampling by fixing a mapping from the set of available points to a circle with circumference $\ell$, choosing a point $x$ at the circle uniformly at random, and blocking the interval of length~$s$ starting from $x$ clockwise.
If we use a natural mapping that glues intervals of available points, then the number of these intervals will stay proportional to $|E|$.

\paragraph{Correctness}
Let $I^t \subseteq I$ denote the set of available points
at the beginning of step $t$.
We argue that $S$ is always an independent set, i.e.,
$\sum_{e\in S} s_e \le 1$.
Accepting element $f$ leads to removal of $2\cdot s_f$ mass from $I^t$.
This means that as long as there are available points in $I$, i.e., $\size{I^t} > 0$, the solution satisfies $\sum_{e\in S} s_e \le \frac{1}{2}$.
And since $s_e \le \frac{1}{2}$, we can add $e$ to the solution without breaking the constraint.

\begin{lem}
\label{lem:knapsack-blocking-pr}
The characteristic sequences for the bounded knapsack constraint
are 2-bounded.
\end{lem}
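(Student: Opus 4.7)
The plan is to follow the template of Lemma~\ref{lem:matroid-blocking-pr}, case-splitting on $Y_e^t$. If $Y_e^t = 0$, then the fate of $e$ has been sealed before step $t+1$, so $Z_e^{t+1} = Z_e^t$ and both sides of the target inequality vanish.

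For the substantive case $Y_e^t = 1$, the controller $C_e$ still lies in the currently available set $I^t$ and $e$ has not been picked. A blocking event at step $t+1$ requires: (i) some $f \ne e$ is picked, with probability $\tfrac{1}{n-t}$; (ii) $f \in R(x)$, of marginal probability $x_f$; (iii) the pair $(f,C_f)$ is still unblocked, i.e., $C_f \in I^t$, so that $f$ is actually accepted; and (iv) the random block $B^{t+1}$ of mass at most $2s_f$ carved out of $I^t$ covers $C_e$. By the defining sampling guarantee and the fact that $C_e$ conditioned on $Y_e^t=1$ is uniform on $I^t$, the probability in (iv) equals $|B^{t+1}|/|I^t| \le 2s_f/|I^t|$.

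The crucial step is to choose the right filtration. I will work with $\mathcal{F}^t$ that records all of the algorithm's history (including the blocked set $I^t$ and the indicator of whether $C_e \in I^t$) but defers the revelation of $C_f$ for $f \ne e$ until $f$ is picked. Under this filtration $C_f$ stays uniform on $[0,1]$ independently of the rest, so $\Pr[C_f \in I^t \mid \mathcal{F}^t] = |I^t|$, which supplies factor (iii). The $|I^t|$ from (iii) cancels the $1/|I^t|$ from (iv), shrinking the per-$f$ contribution to $2 x_f s_f/(n-t)$. Summing over $f \ne e$ and invoking the knapsack bound $\sum_e x_e s_e \le 1$ yields
\[
\excond{Z_e^{t+1}-Z_e^t}{\mathcal{F}^t} \le \frac{1}{n-t}\sum_{f \ne e} 2 x_f s_f \le \frac{2}{n-t}.
\]

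The main obstacle I anticipate is precisely this $|I^t|$ bookkeeping. Under the natural filtration that reveals all $C_f$'s upfront, the indicator $\mathbf{1}[C_f \in I^t]$ replaces the probability $|I^t|$ in (iii), and a spurious factor $1/|I^t|$ survives that can blow up as $|I^t|$ shrinks. Deferring the revelation of $C_f$ until $f$ is picked is what makes the cancellation transparent, after which the argument becomes a single clean application of the knapsack constraint, parallel to how Lemma~\ref{lem:matroid:support} was used in the matroid case.
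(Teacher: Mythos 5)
Your proof is correct and follows essentially the same route as the paper's: case-split on $Y_e^t$, condition on $e$'s controller, defer the revelation of $C_f$ for $f\neq e$ until $f$ is picked so that $\Pr[C_f\in I^t]=|I^t|$, note that the random block of mass $\le 2s_f$ covers a fixed point of $I^t$ with probability at most $2s_f/|I^t|$, and close with the knapsack budget $\sum_f x_f s_f\le 1$. One small remark: the paper keeps $C_e$ \emph{fixed} throughout (it conditions on $C_e$), and the probability $2s_f/|I^t|$ comes solely from the randomness of the carved-out block, not from $C_e$ being uniform on $I^t$; your side-invocation of the latter is unnecessary and slightly at odds with the conditioning used in the rest of the framework, though it does not affect correctness here.
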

\begin{proof}
We choose an element $f\neq e$ with probability $\frac{1}{n-t}$ and it turns
out to exist in $R\br x$ with probability $x_f$.
In contrary to the matroid argument, we additionally take into account the probability that the controller assigned to $f$ does not get blocked.
This happens with probability at most $\size{I^{t}}$ because the controller $C_{f}$ has to belong to the leftover available set $I^{t}$.
Further, element $f$ causes removal of $2\cdot s_{f}$
mass from $I^{t}$. 
The probability that the point $C_{e}$ gets blocked is 
\[
\min\br{\frac{2\cdot s_{f}}{\size{I^{t}}},1}.
\]
Combining these arguments, we can estimate the probability of $(e,\, C_e)$ getting blocked as follows:
\begin{align*}
 & \frac{1}{n-t}\sum_{f\neq e}x_f\cdot\size{I^{t}}\cdot\min\br{\frac{2\cdot s_{f}}{\size{I^{t}}},1} \le \\
\leq & \frac{1}{n-t}\sum_{f\neq e}x_{f}\cdot\size{I^{t}}\cdot\frac{2\cdot s_{f}}{\size{I^{t}}} = \\
= & \frac{1}{n-t}\sum_{f\neq e}x_{f}\cdot2\cdot s_{f} \le \\
\leq & \frac{2}{n-t},
\end{align*}
where the last inequality follows from the definition of the knapsack polytope.
Hence, the probability of a blocking event for $(e, C_e)$ at step $t$ is at most $\frac{2\cdot Y_e^t}{n-t}$.
\end{proof}

Theorem \ref{thm:knapsack-cr} follows immediately from Lemmas~\ref{lem:cr} and~\ref{lem:knapsack-blocking-pr}.

\subsection{Reduction to the bounded case}

We consider now the general variant of the knapsack constraint
with $s_{e}\in\brq{0,1}$.
We divide the elements into $E_{big} = \{s_e > \frac{1}{2} \,|\, e\in E\}$
and $E_{small} = \{s_e \le \frac{1}{2} \,|\, e\in E\}$.
With probability $\frac{1}{2}$ we consider only big items and discard all small items,
and vice versa.
Whereas the controller mechanism for small items (i.e. the bounded knapsack) has been presented in Section~\ref{sec:knapsack-bounded}, the case with big items reduces to the uniform matroid $U_1$.

\begin{lem}
\label{lem:knapsack-reduction}
Let $\cal I$ be a knapsack constraint with only big items.
Then $U_1 \subseteq \cal I$ and $\frac{1}{2} \P{\cal I} \subseteq  \P{U_1}$.
\end{lem}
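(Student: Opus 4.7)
The plan is to verify both inclusions directly from the definitions. There is no real obstacle here; the lemma essentially says that when every item is ``big'' (size strictly exceeding $1/2$), one can pack at most one of them in a knapsack of capacity $1$, so the combinatorial structure degenerates to the rank-$1$ uniform matroid, and the factor-$1/2$ loss in the polytope inclusion is exactly what is needed to compensate for the gap between $s_e > 1/2$ and the trivial lower bound $s_e \ge 0$.

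First I would handle the set-family inclusion $U_1 \subseteq \cal I$. The matroid $U_1$ consists of $\emptyset$ and the singletons $\{e\}$ for $e \in E$. The empty set is trivially in $\cal I$. For a singleton, using the standing assumption $s_e \in [0,1]$ of the knapsack model, we have $\sum_{f \in \{e\}} s_f = s_e \le 1$, so $\{e\} \in \cal I$.

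Second, for the polytope inclusion $\tfrac{1}{2}\P{\cal I}\subseteq\P{U_1}$, I would take an arbitrary $x \in \P{\cal I}$, so that $x \ge 0$ and $\sum_{e \in E} s_e x_e \le 1$, and show that $\tfrac{1}{2}x \in \P{U_1}$, i.e.\ $\sum_{e \in E} \tfrac{1}{2} x_e \le 1$. The key observation is that since $s_e > \tfrac{1}{2}$ for every $e \in E$, we have
\[
\sum_{e \in E} \tfrac{1}{2} x_e \;=\; \sum_{e \in E} \tfrac{1}{2} \cdot x_e \;\le\; \sum_{e \in E} s_e \cdot x_e \;\le\; 1,
\]
where the first inequality is term-by-term from $\tfrac{1}{2} \le s_e$ and $x_e \ge 0$, and the second is the knapsack feasibility of $x$. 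Non-negativity of $\tfrac{1}{2}x$ is immediate. This completes the verification.
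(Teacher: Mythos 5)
Your argument is correct and matches the paper's proof: both inclusions are verified directly from the definitions, with the polytope inclusion resting on the term-by-term bound $\tfrac{1}{2}x_e \le s_e x_e$ that holds because every item is big. Nothing further to add.
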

\begin{proof}
The first claim is obvious as every singleton set is independent in $\cal I$.
The second one says that any vector from
${\cal P}\br{{\cal I}}=\setst{x\in\mathbb{R}_{\geq0}^{E}}{\sum_{e\in E}s_{e}x_{e}\leq1}$
sums to at most 1 after scaling by $\frac{1}{2}$, what is also straightforward as
\[
\sum_{e\in E} \frac{x_{e}}{2} \leq \sum_{e\in E}s_{e}x_{e} \leq1.
\]
\end{proof}

When the constraint system contains $q$ knapsacks,
we toss a coin independently $q$ times,
deciding for each knapsack whether we discard  its small or big items.
The probability that a given element is not discarded in the end clearly equals $1/2^q$.
However, we need a more careful argument to analyze expected revenue in submodular optimization.

\begin{lem}
\label{lem:knapsack-submodular}
Consider a non-negative submodular function $f$ over the ground set $E$ with $q$ partitions $E = E_1^i \uplus E_2^i$ for $i \in [q]$.
Let $\rho_1, \rho_2, \dots \rho_q$ be independent random variables equal to 0 or 1 with probability~$\frac 1 2$.
Then for every $A \subseteq E$ it holds
\[
\ex{f(A \cap E_{\rho_1}^1 \cap E_{\rho_2}^2 \cap \dots \cap E_{\rho_q}^q)} \ge \frac 1 {2^q} \cdot f(A).
\]
\end{lem}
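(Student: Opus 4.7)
The plan is to prove this by induction on $q$, with the base case $q=1$ being a direct application of submodularity and non-negativity of $f$.

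For the base case $q=1$, observe that $A \cap E_1^1$ and $A \cap E_2^1$ partition $A$, so their union is $A$ and their intersection is $\emptyset$. By submodularity,
\[
f(A \cap E_1^1) + f(A \cap E_2^1) \;\ge\; f(A) + f(\emptyset) \;\ge\; f(A),
\]
using non-negativity of $f$ in the last step. Since $\rho_1$ picks each side with probability $\frac{1}{2}$, we immediately get $\E_{\rho_1}[f(A\cap E_{\rho_1}^1)] \ge \frac{1}{2} f(A)$.

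For the inductive step, I would condition on the outer coins $\rho_1, \ldots, \rho_{q-1}$, set
\[
B \;=\; A \cap E_{\rho_1}^1 \cap \cdots \cap E_{\rho_{q-1}}^{q-1},
\]
and apply the base-case inequality to $B$ with respect to the partition $E = E_1^q \uplus E_2^q$, obtaining
\[
\E_{\rho_q}\bigl[f(B \cap E_{\rho_q}^q) \,\big|\, \rho_1,\ldots,\rho_{q-1}\bigr] \;\ge\; \tfrac{1}{2} f(B).
\]
Taking expectation over the remaining coins and invoking the induction hypothesis on $\E[f(B)] \ge \frac{1}{2^{q-1}} f(A)$ yields the claim.

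The one subtlety worth flagging is that the conditional base-case application still needs $f$ to be submodular and non-negative on the relevant sets, which is not an issue since these are inherited properties of $f$ restricted to any subfamily of subsets. Hence no obstacle really arises — the induction runs through cleanly and the main substance is the single inequality $f(A\cap E_1) + f(A\cap E_2) \ge f(A)$ coming from the submodular inequality combined with $f(\emptyset)\ge 0$.
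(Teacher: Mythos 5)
Your proof is correct and follows essentially the same approach as the paper: the single submodularity-plus-non-negativity inequality $f(A\cap E_1) + f(A\cap E_2) \ge f(A) + f(\emptyset) \ge f(A)$ gives the base case, and the iteration/induction over the $q$ independent coins finishes it. The paper simply states ``we iterate this argument, each time decreasing the bound by 2,'' which is exactly the induction you spell out.
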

\begin{proof}
From submodularity and non-negativity we have
\[
f(A \cap E_1^1) + f(A \cap E_2^1) \ge f\br{(A \cap E_1^1) \cup (A \cap E_2^1)} + f\br{(A \cap E_1^1) \cap (A \cap E_2^1)} = f(A) + f(\emptyset) \ge f(A),
\]
therefore $\ex{f(A \cap E_{\rho_1}^1)} \ge \frac{1}{2}\cdot f(A)$.
We iterate this argument, each time decreasing the bound by 2.
\end{proof}

\subsection{Results for knapsack and matroid constraints}

In this section we revisit the main results and briefly explain how to extend them to work with knapsack constraints.
For the sake of simplicity we do not optimize the probability of switching between small and big items and always set it to $\frac 1 2$.
Whereas some minor improvements in the approximation ratios are possible, the main message is that we can maintain the linear dependency on the number of matroids $k$ if the number of knapsack constraints is $O(1)$,
matching the results from~\cite{DBLP:conf/soda/FeldmanSZ16}. \marr{sprawdzic czy chekuriego tez nie matchujemy}

In the theorem below note that the decision whether an element gets discarded is made in advance depending on its sizes in knapsack constraints.
We can assume to know them in advance, before discovering the existence of the element, as this information is a part of the constraint structure.

\begin{thm}
\label{thm:combined-cr}
There exists a random-order CR scheme for intersection of $k$ matroids and $q$ knapsacks
with $c=\frac{1}{2^{q+1}} \cdot \frac 1 {k+2q+1}$.
\end{thm}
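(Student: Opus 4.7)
The plan is to build the CR scheme by combining, via Lemma~\ref{lem:deltas-multi}, a matroid controller mechanism per matroid constraint (Corollary~\ref{lem:matroid-lam}) with a knapsack controller mechanism per knapsack constraint. For each of the $q$ knapsack constraints I would independently flip a fair coin that decides whether to discard every \emph{small} item (retaining the big ones, which by Lemma~\ref{lem:knapsack-reduction} can be treated as a $U_1$ matroid after scaling the input by $1/2$) or discard every \emph{big} item (retaining the small ones, handled by the bounded-knapsack mechanism of Theorem~\ref{thm:knapsack-cr}). Since the big/small type of an element in a given knapsack is fixed by its size, an element $e$ escapes being discarded in all $q$ knapsacks exactly when all $q$ coins match its size profile, which happens with probability $1/2^q$.

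To make the $U_1$ reduction work uniformly I would fold the $1/2$ scaling from Lemma~\ref{lem:knapsack-reduction} into a single global $1/2$ rejection applied at the sampling stage. This simultaneously brings the effective sampling rate on the big side of every knapsack inside $\P{U_1}$ and costs only one extra factor $\frac{1}{2}$ in the guarantee, rather than a per-knapsack factor that would blow up as $1/2^{q}$ in the worst case.

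Conditioned on $e$ surviving both the $q$ coin flips and the global $1/2$ rejection, I would run the combined controller mechanism exactly as in Section~\ref{sec:matroid-multi-cr}. Each of the $k$ matroid controllers contributes a 1-bounded characteristic sequence, and each of the $q$ knapsack controllers contributes at most a 2-bounded one: in the big branch the controller is just a $U_1$ matroid and gives only $\lambda = 1$, whereas in the small branch Lemma~\ref{lem:knapsack-blocking-pr} gives $\lambda = 2$, so the latter is the uniform worst case. Lemma~\ref{lem:deltas-multi} then bounds the joint characteristic sequences of $e$ by $k + 2q$, and Lemma~\ref{lem:cr} yields a conditional acceptance probability of at least $1/(k+2q+1)$. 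Multiplying the three independent factors gives
\[
c \;\ge\; \frac{1}{2^{q}}\cdot\frac{1}{2}\cdot\frac{1}{k+2q+1} \;=\; \frac{1}{2^{q+1}(k+2q+1)}.
\]

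The main obstacle is, as indicated above, to absorb the $\frac{1}{2}$ scaling forced by the $U_1$ reduction without paying a factor $\frac{1}{2}$ for every knapsack whose coin lands on big. Handling this once, globally, and combining it with the conservative per-knapsack bound $\lambda \le 2$, is exactly what lets a single application of Lemma~\ref{lem:cr} over all $k+q$ controller families yield the claimed constant.
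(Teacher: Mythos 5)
Your proposal is correct and follows essentially the same route as the paper: independent per-knapsack coins deciding big versus small (a $1/2^q$ survival factor), a single global $1/2$ scaling to push the big-item side into $\P{U_1}$ via Lemma~\ref{lem:knapsack-reduction}, then combining $k$ matroid controllers ($\lambda=1$) with $q$ knapsack controllers (worst case $\lambda=2$) through Lemma~\ref{lem:deltas-multi} to get $(k+2q)$-bounded sequences, and finishing with Lemma~\ref{lem:cr}. The paper phrases the post-coin-flip situation as an intersection of $k+q'$ matroids and $q-q'$ bounded knapsacks with $\lambda = k+q'+2(q-q') \le k+2q$, which is just a more explicit version of your uniform bound $\lambda \le 2$ per knapsack.
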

\begin{proof}
We are given a vector $x$ from the polytope of the constraint system.
For each knapsack constraint we independently choose whether we consider only big or only small elements with probability $\frac 1 2$.
The knapsack constraints in which we consider only small items become bounded knapsacks, described in Section~\ref{sec:knapsack-bounded}.
The other knapsack constraints are replaced with matroid $U_1$.
For the discarded elements we set $x'_e = 0$
and for the rest $x'_e = \frac{x_e}{2}$.
Lemma~\ref{lem:knapsack-reduction} guarantees that $x'$ belongs to the polytope of the new constraint system.

We have managed to reduce the general case to an intersection of $k + q'$ matroids and $q - q'$ bounded knapsacks for $0 \le q' \le q$.
The controller mechanism for those have been described in Sections~\ref{sec:matroid-cr},~\ref{sec:knapsack-bounded}
and we combine them with Lemma~\ref{lem:deltas-multi} obtaining $\lam$-bounded characteristic sequences with $\lam = k + q' + 2(q- q') \le k + 2q$.
Lemma~\ref{lem:cr} says that the constructed CR-scheme accepts each element with probability at least $\frac {x'_e}{k+2q+1}$.
Since $\ex{x'_e} = \frac{x_e}{2^{q+1}}$, this finishes the construction.
\end{proof}

\begin{thm}
Bayesian multi-parameter unit-demand mechanism design (BMUMD) over an intersection of $k$ matroids and $q$ knapsacks admits a
$2^{q+1} \cdot (k+2q+4+\eps)$ approximation algorithm.
\end{thm}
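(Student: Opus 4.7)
The plan is to merge the BMUMD algorithm of Theorem~\ref{thm:bmumd-main} with the knapsack-to-matroid reduction developed for Theorem~\ref{thm:combined-cr}. First I would augment \textsc{Bmumd-LP} by adding, for each knapsack with sizes $(s_c)$, the constraint $\sum_c s_c\sum_p x_{c,p}\pr{v_c\ge p}\le 1$. The argument of Lemma~\ref{lem:chawla} carries over, so the optimum of the augmented LP still upper-bounds OPT in the multi-parameter auction. Preprocessing proceeds as in Theorem~\ref{thm:combined-cr}: for each of the $q$ knapsacks, independently toss a fair coin and commit to keeping only its small items ($s_c\le 1/2$) or only its big items. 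Set $x'_{c,p}=0$ for every item discarded by any knapsack, and $x'_{c,p}=x_{c,p}/2$ otherwise. By Lemma~\ref{lem:knapsack-reduction}, the vector $(\sum_p x'_{c,p}\pr{v_c\ge p})_c$ lies in the polytope of the reduced constraint system, which consists of the $k$ original matroids, one uniform matroid $U_1$ per knapsack whose big half was kept, and one bounded knapsack per knapsack whose small half was kept.

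Next I would run the auction of Algorithm~\ref{alg:abstract-auction} on $x'$, equipping each item with $k+q$ controllers: matroid controllers are updated as in Algorithm~\ref{alg:matroid-auction}, while bounded knapsack controllers are random points in $[0,1]$ that consume $2s_c$ mass when item $c$ is served, exactly as in Section~\ref{sec:knapsack-bounded}. Feeding $x'$ to the single-client subroutine of Lemma~\ref{lem:one-client-perfect}, item $c$ wins the client's menu at price $p$ with probability in $\bigl[\tfrac{1}{4}x'_{c,p}\pr{v_c\ge p},\;(\tfrac{1}{4}+\eps)x'_{c,p}\pr{v_c\ge p}\bigr]$. The per-matroid blocking bound is $(\tfrac{1}{4}+\eps)/(n-t)$, by the same computation as in Lemma~\ref{lem:matroid-auction-pr} (the scaling of $x'$ cancels between the menu-pick probability and the conditional controller choice). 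For a bounded knapsack, retracing Lemma~\ref{lem:knapsack-blocking-pr} in the auction context gives
\[
\tfrac{1}{n-t}\sum_{f\neq e}\sum_{p}\bigl(\tfrac{1}{4}+\eps\bigr) x'_{f,p}\pr{v_f\ge p}\cdot 2 s_f \;\le\; \tfrac{2(\frac{1}{4}+\eps)}{n-t},
\]
using the knapsack constraint of the augmented LP.

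Combining the $k+q$ constraints via Lemma~\ref{lem:deltas-multi} yields joint $\lambda$-bounded characteristic sequences with $\lambda\le (\tfrac{1}{4}+\eps)(k+2q)$. Lemma~\ref{lem:cr} then guarantees that a non-discarded item reaches its client's menu unblocked with probability at least $1/(1+\lambda)$. Multiplying by the lower bound $\tfrac{1}{4}x'_{c,p}\pr{v_c\ge p}=\tfrac{1}{8}x_{c,p}\pr{v_c\ge p}$ on winning the menu, by the $1/2^q$ probability that $c$ survives the coin tosses, and rescaling $\eps$ to absorb the $4\eps(k+2q)$ term yields
\[
\pr{c\text{ served at price }p}\;\ge\;\frac{x_{c,p}\pr{v_c\ge p}}{2^{q+1}\bigl(k+2q+4+\eps\bigr)},
\]
and summing over $c,p$ gives the claimed approximation ratio against the augmented LP optimum.

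The main obstacle is verifying that the $1/2$ scaling of menu variables interacts correctly with both the single-client routine and the two blocking analyses. On the acceptance side, Lemma~\ref{lem:one-client-perfect} depends only on the menu-vector being feasible (which is preserved, with slack, by scaling), and the lower bound is linear in the input. On the blocking side, the matroid computation is scale-invariant while the knapsack computation becomes tighter under scaling, so the bound $k+2q$ inherited from Theorem~\ref{thm:combined-cr} is conservative but sufficient; everything else is routine algebra matching Theorem~\ref{thm:combined-cr} lifted into the $(\tfrac{1}{4}+\eps)$-scaled BMUMD regime of Theorem~\ref{thm:bmumd-main}.
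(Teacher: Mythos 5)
Your proposal is correct and follows the same route the paper takes; the paper's own proof is the single line ``inject the reduction from Theorem~\ref{thm:combined-cr} into the proof of Theorem~\ref{thm:bmumd-main},'' and your write-up is a faithful, fleshed-out version of exactly that. The accounting matches: the coin tosses give a $1/2^q$ survival factor, the halving gives an additional factor of $1/2$ on $x'_{c,p}$, the matroid blocking is $(\tfrac14+\eps)$-bounded and scale-invariant, the bounded-knapsack blocking inherits a $2(\tfrac14+\eps)$ bound (indeed conservatively, since as you note the halved $x'$ actually makes it $(\tfrac14+\eps)$-bounded, matching a matroid), and combining via Lemmas~\ref{lem:deltas-multi} and~\ref{lem:cr} with the menu-acceptance lower bound $\tfrac14 x'_{c,p}\pr{v_c\ge p}$ gives $2^{q+1}(k+2q+4+\eps)$ after rescaling $\eps$. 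One minor remark: the knapsack constraints you ``add'' to \textsc{Bmumd-LP} are already implied by the constraint $\br{\sum_p x_{c,p}\pr{v_c\ge p}}_c\in\mathcal{P}$ once $\mathcal{P}$ is understood to be the polytope of the full constraint system including knapsacks, so the LP needs no genuine augmentation and Lemma~\ref{lem:chawla} applies directly.
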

\begin{proof}
We inject reduction from
Theorem~\ref{thm:combined-cr} to the proof of Theorem~\ref{thm:bmumd-main}.
\end{proof}

\begin{thm}
\label{thm:combined-submodular}
Maximization of a non-negative submodular function $f$ over an intersection of $k$ matroids and $q$ knapsacks admits a
$2^{q+1} \cdot (k+2q+1) \cdot (\sqrt{e}+\eps)$ approximation algorithm.
\end{thm}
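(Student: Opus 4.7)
The plan is to bolt the submartingale argument from Section~\ref{sec:submodular} onto the reduction underlying Theorem~\ref{thm:combined-cr}, while tuning the measured continuous greedy parameter so as to absorb the $1/2$ scaling that the knapsack reduction would otherwise force on $x$. Concretely, I would first invoke Lemma~\ref{lem:measured-greedy} with $b = 1/2$ on the full constraint polytope $P = \P{{\cal M}_1}\cap\dots\cap\P{{\cal M}_k}\cap\P{K_1}\cap\dots\cap\P{K_q}$ to obtain a point $x \in \tfrac{1}{2} P$ with
\[
F(x) \;\ge\; \br{\tfrac{1}{2}\,e^{-1/2} - \eps}\cdot f^{+}(x^{+}) \;=\; \br{\tfrac{1}{2\sqrt{e}} - \eps}\cdot f^{+}(x^{+}),
\]
where $x^{+} = \mathrm{argmax}_{y \in P} f^{+}(y)$. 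Since $\chr{X_{OPT}} \in P$ and plugging $\alpha_{X_{OPT}} = 1$ into the definition of $f^+$ at $\chr{X_{OPT}}$ gives $f^{+}(\chr{X_{OPT}}) \ge f(X_{OPT})$, the quantity $f^{+}(x^{+})$ upper-bounds $f(X_{OPT})$.

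Next, as in Theorem~\ref{thm:combined-cr}, I would draw independent fair coins $\rho_1,\dots,\rho_q$ deciding, for each knapsack $K_i$, whether to keep only small or only big items, and set $E' = E_{\rho_1}^{1} \cap \dots \cap E_{\rho_q}^{q}$. The key point is that $x|_{E'}$ is \emph{already} feasible in the reduced polytope without any further scaling: for each bounded knapsack $\sum_e s_e x_e \le 1/2 \le 1$ holds thanks to $x \in \tfrac12 P$, and for each big-item knapsack Lemma~\ref{lem:knapsack-reduction} gives $x|_{E'} \in \P{U_1}$ directly. On $x|_{E'}$ I would run the combined controller mechanism from Sections~\ref{sec:matroid-cr} and~\ref{sec:knapsack-bounded} with the submodular postprocessing from Section~\ref{sec:submodular} (accept an element only if it strictly increases $f$). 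By Corollary~\ref{lem:matroid-lam}, Lemma~\ref{lem:knapsack-blocking-pr} and Lemma~\ref{lem:deltas-multi}, the joint characteristic sequences are $\lam$-bounded with $\lam \le k + 2q$, so Lemma~\ref{lem:submodular-abstract}, conditioned on $\rho$, yields
\[
\excond{f(X)}{\rho} \;\ge\; \frac{1}{k+2q+1}\cdot F(x|_{E'}).
\]

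It remains to control $\ex{F(x|_{E'})}$. Writing $F(x|_{E'}) = \ex{f(R(x) \cap E')}$ for an independent sample $R(x)$ and applying Lemma~\ref{lem:knapsack-submodular} pointwise to the fixed set $A = R(x)$ gives $\excond{f(R(x) \cap E')}{R(x)} \ge 2^{-q}\,f(R(x))$; averaging over $R(x)$ then yields $\ex{F(x|_{E'})} \ge 2^{-q}\, F(x)$. Chaining the three bounds,
\[
\ex{f(X)} \;\ge\; \frac{F(x)}{2^{q}\cdot(k+2q+1)} \;\ge\; \frac{f(X_{OPT})}{2^{q+1}\cdot(k+2q+1)\cdot\sqrt{e}} \;-\; O(\eps),
\]
and a final rescaling of $\eps$ yields the claimed $2^{q+1}(k+2q+1)(\sqrt{e} + \eps)$ ratio. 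The only genuine design choice, and the reason $\sqrt{e}$ appears rather than $e$, is the selection $b = 1/2$: taking $b = 1$ would give the better $1/e$ guarantee of measured greedy, but would leave $x \in P$ and thus require an additional scaling of $x$ by $1/2$ to enter the reduced polytope -- a loss that cannot be cheaply recovered on a non-linear submodular objective. Picking $b = 1/2$ exactly cancels that scaling at the price of degrading $b e^{-b}$ from $1/e$ to $1/(2\sqrt{e})$, and this trade-off is precisely what produces the stated bound.
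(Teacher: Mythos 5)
Your proof is correct and takes essentially the same approach as the paper: you also set $b=1/2$ in measured continuous greedy to absorb the knapsack scaling, obtain $\lambda\le k+2q$ for the joint characteristic sequences, apply Lemma~\ref{lem:submodular-abstract}, and charge the factor $2^q$ to Lemma~\ref{lem:knapsack-submodular}. The only presentational difference is that you condition on the random partition $\rho$ before running the controller mechanism and push the $2^q$ loss into $\ex{F(x\cdot\chr{E'})}\ge 2^{-q}F(x)$ via an exchange of expectations, whereas the paper runs the simulation without revealing the discarded set $D$ (so its intermediate output $X$ is independent of $D$) and applies Lemma~\ref{lem:knapsack-submodular} to $X$ at the very end; both orderings are valid and yield the same bound.
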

\begin{proof}
Let $\cal P$ be the polytope induced by the constraint system.
We execute the measured continuous greedy algorithm\marr{sprawdz continues czy sie nie pojawia jeszcze gdzie indziej} (\cite{DBLP:conf/focs/FeldmanNS11}, also see Lemma~\ref{lem:measured-greedy})
with $b=\frac 1 2$.
The returned vector $x$ satisfies 1) $x \in \frac{1}{2} \cal P$, and 2) $F(x) \ge \frac{1}{2\sqrt{e} + \eps} \cdot f(OPT)$.
Then we the apply the reduction from
Theorem~\ref{thm:combined-cr}
and discard a random subset of elements
(note that this time we do not need to scale $x$ by $\frac 1 2$ because we have guaranteed that $x \in \frac{1}{2} \cal P$).
For the sake of analysis we do not reveal the set of discarded items $D$, and we simulate the routine from Lemma~\ref{lem:submodular-abstract}
without any constraints on the discarded items.
In this setting, the returned solution $X$ satisfies $\ex{f(X)} \ge \frac{1}{k+2q+1}\cdot F(x)$.
Since the true solution is given by $X \setminus D$ and for all $A \subseteq E$ with random choice of $D$ we have $\ex{f(A \setminus D)} \ge \frac{1}{2^q}\cdot f(A)$ (Lemma~\ref{lem:knapsack-submodular}), we conclude that $\ex{f(X\setminus D)} \ge \frac{1}{2^q} \cdot \frac{1}{k+2q+1} \cdot F(x)$.
\end{proof}

\begin{thm}
Non-negative submodular stochastic probing over an intersection of $k$ matroids and $q$ knapsacks admits a
$2^{q+1} \cdot (k+2q+1) \cdot (\sqrt{e}+\eps)$ approximation algorithm.
\end{thm}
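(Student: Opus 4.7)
The plan is to mirror the proof of Theorem~\ref{thm:combined-submodular} within the stochastic probing framework of Section~\ref{sec:probing}. First I would extend the polytope of \textsc{Probing-MP} so that both $\Iin$ and $\Iout$ are intersections of matroids \emph{and} knapsacks; the proof of Lemma~\ref{lem:math-programming-bound} only uses that the relevant polytopes contain indicator vectors of feasible sets, so $f^+(x^+ \cdot p)$ still upper bounds the expected revenue of any probing strategy. I would then invoke the measured continuous greedy algorithm (Lemma~\ref{lem:measured-greedy}) with $b = \tfrac{1}{2}$, after the substitution $y = x \cdot p$ exactly as in Corollary~\ref{cor:measured-greedy}, to obtain a vector $x$ such that $p \cdot x \in \tfrac{1}{2}\cdot\P{\Iin,\Iout}$ and $F(p \cdot x) \ge \bigl(\tfrac{1}{2\sqrt e} - \eps\bigr) \cdot f^+(x^+ \cdot p)$.

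Next I would perform the knapsack reduction from Theorem~\ref{thm:combined-cr}: independently for every knapsack (inner or outer) flip a fair coin, deciding whether to discard all small or all big elements of that knapsack; in the former case the constraint becomes a copy of $U_1$, in the latter case a bounded knapsack amenable to Section~\ref{sec:knapsack-bounded}. Setting $x_e = 0$ on the discarded elements preserves feasibility in the reduced system because $p \cdot x$ already lies in $\tfrac{1}{2}\cdot\P{\Iin,\Iout}$ and a knapsack can be replaced by a $\tfrac{1}{2}$-scaled $U_1$ on big items (Lemma~\ref{lem:knapsack-reduction}). I would then execute Algorithm~\ref{alg:probing-easy} on the reduced instance, plugging in the bounded-knapsack controller from Section~\ref{sec:knapsack-bounded} for each remaining knapsack and modifying line~\ref{algline:delta} so that an element is taken into $S^t$ only when it strictly increases $f$, to accommodate non-monotonicity. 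By Lemmas~\ref{lem:probing-easy},~\ref{lem:knapsack-blocking-pr} and~\ref{lem:deltas-multi}, the joint characteristic sequences are $(k+2q)$-bounded: each matroid (inner or outer) contributes $1$ and each bounded knapsack contributes $2$, while each $U_1$ replacement contributes only $1$, giving a worst-case total of $k+2q$.

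To conclude, I would reuse the coupling trick from Theorem~\ref{thm:combined-submodular}: the random discard set $D$ is not revealed to the abstract analysis, so Lemma~\ref{lem:submodular-abstract} applied with no constraints on $D$ produces a random set $X$ of successfully probed elements with $\ex{f(X)} \ge \tfrac{1}{k+2q+1}\cdot F(p \cdot x)$, and the actual algorithmic output $X \setminus D$ satisfies $\ex{f(X \setminus D)} \ge \tfrac{1}{2^q}\cdot\ex{f(X)}$ by Lemma~\ref{lem:knapsack-submodular}. Chaining the three losses and rescaling $\eps$ produces the claimed $2^{q+1} \cdot (k+2q+1) \cdot (\sqrt e + \eps)$ ratio. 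The main obstacle I anticipate is verifying that the non-monotone modification of Algorithm~\ref{alg:probing-easy} is still captured by the abstract sampling scheme of Lemma~\ref{lem:submodular-abstract}: filtering acceptances by the condition $f(S^t \cup \{e\}) > f(S^t)$ alters which elements enter $X^t$ but does not change which elements are probed or which controllers get updated, so the bound $\excond{Z_e^{t+1} - Z_e^t}{\mathcal{F}^t} \le \tfrac{\lambda \cdot Y_e^t}{n-t}$ is preserved and the submartingale $\bigl((\lambda+1) f(X^t) - f(S^t \cup Z^t)\bigr)_t$ is unaffected, so Doob's theorem applies verbatim.
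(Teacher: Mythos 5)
Your proposal is correct and follows essentially the same route as the paper's proof: the paper invokes Lemma~\ref{lem:math-programming-bound} for the $f^+$ upper bound, runs the measured continuous greedy (Lemma~\ref{lem:measured-greedy}) with $b=\tfrac12$ to land in $\tfrac12\cal P$ while losing a $2\sqrt e+\eps$ factor, applies the knapsack reduction of Theorem~\ref{thm:combined-cr}, and then defers the remaining losses ($k+2q+1$ from the controllers, $2^q$ from the discard set $D$) to the arguments of Theorem~\ref{thm:probing-main} and Theorem~\ref{thm:combined-submodular} — the same chain you chain explicitly. You are simply unpacking the deferred references, including the observation that the non-monotone filtering in line~\ref{algline:delta} does not affect which controllers get updated, which is exactly why the $\lambda$-boundedness and Lemma~\ref{lem:submodular-abstract} carry over unchanged. (One tiny slip: the measured greedy with $b=\tfrac12$ yields $x\in\tfrac12\P{\Iin,\Iout}$, not ``$p\cdot x\in\tfrac12\P{\Iin,\Iout}$''; the latter polytope is defined on the variable $x$, and what you actually get after the substitution $y=p\cdot x$ is $x\in\tfrac12\P{\Iout}$ and $p\cdot x\in\tfrac12\P{\Iin}$, which is what is needed.)
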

\begin{proof}
Let $\cal P$ be the polytope induced by the constraint system.
As in Section~\ref{sec:probing} we rely
on the upper bound from Lemma~\ref{lem:math-programming-bound}, i.e.,
$B = \max_{x\in\P{\Iin,\Iout}} f^{+}\br{x}$.
We  execute the measured continuous greedy algorithm with a stronger bound (Lemma~\ref{lem:measured-greedy})
and $b=\frac 1 2$.
The returned vector $x$ satisfies 1) $x \in \frac{1}{2} \cal P$, and 2) $F(x) \ge \frac{1}{2\sqrt{e} + \eps} \cdot B$.
We apply the reduction from
Theorem~\ref{thm:combined-cr} and then proceed
as in the proof of Theorem~\ref{thm:probing-main}.
The lower bound for the expected value of the objective function is the same as in Theorem~\ref{thm:combined-submodular}.
\end{proof}

\bibliographystyle{plain}
\bibliography{rocrscheme}

\end{document}